\documentclass[letterpaper,11pt]{article}

	\usepackage{a4,geometry}
\usepackage{graphicx}
\usepackage{amsmath,amssymb,amsthm,mathtools}
\usepackage{paralist}
\usepackage{bm}
\usepackage{bbm}
\usepackage{xspace}
\usepackage{url}
\usepackage{fullpage, prettyref}
\usepackage{boxedminipage}
\usepackage{wrapfig}
\usepackage{ifthen}
\usepackage{color}
\usepackage{xcolor}
\usepackage{framed}
\usepackage{algorithm}
\usepackage[pagebackref,letterpaper=true,colorlinks=true,pdfpagemode=none,urlcolor=blue,linkcolor=blue,citecolor=violet,pdfstartview=FitH]{hyperref}
\usepackage[nameinlink]{cleveref}
\usepackage{fullpage}
\usepackage{esvect}
\usepackage{mdframed}
\usepackage{thmtools}
\usepackage{thm-restate}
\usepackage{algpseudocode}

\usepackage{tikz} 
\usetikzlibrary{positioning}
\usetikzlibrary{arrows, decorations.pathmorphing}
\usetikzlibrary{decorations.pathreplacing}
\usepackage{standalone}

\usepackage{geometry}
\usepackage[pagebackref,letterpaper=true,colorlinks=true,pdfpagemode=none,urlcolor=blue,linkcolor=blue,citecolor=violet,pdfstartview=FitH]{hyperref}
\geometry{margin=1in}

\newcommand{\Hom}[2]{\mathrm{Hom}_{#2}(#1)}
\newcommand{\Sub}[2]{\mathrm{Sub}_{#2}(#1)}

\newcommand{\licl}{LICL}
\newcommand{\LICL}{LICL}

\newcommand{\UR}{UR}
\newcommand{\Reachable}{Reach}
\newcommand{\Unique}{U}

\newcommand{\Area}{A}
\newcommand{\Shared}{C}
\newcommand{\ReducedV}{\cV_{{\HubSet}'}}

\newcommand{\Frat}[2]{\Sigma(#1,#2)}
\newcommand{\MinFrat}[2]{{\vec{#1}}^{(#2)}}
\newcommand{\ExpG}{F}
\newcommand{\LabeledH}{H^{L}}
\newcommand{\Spasm}{Spasm}
\newcommand{\down}{down}
\newcommand{\ext}{ext}
\newcommand{\Gt}{G_t}
\newcommand{\Gtt}{G_{t'}}
\newcommand{\degenOrientation}{\vec{G}^\degen}
\newcommand{\optimalOrientation}{\vec{G}^*}
\newcommand{\HubSet}{\cS}
\newcommand{\Agg}{AGG}

\newcommand{\Extension}{Extension}
\newcommand{\Homomorphisms}{Homomorphisms}
\newcommand{\ComputeMinFrat}{CompOptimalExtension}
\newcommand{\ComputeFrat}{ComputeExtensions}

\newcommand{\SparsityAuthors}{Ne\v{s}et\v{r}il and Ossona de Mendez}

\newcommand{\degen}{\kappa}
\newcommand{\grad}{\nabla}
\newcommand{\topgrad}{\tilde{\nabla}}
\newcommand{\tw}{t}
\newcommand{\dtw}{\tau}
\newcommand{\htw}{\tau}
\newcommand{\dagtree}{DAG-tree decomposition}
\newcommand{\dagtreewidth}{DAG-treewidth}
\newcommand{\hubtree}{hub-tree decomposition}
\newcommand{\hubtreewidth}{hub-treewidth}
\newcommand{\TRICONJ}{Triangle Detection Conjecture}
\newcommand{\minor}[2]{#1\; \triangledown\; #2}
\newcommand{\topminor}[2]{#1\; \tilde{\triangledown}\; #2}

\newcommand{\ignore}[1]{}



\newcommand{\cB}{\mathcal{B}}
\newcommand{\cC}{{\cal C}}

\newcommand{\cE}{{\cal E}}

\newcommand{\cG}{\mathcal{G}}
\newcommand{\cH}{{\cal H}}

\newcommand{\cP}{\mathcal{P}}

\newcommand{\cS}{\mathcal{S}}
\newcommand{\cT}{{\cal T}}

\newcommand{\cV}{{\cal V}}

\newcommand{\Z}{{\mathbb Z}}

\newcommand{\NN}{\mathbb{N}}


\declaretheorem[name=Theorem, numberwithin=section]{theorem}
\declaretheorem[name=Lemma, numberlike=theorem]{lemma}
\declaretheorem[name=Claim, numberlike=theorem]{claim}
\declaretheorem[name=Corollary, numberlike=theorem]{corollary}

\declaretheorem[name=Fact, numberlike=theorem]{fact}
\declaretheorem[name=Proposition, numberlike=theorem]{proposition}

\declaretheorem[name=Definition, numberlike=theorem]{definition}

\newcommand{\Sec}[1]{\S \ref{sec:#1}} 
\newcommand{\Eqn}[1]{\hyperref[eq:#1]{(\ref*{eq:#1})}} 
\newcommand{\Fig}[1]{{Fig.\,\ref{fig:#1}}} 
\newcommand{\Tab}[1]{\hyperref[tab:#1]{Tab.\,\ref*{tab:#1}}} 
\newcommand{\Thm}[1]{\hyperref[thm:#1]{Theorem\,\ref*{thm:#1}}} 
\newcommand{\Fact}[1]{\hyperref[fact:#1]{Fact\,\ref*{fact:#1}}} 
\newcommand{\Lem}[1]{\hyperref[lem:#1]{Lemma\,\ref*{lem:#1}}} 
\newcommand{\Prop}[1]{\hyperref[prop:#1]{Prop.~\ref*{prop:#1}}} 
\newcommand{\Cor}[1]{\hyperref[cor:#1]{Corollary~\ref*{cor:#1}}} 
\newcommand{\Conj}[1]{\hyperref[conj:#1]{Conjecture~\ref*{conj:#1}}} 
\newcommand{\Def}[1]{\hyperref[def:#1]{Definition~\ref*{def:#1}}} 
\newcommand{\Alg}[1]{\hyperref[alg:#1]{Alg.~\ref*{alg:#1}}} 
\newcommand{\Clm}[1]{\hyperref[clm:#1]{Claim~\ref*{clm:#1}}} 
\newcommand{\Obs}[1]{\hyperref[obs:#1]{Observation~\ref*{obs:#1}}} 
\newcommand{\Rem}[1]{\hyperref[rem:#1]{Remark~\ref*{rem:#1}}} 
\newcommand{\Con}[1]{\hyperref[con:#1]{Construction~\ref*{con:#1}}} 
\newcommand{\Step}[1]{\hyperref[step:#1]{Step~\ref*{step:#1}}} 
\newcommand{\Assumption}[1]{\hyperref[assm:#1]{Assumption\,\ref*{assm:#1}}} 


\newcommand{\nesetril}{Ne\v{s}et\v{r}il}

\title{A Dichotomy Hierarchy Characterizing
Linear Time Subgraph Counting in Bounded Degeneracy Graphs}
\author{Daniel Paul-Pena\\
University of California, Santa Cruz\\
{\tt dpaulpen@ucsc.edu}
\and
C. Seshadhri\thanks{All the authors are supported by 
NSF CCF-1740850, DMS-2023495, and CCF-1839317.}\\
University of California, Santa Cruz\\
{\tt sesh@ucsc.edu}
}

\date{}

\begin{document}
\maketitle

\begin{abstract}
Subgraph and homomorphism counting are fundamental algorithmic problems. Given a constant-sized
pattern graph $H$ and a large input graph $G$, we wish to count the number of $H$-homomorphisms/subgraphs
in $G$. Given the massive sizes of real-world graphs and the practical importance of counting problems, we focus on
when (near) linear time algorithms are possible. The seminal work of Chiba-Nishizeki (SICOMP 1985)
shows that for bounded degeneracy graphs $G$, clique and $4$-cycle counting can be done linear time.
Recent works (Bera et al, SODA 2021, JACM 2022) show a dichotomy theorem characterizing the patterns $H$
for which $H$-homomorphism counting is possible in linear time, for bounded degeneracy inputs $G$.
At the other end, \nesetril{} and Ossona de Mendez used their deep theory of ``sparsity" to define
bounded expansion graphs (which contains all minor-closed families). They prove that, for \emph{all}
$H$, $H$-homomorphism counting can be done in linear time for bounded expansion inputs.
What lies between? For a specific $H$, can we characterize input classes where $H$-homomorphism counting
is possible in linear time?

We discover a hierarchy of dichotomy theorems that precisely answer the above questions.
We show the existence of an infinite sequence of graph classes $\cG_0 \supseteq \cG_1 \supseteq \ldots \supseteq \cG_\infty$
where $\cG_0$ is the class of bounded degeneracy graphs, and $\cG_\infty$ is the class of bounded expansion graphs.
Fix any constant sized pattern graph $H$. Let $\licl(H)$ denote the length of the longest induced cycle
in $H$. We prove the following. If $\licl(H) < 3(r+2)$, then $H$-homomorphisms can be counted in linear time
for inputs in $\cG_r$. If $\licl(H) \geq 3(r+2)$, then (assuming fine-grained complexity conjectures) $H$-homomorphism
counting on inputs from $\cG_r$ takes $\Omega(m^{1+\gamma})$ time. (Here, $m$ denotes the number of input edges,
and $\gamma$ is some explicit constant.) We prove similar dichotomy theorems
for subgraph counting.
\end{abstract}

\newpage

\section{Introduction} \label{sec:intro}

Counting the number of small patterns in a large input graph is a central algorithmic technique
and widely used in both theory and practice~\cite{lovasz1967operations,chiba1985arboricity,flum2004parameterized,dalmau2004complexity,lovasz2012large,AhNe+15,curticapean2017homomorphisms,PiSeVi17,SeTi19,roth2020counting}.
We express this problem as \emph{homomorphism} or \emph{subgraph} counting. 
The pattern is a simple, typically constant sized graph
$H = (V(H), E(H))$. The input simple graph is denoted by $G = (V(G), E(G))$. An $H$-homomorphism is a map $f:V(H) \to V(G)$ that preserves edges.
So, $\forall (u,v) \in E(H)$, $(f(u), f(v)) \in E(G)$. If $f$ is an injection (so distinct vertices of $H$ are mapped
to distinct vertices of $G$), this map is a subgraph.
We use $\Hom{G}{H}$ (resp. $\Sub{G}{H}$) to denote the count of the distinct $H$-homomorphisms (resp. $H$-subgraphs). 

Homomorphism and subgraph counting have applications in
logic, graph theory, partition functions in statistical physics, database theory, and network science~\cite{chandra1977optimal,brightwell1999graph,dyer2000complexity,borgs2006counting,PiSeVi17,dell2019counting,pashanasangi2020efficiently}. 
The topic of computing $\Hom{G}{H}$ is itself a subfield of graph algorithms~\cite{itai1978finding,alon1997finding,brightwell1999graph,dyer2000complexity,diaz2002counting,dalmau2004complexity,borgs2006counting,curticapean2017homomorphisms,bressan2019faster,roth2020counting}. 
The simplest non-trivial case is when $H$ is a triangle, which has itself led to numerous papers.

When $H$ is part of the input size, the problem is exactly counting subgraph isomorphisms, which is $\mathbb{NP}$-hard.
In many applications, the pattern is small and fixed.
Let $n = |V(G)|$ and $k = |V(H)|$. Even when $H$ is a $k$-clique,
the problem of computing $\Hom{G}{H}$ is $\#W[1]$-hard when parameterized by $k$~\cite{dalmau2004complexity}.
So we do not expect $n^{o(k)}$ algorithms in general. 
Nonetheless, the $n^k$ barrier can be beaten for specific $H$. 
The breakthrough result of Curticapean-Dell-Marx proved that if $H$ has treewidth at most 2, then $\Hom{G}{H}$ can be computed in $\text{poly}(k) \cdot n^\omega$ time, where $\omega$ is the matrix multiplication constant~\cite{curticapean2017homomorphisms}. 
Their result also showed that algorithms and lower bounds easily translate between homomorphism counting to subgraph counting.
In the following discussion, we only refer to $\Hom{G}{H}$. But all our questions and answers, with suitable modification, apply to $\Sub{G}{H}$ as well.

Homomorphism and subgraph counting have wide applications in network science, and there is a large study of
practical algorithms for this problem (refer to tutorial~\cite{SeTi19}). In practice, 
(near) linear time is likely a better mathematical abstraction for feasibility, than just polynomial time.
We are motivated by the following question.

\medskip

\emph{Under what conditions on $G$ and $H$ can $\Hom{G}{H}$ be computed in near-linear time?}

\medskip

A starting point for this broad investigation is a seminal result of Chiba-Nishizeki~\cite{ChNi85} that focuses
on \emph{graph degeneracy}. An input graph $G$ has bounded degeneracy, if all subgraphs of $G$ have bounded average degree.
Chiba-Nishizeki proved that clique counting and $4$-cycle counting can be done in linear time for bounded degeneracy graphs.
The degeneracy has a special significance in the analysis of real-world graphs, since it is intimately tied to the technique of ``core decompositions''~\cite{Se23}. 
The family of bounded degeneracy graphs is quite rich, and includes all minor-closed families, bounded treewidth classes, and preferential
attachment graphs. Most real-world graphs tend to have small degeneracy (\cite{goel2006bounded,jain2017fast,shin2018patterns,bera2019graph,bera2020degeneracy}, also Table 2 in~\cite{bera2019graph}),
underscoring  the practical importance of this class.

A series of recent subgraph counting advances provide strong dichotomy theorems characterizing the patterns $H$ for which $\Hom{G}{H}$
can be computed in linear time, when $G$ has bounded degeneracy~\cite{bressan2019faster,bera2020linear,Bera2021,Bera2022}. 
Assuming fine-grained complexity conjectures, linear time algorithms exist iff the longest induced cycle of $H$ is strictly less than $6$.
This is a surprisingly precise characterization, even though the final linear time algorithm is quite intricate.

At the ``other end", early work by Eppstein showed that, for all fixed $H$, determining the existence of an $H$-homomorphsim is linear-time computable
if $G$ is planar \cite{eppstein95planar}. These results were extended to bounded genus graphs \cite{eppstein2000diameter}. In a grand generalization of these
results, \nesetril{} and Ossona de Mendez established the concept of \emph{bounded expansion} graph classes \cite{NeMe-grad}.
These classes are defined using the theory of shallow minors. Bounded expansion classes are quite broad, and include all bounded degree
graphs, include bounded tree-width graps, and all minor-closed families. Bounded expansion graphs form a strict subset of bounded degeneracy graphs.
They proved that for all fixed $H$, if $G$ has bounded expansion, then one can count $H$-homomorphism/subgraphs in linear time
(refer to Table 18.1 and Section 18.6 of~\cite{NeMe-book} and \cite{NeMe-grad2}). 

To summarize the above discussion, we have two ends of a spectrum. Assume some fine-grained complexity
conjectures on triangle counting. Suppose $G$ has bounded degeneracy.
Then $\Hom{G}{H}$ is linear-time computable iff the longest induced cycle of $H$ is strictly less than $6$.
On other hand, if $G$ has bounded expansion, then for all $H$, $\Hom{G}{H}$ can be computed in linear time.

What lies in between? Is there some class of graphs between bounded degeneracy and bounded expansion graphs where,
say, $9$-cycle homomorphisms can be counted in linear time? In the context of linear time algorithms,
what determines the hardness of $H$-homomorphism counting?

\begin{figure}
	\centering
	\includegraphics[width=\textwidth*4/5]{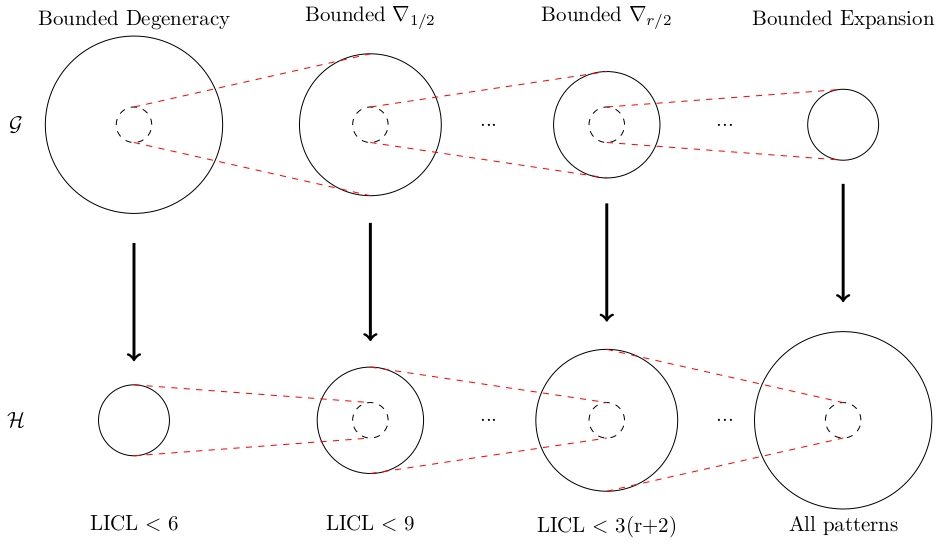}%
	\caption{A visualization of our main result. There is a decreasing hierarchy of input graph classes between bounded degeneracy and bounded expansion. 
    There is a corresponding increasing hierarchy of pattern classes, based on the $\licl$. As we look at more restrictive graph classes, we can count homomorphisms of
    more patterns in linear time.}
	\label{fig:main}
\end{figure}

\subsection{Main Result}

We give a comprehensive answer to the above questions. There is an infinite hierarchy of classes between bounded
degeneracy and bounded expansion graph classes. For any pattern $H$, we can precisely point out the largest
class where $\Hom{G}{H}$ is linear-time computable. 
We use $\licl(H)$ to denote the length of the longest induced cycle in $H$.

These graph classes are defined using a concept called the \emph{$r$ rank greatest reduced average degree} (or $r$-grad) of a graph~\cite{NeMe-book}.
The definition is technical and explained in the next section. For any $r \in \Z^+$, the quantity $\nabla_{r/2}(G)$ denotes
the $r/2$-grad of $G$. This is a well-defined graph quantity. Also, $\nabla_0(G)$ is the maximum average degree of any subgraph
of $G$, which, up to constant factors, is the graph degeneracy (or arboricity) (Theorem $4$ in \cite{Se23}). Moreover, for any $r < s$, $\nabla_{r/2}(G) \leq \nabla_{s/2}(G)$.

To give our main lower bound, we use the following common conjecture from fine-grained complexity, the \TRICONJ.

\begin{restatable}{conjecture}{triangleconjecture} \label{conjecture:triangle} 
	(\TRICONJ~\cite{abboud2014popular})
	There exists a constant $\gamma > 0$ such that in the word RAM model of $O(\log{n})$ bits, any algorithm to detect whether an input graph on $m$ edges has a triangle requires $\Omega(m^{1+\gamma})$ time in expectation.
\end{restatable}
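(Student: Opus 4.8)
The final displayed statement is \Conj{triangle}, the \TRICONJ{} of Abboud and Vassilevska Williams, which is imported from fine-grained complexity rather than something the paper proves. So, strictly speaking, there is no proof to propose: a genuine proof would be an \emph{unconditional} super-linear time lower bound in the word-RAM model, which is far beyond current techniques and would in particular settle long-standing questions about Boolean matrix multiplication. What I would actually write in place of a ``proof'' is a short justification of why the hypothesis is a reasonable one to assume, organized as follows.

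First I would record the algorithmic state of the art, to show the exponent $\gamma$ in the statement is consistent with everything known. The fastest triangle detector on an $m$-edge graph runs in time $O(m^{2\omega/(\omega+1)})$ (Alon--Yuster--Zwick), i.e.\ roughly $m^{1.41}$ for the current value of $\omega$, and even this degenerates to $m^{4/3}$ only in the unrealized case $\omega = 2$. Purely combinatorial algorithms are not known to beat $m^{3/2}/\polylog(m)$. In short, no near-linear algorithm is known, so $\Omega(m^{1+\gamma})$ for some small $\gamma > 0$ is not in conflict with any existing upper bound.

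Second I would recall the reductions that tie the conjecture to other central hardness hypotheses, which is what gives it its standing. Via the subcubic-equivalence framework of Vassilevska Williams and Williams, a truly subcubic triangle-detection algorithm on dense graphs is equivalent to truly subcubic (combinatorial) Boolean matrix multiplication and to several APSP-type problems; carrying these reductions into the sparse, $m$-edge regime shows that an $m^{1+o(1)}$-time triangle detector would yield an $n^{2+o(1)}$-time combinatorial algorithm for $n \times n$ Boolean matrix multiplication, which is widely believed impossible. There are also related $3$SUM-hardness results for triangle \emph{listing}. Refuting the conjecture would therefore be a major breakthrough.

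The ``main obstacle'' here is exactly the obstacle to all unconditional lower bounds: we simply have no tools for proving super-linear time lower bounds for natural problems in the word-RAM model, which is why the statement is taken as a hypothesis in the first place. For the purposes of this paper the conjecture is used only as a black box to derive the matching conditional lower bounds in the dichotomy hierarchy, so its status as an unproven assumption does not interfere with the rest of the development.
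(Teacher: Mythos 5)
You are right that this is a conjecture imported from fine-grained complexity, not a statement the paper proves; the paper likewise states it as an unproved hypothesis and justifies its plausibility in the related-work discussion using exactly the evidence you cite (the $O(m^{3/2})$ and $O(\min\{n^{\omega}, m^{2\omega/(\omega+1)}\})$ upper bounds and the observation that refuting it would surpass even the best matrix-multiplication-based algorithms). Your treatment matches the paper's, so there is nothing to correct.
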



\paragraph{The informal statement.} Consider a class of input graphs with bounded $\nabla_{r/2}$. If $\licl(H) < 3(r+2)$,
then $\Hom{G}{H}$ can be counted in linear time. If $\licl(H) \geq 3(r+2)$, then, assuming the \TRICONJ, any 
algorithm counting $\Hom{G}{H}$ for graphs with bounded $\nabla_{r/2}$ requires $\Omega(m^{1+\gamma})$ time. (Here, $m$ refers
to the number of edges in $G$.)

\begin{theorem} [Main Theorem] \label{thm:main}
    Fix any $r \in \Z^+$. Let $H$ be the pattern graph, and let the input graph $G$ have $m$ edges.
    Let $f: \NN \to \NN$ denote some explicit function. Let $\nabla_{r/2}(G)$ denote the $r/2$-grad of $G$.
	\begin{itemize}
        \item If $\LICL(H) < 3(r+2)$, then there exists an algorithm that computes $\Hom{G}{H}$ in time $f(\nabla_{r/2}(G)) \cdot m$.
		\item If $\LICL(H) \geq 3(r+2)$: Assume the \TRICONJ. For any function $g: \NN \to \NN$, there is no algorithm that computes $\Hom{G}{H}$
            in time $g(\nabla_{r/2}(G)) o(m^{1+\gamma})$ ($\gamma$ is the constant from the \TRICONJ).
	\end{itemize}
\end{theorem}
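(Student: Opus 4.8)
The proof splits into two essentially independent parts: an algorithmic upper bound and a conditional lower bound. For both, I would first normalize the problem. By the reductions of Curticapean--Dell--Marx~\cite{curticapean2017homomorphisms}, counting $\Sub{G}{H}$ reduces to counting $\Hom{G}{H'}$ for $H'$ ranging over the (constant-size) spasm of $H$ and conversely, and $\Hom{G}{H_1 \sqcup H_2} = \Hom{G}{H_1}\cdot \Hom{G}{H_2}$; moreover $\LICL$ never increases under taking a homomorphic image. So it suffices to prove, for \emph{connected} $H$, that $\Hom{\cdot}{H}$ is linear-time computable on $\cG_r$ when $\LICL(H) < 3(r+2)$ and hard when $\LICL(H)\ge 3(r+2)$. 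Throughout I use that $\nabla_0(G)$ is, up to a constant, the degeneracy of $G$, and $\nabla_0(G) \le \nabla_{r/2}(G)$.

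\emph{Upper bound.} Fix a degeneracy ordering of $G$ and orient each edge toward its larger endpoint, yielding an acyclic $\vec G$ of out-degree $O(\nabla_{r/2}(G))$. Every $f\colon H\to G$ induces, together with $\vec G$, a unique acyclic orientation of $H$, so $\Hom{G}{H} = \sum_{\vec H}\Hom{\vec H}{\vec G}$ over the constantly many acyclic orientations $\vec H$ of $H$ (those with a directed cycle contribute $0$, since $\vec G$ is acyclic). It therefore suffices to count DAG-homomorphisms of a fixed $\vec H$ into $\vec G$ in time $f(\nabla_{r/2}(G))\cdot m$. The technical core is a structural lemma: \emph{if $\LICL(H) < 3(r+2)$, then every acyclic orientation $\vec H$ admits a \dagtree{} (equivalently, a \hubtree{} decomposition) in which the hub vertices that a single bag must place simultaneously are linked, through lower vertices of $\vec H$, by witnessing paths short enough --- a bound controlled by $\LICL(H)$ --- that on any host with bounded $\nabla_{r/2}$ the set of joint images of a bag's hubs is enumerable in linear time.} Given such a decomposition, I would run a bottom-up dynamic program over the decomposition tree: at each bag, enumerate the (linearly many, by the lemma) images of its hub set in $\vec G$, follow the bounded-length directed out-paths of $\vec H$ to place the remaining reachable vertices (each extension costs only a function of $\nabla_{r/2}(G)$), and combine the partial counts along the tree. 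Summing over $\vec H$ and inverting the spasm relation gives $\Hom{G}{H}$, and then $\Sub{G}{H}$. It is essential here that the condition is $\LICL(H)<3(r+2)$ and not ``small treewidth'': cliques have $\LICL = 3$ and unbounded treewidth yet are handled (already in $\cG_0$) precisely because an oriented clique is covered by its single source.

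\emph{Lower bound.} Let $k = \LICL(H)\ge 3(r+2)$ and fix an induced $k$-cycle $C$ of $H$. Starting from a triangle-detection instance $G_0$ with $m$ edges, I would build $G'$ by stretching the three sides of each potential triangle of $G_0$ into internally disjoint paths of lengths $a,b,c$ with $a+b+c = k$ and $\min(a,b,c)\ge r+2$ (possible since $k\ge 3(r+2)$), and by attaching rigidly, at the appropriate places, gadgets that realize the remainder $H - V(C)$ in the style of~\cite{Bera2021,Bera2022}. Two claims drive the reduction. (i) $\nabla_{r/2}(G') = O(1)$ with an absolute constant: replacing every edge by a path of length $\ge r+2$ makes all depth-$(r/2)$ shallow minors of more than constant density impossible --- a branch set of radius $r/2$ cannot span a stretched side, so such a minor is essentially a subdivision of a subgraph of $G_0$ --- while the rigid gadgets create no short cycles; isolating the exact relation between the stretch factor and the depth at which the $r$-grad stays bounded is what pins the threshold at $3(r+2)$, and I would extract it as a standalone lemma on $r$-grads of subdivided graphs. (ii) $\Hom{G'}{H}$ determines the number of triangles of $G_0$: expanding via the spasm of $H$, every homomorphic image of $H$ other than $H$ itself either has strictly smaller $\LICL$ or identifies two vertices of $C$, and --- by construction of $G'$ --- each such term, evaluated on $G'$, is an instance of counting a pattern with $\LICL < 3(r+2)$ on a graph of bounded degeneracy, hence computable in linear time by the upper bound; subtracting all of these from $\Hom{G'}{H}$ leaves a fixed positive multiple of the number of ``stretched-triangle'' copies of $C$ in $G'$, which equals a constant times the number of triangles of $G_0$. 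Consequently, an algorithm for $\Hom{\cdot}{H}$ running in time $g(\nabla_{r/2}(\cdot))\, o(m^{1+\gamma})$ would, on $G'$ (where $\nabla_{r/2}(G') = O(1)$ and $|E(G')| = O(m)$), detect a triangle in $G_0$ in time $o(m^{1+\gamma})$, contradicting the \TRICONJ{}.

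\emph{Main obstacle.} The orientation identity, the spasm bookkeeping, and the Bera et al.\ gadget surgery are routine; the crux is the upper-bound structural lemma --- proving that $\LICL(H) < 3(r+2)$ (and not merely bounded treewidth) yields a \dagtree{} decomposition shallow enough that depth-$(r/2)$ sparsity of the host, rather than full bounded expansion, already suffices to evaluate it in linear time, with the constant matching the lower bound on the nose. Making the arithmetic ``(three sides of a triangle) $\times$ (subdivision length $r+2$) $= 3(r+2)$'' appear symmetrically on the algorithmic side, and certifying that the lower-bound construction introduces neither short cycles (so $\nabla_{r/2}(G')$ stays an absolute constant) nor spurious copies of $H$ (so the spasm inversion is exact), are the places I expect to spend the most effort.
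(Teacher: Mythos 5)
Your lower bound is essentially the paper's (subdivide a triangle instance into paths of length $r+2$, verify the subdivided graph has bounded $\nabla_{r/2}$ because no depth-$(r/2)$ branch set can span a stretched side, then lift from cycles to general $H$), though the paper avoids your gadget surgery and ``no spurious copies'' worry entirely by using the Curticapean--Dell--Marx tensoring trick (taking products $F_i \times G$ with fixed graphs so that a linear system recovers the individual homomorphism counts). One concrete error there: your claim that every proper spasm term of $H$ either drops below the $\LICL$ threshold or is otherwise linear-time countable is false; for the cycle of length $3(r+2)+2$ the spasm contains $C_{3(r+2)}$ with a pendant tail, which still has $\LICL \geq 3(r+2)$, and the paper needs a separate argument (its Lemma on homomorphisms of induced subgraphs) to dispose of that term.

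The upper bound, however, has a genuine gap: your entire argument rests on an unproven ``structural lemma,'' and the lemma as you state it does not reflect how the proof can actually go. You propose to work with the degeneracy orientation $\vec G$ and the acyclic orientations $\vec H$ directly, and to enumerate in linear time the joint images of a multi-vertex bag whose hubs are ``linked by short witnessing paths.'' But enumerating pairs of vertices of $\vec G$ joined by a short undirected path is precisely the hard step, and the paper's mechanism for it is to \emph{modify both graphs}: it performs $r$ rounds of fraternal augmentations (adding a shortcut edge for every out-out wedge, with weights controlled by a $(2r+1)$-fraternity function), uses the bound on $\nabla_{r/2}(G)$ to show the augmented input still has bounded outdegree, and shows the augmented pattern has hub-treewidth \emph{one}, so that no multi-hub bag ever needs to be evaluated. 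Three consequences your proposal does not confront. First, the augmented orientations cannot be kept acyclic --- the paper shows that extending the DAG order across rounds necessarily blows up the outdegree --- so restricting attention to acyclic orientations of $H$ is not tenable; this is exactly why hubsets replace DAG sources. Second, augmenting $H$ can create \emph{new} induced $6$-cycles (the ``extraneous induced cycles'' of Fig.~5), so one cannot simply argue that short $\LICL$ is preserved down to the base case; the paper has to rework the unique-reachability machinery of Bera et al.\ for hubsets to prove that $\LICL(H) < 3(r+2)$ forces hub-treewidth one after augmentation. Third, augmentations do not preserve homomorphism counts (an independent pair of $H$ may be merged by a homomorphism but becomes adjacent after augmentation), which is why the paper first passes to the labeled product graph $H \times G$. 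Without an actual proof of your structural lemma that addresses these three points, the upper bound is asserted rather than established.
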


{\bf Remark:} The algorithm above is randomized, but the only use of randomness is in building hash tables for a polynomial
sized universe. Replacing the hash tables
with van Emde Boas trees, we can get a deterministic algorithm running in time $O(m\log\log m)$.

\medskip

\paragraph{The hierarchy for linear-time counting.} The theorem above can be informally visualized as \Fig{main}. Consider an infinite hierarchy
of nested graph classes $\cG_0 \supseteq \cG_1 \supseteq \cG_2 \ldots \supseteq \cG_\infty$, where
$\cG_\infty = \bigcap_{r \in \Z^+} \cG_r$. The class of bounded degeneracy graphs is $\cG_0$
and the class of bounded expansion graphs is $\cG_\infty$. (Recall that even $\cG_\infty$ contains all
minor-closed families; so it is really a big graph class by itself.)
Formally, $\cG_r$ is the class of graphs where $\nabla_{r/2}$ is bounded.

Now consider an ``opposite" hierarchy of pattern classes $\cH_0 \subseteq \cH_1 \subseteq \cH_2 \ldots \subseteq \cH_\infty$,
where $\cH_\infty$ is the set of all patterns. For every $r \in \Z^+ \cup \{\infty\}$, for all patterns $H \in \cH_r$,
there is a linear time algorithm computing $\Hom{G}{H}$ where $G \in \cG_r$. Moreover, for all $H \notin \cH_r$,
one requires $m^{1+\gamma}$ time to compute $\Hom{G}{H}$ for $G \in \cG_r$.

Specifically, for $r = 0$, $\cG_0$ is the class of bounded degeneracy graphs,
and $\cH_0$ is the set of patterns with $\licl(H) < 6$. 

\paragraph{The obstacle of long induced cycles.} \Thm{main} also gives a precise condition that makes patterns
harder to count. Long induced cycles are the obstruction towards efficient (near-linear)
algorithms. There is a curious jump of $3$ for the $\licl$ at every ``level" of this hierarchy.
While this may appear to be some artifact of the algorithm, this jump is matched by the hardness
results of \Thm{main}. We find it quite striking that the multiples of $3$ are exactly the transition points
for the hardness of homomorphism counting. The graph classes $\cG_r$ are defined by the $r/2$-grad
values, which seem to have no connection to these multiple of $3$ transition points.

\paragraph{The dichotomies for subgraph counting.} Subgraph counts can be easily represented as linear combinations
of homomorphism counts, using inclusion-exclusion. Hence, algorithms for the latter can be used for subgraph counting.
To count $H$-subgraphs, we count homomorphisms of all patterns formed by specific mergings of $H$.
Remarkably, a result of Curticapean-Dell-Marx showed that this procedure is actually optimal~\cite{curticapean2017homomorphisms}.
Meaning, \emph{lower bounds} for homomorphism counting translate to subgraph counting exactly as the 
upper bounds go. Using their techniques, we can adapt \Thm{main} to subgraph counting dichotomies.

For a pattern $H$, the $\Spasm(H)$ is the set of patterns obtained by merging any independent set of $H$.
(Note that an $H$-homomorphism may map an independent set to the same vertex of $G$.) Abusing
notation, let $\licl(\Spasm(H))$ denote the largest LICL value among all patterns in $\Spasm(H)$.
To get our hierarchical dichotomies for subgraph counting, we simply replace $\licl(H)$ in \Thm{main}
by the larger quantity $\licl(\Spasm(H))$.

\begin{theorem} [Dichotomies for subgraph counting] \label{thm:main-sub}
    Fix any $r \in \Z^+$. Let $H$ be the pattern graph, and let the input graph $G$ have $m$ edges.
    Let $f: \NN \to \NN$ denote some explicit function.
	\begin{itemize}
        \item If $\LICL(\Spasm(H)) < 3(r+2)$, then there exists an algorithm that computes $\Sub{G}{H}$ in time $f(\nabla_{r/2}(G)) \cdot m$.
        \item If $\LICL(\Spasm(H)) \geq 3(r+2)$: Assume the \TRICONJ. For any function $g: \NN \to \NN$, there is no algorithm with runtime that computes $\Sub{G}{H}$
            in time $g(\nabla_{r/2}(G)) o(m^{1+\gamma})$ ($\gamma$ is the constant from the \TRICONJ).
	\end{itemize}
\end{theorem}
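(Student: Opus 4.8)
The plan is to deduce both halves of \Thm{main-sub} from \Thm{main} using the standard dictionary between subgraph and homomorphism counts of Curticapean, Dell, and Marx~\cite{curticapean2017homomorphisms}, taking care that every auxiliary graph produced along the way lies in a class of bounded $\nabla_{r/2}$.

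For the algorithmic half, suppose $\licl(\Spasm(H)) < 3(r+2)$. Möbius inversion over the partition lattice of $V(H)$ yields
\[
\Sub{G}{H} \;=\; \frac{1}{|\Aut{H}|}\sum_{\sigma}\mu(\hat 0,\sigma)\,\Hom{G}{H/\sigma}\;=\;\sum_{H'\in\Spasm(H)} c_{H'}\,\Hom{G}{H'},
\]
where $\sigma$ ranges over set partitions of $V(H)$, the quotient $H/\sigma$ is loopless precisely when every block of $\sigma$ is independent in $H$ (loopy quotients contribute $0$ and are discarded), and each $c_{H'}$ is an explicit rational depending only on $H$. Since $|\Spasm(H)| \le B_{|V(H)|} = O_H(1)$ and, by definition of $\licl(\Spasm(H))$, every $H' \in \Spasm(H)$ has $\licl(H') < 3(r+2)$, the first bullet of \Thm{main} supplies a linear-time algorithm for each $\Hom{G}{H'}$; summing these $O_H(1)$ algorithms with the precomputed coefficients gives an algorithm for $\Sub{G}{H}$ running in time $f(\nabla_{r/2}(G)) \cdot m$ for a suitable explicit $f$.

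For the hardness half, suppose $\licl(\Spasm(H)) \ge 3(r+2)$ and fix $H^\star \in \Spasm(H)$ with $\licl(H^\star) \ge 3(r+2)$; by the second bullet of \Thm{main}, under the \TRICONJ\ (Conjecture~\ref{conjecture:triangle}) no algorithm computes $\Hom{\cdot}{H^\star}$ on bounded-$\nabla_{r/2}$ inputs in time $g(\nabla_{r/2}) \cdot o(m^{1+\gamma})$. So it suffices to reduce the computation of $\Hom{\cdot}{H^\star}$ to that of $\Sub{\cdot}{H}$ with near-linear overhead and without leaving the class of bounded $\nabla_{r/2}$. Two ingredients are needed. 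First, \emph{every} coefficient $c_{H'}$ in the expansion above is nonzero, in particular $c_{H^\star} \ne 0$: grouping the partitions $\sigma$ with $H/\sigma \cong H'$, each contributes a Möbius value $\prod_{B\in\sigma}(-1)^{|B|-1}(|B|-1)!$ whose sign is $(-1)^{|V(H)|-|V(H')|}$, constant across the group, so the group sum of nonzero same-sign terms cannot vanish. Second, the complexity-monotonicity interpolation of~\cite{curticapean2017homomorphisms}: evaluating $\Sub{\cdot}{H}$ on tensor products $G \times T_1, \dots, G \times T_N$ for a fixed family of constant-size graphs $T_j$ (tensor powers of a fixed graph suffice), using $\Hom{G \times T_j}{H'} = \Hom{G}{H'}\cdot\Hom{T_j}{H'}$, and solving the resulting linear system — which has full column rank (in the unknowns $\Hom{G}{H'}$, $H'\in\Spasm(H)$) for a suitable choice of the $T_j$ — recovers $\Hom{G}{H^\star}$. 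Here $N$, each $|V(T_j)|$, and the size of the linear system are all $O_H(1)$, and each $G \times T_j$ has $O_H(m)$ edges, so the reduction runs in near-linear time; combined with the previous ingredients, a fast algorithm for $\Sub{\cdot}{H}$ on bounded-$\nabla_{r/2}$ inputs would contradict \Thm{main}.

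\paragraph{Main obstacle.} The only step that is not bookkeeping is checking that the auxiliary graphs $G \times T_j$ stay in a class of bounded $\nabla_{r/2}$: we need that $\nabla_{r/2}(G)$ bounded and $T$ a fixed finite graph imply $\nabla_{r/2}(G\times T)$ bounded by a function of $|V(T)|$, $r$, and $\nabla_{r/2}(G)$. Since $G\times T$ is a subgraph of the lexicographic product $G[K_{|V(T)|}]$ and $\nabla_{r/2}$ is subgraph-monotone, it suffices to bound $\nabla_{r/2}(G[K_c])$. The key point is that radii are not inflated by the lexicographic product: projecting any path of $G[K_c]$ to $V(G)$ gives a walk of at most the same length, so a depth-$(r/2)$ minor of $G[K_c]$ projects to a depth-$(r/2)$ minor of $G$ in which each vertex of $G$ appears in at most $c$ branch sets, and the standard grad bound for such bounded-multiplicity shallow minors (see~\cite{NeMe-book}) yields $\nabla_{r/2}(G[K_c]) \le \poly(c)\cdot\nabla_{r/2}(G)$ up to lower-order terms. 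This is precisely where one must use $\nabla_{r/2}$ rather than a deeper grad, and it is what makes the refined hierarchy $\{\cG_r\}$ (and not merely bounded expansion) robust enough to carry the reduction; the algorithmic half needs no such argument, since there the only graph in play is $G$ itself.
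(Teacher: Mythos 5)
Your proposal is correct and follows essentially the same route as the paper: the upper bound is the identical inclusion--exclusion over $\Spasm(H)$ combined with the homomorphism-counting algorithm, and the lower bound is the same Curticapean--Dell--Marx tensoring/interpolation (the paper's Lemma~\ref{lem:generalization_4_2}), with the grad of the auxiliary products $G\times T_j$ controlled exactly as you describe, via $G\times T_j\subseteq G\bullet K_{|V(T_j)|}$ and the lexicographic-product bound (Proposition~\ref{prop:grad}). The only cosmetic difference is that you spell out the nonvanishing of the spasm coefficients and the rank of the interpolation system, which the paper delegates to Lemma~\ref{lem:sub} and the cited results of \cite{curticapean2017homomorphisms,Bera2022}.
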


\subsection{Shallow Minors and Greatest Reduced Average Density} \label{subsec:grad}

To formally explain what $\nabla_{r/2}$ means, we introduce the fundamental concept of \emph{shallow minors}.
Recall that a minor of $G$ is a graph $F$ formed as follows. Each vertex of $F$ represents a connected
subgraph of $G$. All of these connected graphs are vertex disjoint. An edge in $F$ represents
an edge in $G$ connecting the corresponding subgraphs. (Usually, a minor is described in terms 
of deletions and contractions. The connected subgraphs described above are contracted to the single
vertices of $F$.)

In a shallow minor at depth $d$, the connected subgraphs have radius at most $d$.
This section is taken from Sections 4.2 and 4.4 of~\cite{NeMe-book}.

\begin{definition}  \label{def:shallow_minor}
    The graph $G'$ is a \emph{shallow minor of $G$ at depth $d$} if there exists a collection of disjoint subsets $V_1,\ldots,V_p$ of vertices in $G$ such that:
	\begin{itemize}
		\item Each graph induced on $V_i$ has radius at most $d$: in set $V_i$, there is a vertex $x_i$ such that every vertex in $V_i$ is at distance at most $d$ from $x_i$ in the graph induced in $V_i$.
            This $x_i$ is called the \emph{center} of $V_i$.
        \item $G'$ is a subgraph of the graph $G$ with $\cP$ contracted: each vertex $v$ of $G'$ corresponds to a set $V_{i(v)}$, and edge $(u,v)$ in $G'$ corresponds to two sets $V_{i(u)}$ and $V_{i(v)}$ linked by at least one edge.
	\end{itemize}

	We use $G'\; \in\; \minor{G}{d}$ to denote that $G'$ is a shallow minor of $G$ at depth $d$.
\end{definition}

We can also define shallow minor at half-integer depths. Suppose $G' \in \minor{G}{d}$. There is a subgraph
of $G$ that is a \emph{witness}, which essentially contains the subgraphs of radius $d$ induced by the $V_i$'s (corresponding to vertices of $G'$) 
connected by certain edges (corresponding to the edges of $G'$). The latter edges are called \emph{external} edges of the witness.
The graph induced by each $V_i$ is called a \emph{bush}.

\begin{definition} \label{def:shallow-half} A minor $G' \in \minor{G}{d}$ is said to have \emph{depth $d-(1/2)$} if the following holds.
    There exists a subgraph of $G$ witnessing the $G'$ minor, such that for every external edge $(i,j)$: 
    let $B_i$ and $B_j$ be the corresponding bushes containing $i$ and $j$ respectively. Let $x_i$ and $x_j$ be the corresponding centers.
    Then, either the distance of $x_i$ to $i$, or the distance of $x_j$ to $j$, is strictly less than $d$.
\end{definition}

Let us unpack this definition. Each bush is a graph of radius $d$. But to witness the minor $G'$, the external
edges need not be ``maximally far" from the centers. So the minor is considered to have depth $d-(1/2)$, less than just $d$.
While this may seem like an extremely technical condition, the half-integer depth minors play a crucial role in \Thm{main}.
To precisely capture the linear-time hardness of homomorphism counting, we need the classes defined through half-integer
depth minors.

We define the central concept of the greatest reduced average density.

\begin{definition} 
    Let $r$ be a non-negative half-integer. The rank $r$ \emph{greatest reduced average density (grad)} of a graph $G$ is defined as:
	\begin{align*}
		\grad_r(G) = \max_{G'\; \in\; \minor{G}{r}}{\left\{\frac{|E_{G'}|}{|V_{G'}|}\right\}}
	\end{align*}
\end{definition}

In words, the rank $r$ grad is the maximum average degree over all minors of $G$ of depth $r$.

The classes of the hierarchy defined by \Thm{main} (and \Fig{main}) are bounded $\nabla_{r/2}$ graph classes.
Consider the simple case of $r=0$. A depth $0$ minor is just a subgraph. The rank $0$ grad is
the maximum average degree over subgraphs, which is (up to constant factors) the graph degeneracy. 

A graph class with bounded $\nabla_0$ is a class where all subgraphs of graphs in the class have bounded average degree.
This is precisely $\cG_0$ in our hierarchy.
A graph class has \emph{bounded expansion} if $\nabla_r$ is bounded for all $r$. 

\section{Main Ideas} \label{sec:main}

Our result has many moving parts. In this section, we give a high-level overview with a focus on various
obstacles we faced. Many of the new concepts and definitions were introduced to overcome these obstacles. 
Our result is obtained from marrying techniques from three sources: first and foremost, the deep
theory of sparsity of \nesetril{} and Osana de Mendez~\cite{NeMe-book}, the \dagtreewidth\ of Bressan \cite{bressan2019faster,Bressan2021},
and the unique reachability and induced cycle obstructions of Bera et al. (denoted BPS and BGLSS) \cite{Bera2021,Bera2022}.
Our lower bounds are fairly direct adaptations of techniques from BPS and BGLSS.

\paragraph{Graph orientations.} Arguably, the starting point for any work on subgraph counting
related to graph degeneracy is the clique-counting work of Chiba-Nishizeki \cite{ChNi85}.
A number of results recognized that the Chiba-Nishizeki ideas can be recast in terms
of \emph{graph orientations}~\cite{matula1983smallest,Se23}. The primary challenge
for homomorphism counting on sparse graphs is the presence of high-degree vertices. Such vertices
kill any simple brute force BFS procedure to find homomorphisms. The idea
is to orient/direct the edges of $G$ into a DAG, such that \emph{outdegrees} are bounded. The hope is then
to search for homomorphisms/subgraphs in constant-radius outneighborhoods, which will have bounded size.

A natural approach is to find an (acyclic) orientation that minimizes the maximum outdegree.
The optimal quantity is called the \emph{graph degeneracy}, and
remarkably, there is a simple linear time procedure to find such a ``degeneracy orientation" \cite{matula1983smallest}.
Moreover, this simple algorithm is intimately connected with $\nabla_0(G)$; the degeneracy
is a $2$-approximation of $\nabla_0(G)$. In words, all subgraphs of $G$ have bounded average
degree iff the degeneracy orientation has bounded outdegree. And we have assumed that $\nabla_0(G)$
is bounded, so we can orient $G$ into a DAG $\vec{G}$ of bounded outdegree.

\paragraph{Homomorphism counting for bounded degeneracy graphs.} We now outline
the upper bound results of BPS and BGLSS, which fundamentally use Bressan's \dagtreewidth.
Let us refer to an $H$-homomorphism/subgraph as a \emph{match}.

Every $H$-match in $G$ forms some directed match in $\vec{G}$. We can enumerate over all
the (constant many) orientations $\vec{H}$ of $H$, and count $\vec{H}$-matches in $\vec{G}$.
So we reduce to a directed acyclic homomorphism counting problem. 

Suppose $\vec{H}$ has a single source vertex, so there is a rooted directed tree $\vec{T}$ spanning $\vec{H}$.
Since $\vec{G}$ has bounded outdegree, there are $O(n)$ $\vec{T}$-matches in $\vec{G}$. (Which can
be enumerated by a bounded depth outward BFS from each vertex.) We can enumerate over all these $\vec{T}$-matches,
and see which of them induce $\vec{H}$-matches. When $H$ is a clique, this recovers Chiba-Nishizeki's original algorithm.
Moreover, this is probably one of the best practical algorithms for small clique counting~\cite{PiSeVi17}.

The story gets interesting when $\vec{H}$ has multiple sources. In this case, $\vec{H}$ can be covered
by a collection of rooted trees, one from each source in $\vec{H}$. These rooted trees are ``fragments"
of $\vec{H}$, which can be pieced together to yield an $\vec{H}$-match. For each fragment $\vec{T}$,
we can enumerate all the $\vec{T}$-matches. The ``piecing together" requires a careful indexing of 
all these matches.

When two fragment trees $\vec{T}$ and $\vec{T'}$ share a vertex (in $\vec{H}$), we have to
select corresponding matches in $\vec{G}$ that share a vertex. It is challenging to index
the tree matches appropriately to retrieve the relevant matches that might lead to an $\vec{H}$-match.
A number of results designed ad hoc methods for orientations of various $H$ \cite{cohen2009graph,PiSeVi17, bera2020linear}. A breakthrough
was achieved by Bressan, who gave a systematic algorithm that indexes the fragments
to efficiently count $\vec{H}$-matches \cite{bressan2019faster}. He introduced a novel concept of \emph{\dagtree}.

For a given $\vec{H}$, the \dagtree{} is a tree $\cT$ where nodes represent bags of sources in $\vec{H}$.
Roughly speaking, each subtree of $\cT$ represents a subgraph of $\vec{H}$ formed by all vertices reachable from
the sources (in the bags) in $\cT$. The subgraphs represented by independent subtrees can be counted/indexed independently.
The non-trivial step is the ``merging" of matches of children subtrees in $\cT$. Suppose a node in $\cT$ has two
children, which represent the subgraphs $\vec{H_1}$ and $\vec{H_2}$. The parent node will represent a subgraph $\vec{H'}$
that contains $\vec{H_1}$ and $\vec{H_2}$. Roughly speaking, we construct $\vec{H'}$-matches by
extending $\vec{H_1}$ and $\vec{H_2}$-matches through some shared vertices. These shared vertices are reachable
from the sources in the bag represented by the parent node. The complexity of this step is determined
by the bag size. The \dagtreewidth{} $\tau$ is the size of the largest bag, and the running time is $O(n^\tau)$.
Relevant to us, when the \dagtreewidth{} is one, the algorithm runs in (near) linear time.

We stress that the process is highly non-trivial, and a lot of homomorphism information needs
to be ``compressed". A planar graph can have $\Theta(n^2)$ $4$-cycles, yet the above method
can count them exactly in linear time.

When is the \dagtreewidth{} of $\vec{H}$ one? This is precisely captured by BPS and BGLSS.
If $\licl(H) < 6$, then for all orientations of $\vec{H}$, the \dagtreewidth{} is one. 
The proof of this fact involves new concept of unique reachability graphs; but we defer
the discussion of this point later. 

The above summary gives the overall picture of proving the existence of linear time
algorithms for $H$-homomorphism counting on bounded degeneracy graphs, where $\licl(H) < 6$.

\begin{figure}
	\centering
	\includegraphics[width=\textwidth*1/2]{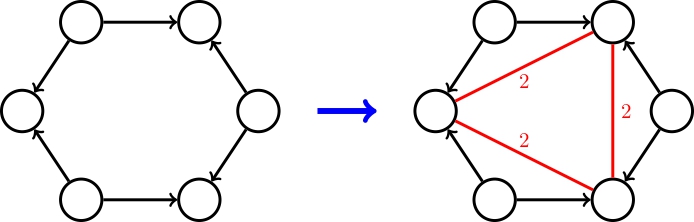}%
	\caption{On the left, an oriented $6$-cycle with $3$ sources. This orientation has a \dagtreewidth\ greater than $1$. It is not possible to decompose this pattern in a way that allows us to compute homomorphisms in linear time. The graph on the right is the result of connecting the endpoints of each out-out wedge, giving a fraternal augmentation. This new graph has a $\LICL < 6$ and hence for any orientation of the red edges, the \dagtreewidth\ is $1$.}
	\label{fig:6cycle}
\end{figure}

\subsection{The $6$-cycle obstruction} \label{sec:6-cycle}

We now explain the $6$-cycle barrier. Consider the oriented $6$-cycle $\vec{H}$ in the left of \Fig{6cycle}.
It can be partitioned into three out-out wedges (paths of length $2$), each corresponding
to a unique source. Thus, $\vec{H}$ forms a ``triangle" of out-out wedges. Counting $\vec{H}$-homomorphisms
is equivalent to counting triangles in the following graph. In the oriented $\vec{G}$, enumerate
all out-out wedges $(u,v,w)$, where $v$ denotes the wedge center. Create a new undirected graph ${G'}$ with the edges $(u,w)$. Since $G$ has bounded degeneracy, $\vec{G}$ has bounded
outdegree, and the number of out-out wedges is linear. So $G'$ has $O(m)$ edges.
Triangles in $G'$ are precisely $6$-cycles in $\vec{G}$.  Indeed, this argument gives the hardness
construction in BPS, reducing triangle counting in arbitrary graphs to $6$-cycle counting in bounded degeneracy graphs.

This is the starting point for our investigation. Under what circumstance can $6$-cycle counting be done in linear time?
If the graph $G'$ obtained above also had bounded degeneracy,
then triangle counting in $G'$ could be done in linear time (since $G'$ has $O(m)$ edges). 
What condition does $G$ need to satisfy for $G'$ to have bounded degeneracy?

\paragraph{Enter shallow minors.} Let us imagine contracting every alternate edge of the $6$-cycle. This leads to a triangle minor.
We can choose the centers of these contracted components with the following property. The
three non-contracted edges are incident to some center. Hence, this forms a shallow minor of depth $1/2$, according
to \Def{shallow-half}. Non-trivially, one can find a method of contracting $G$, so that all the $6$-cycles in $G$
are consistently contracted to triangles. Meaning, there is a $1/2$-shallow minor $G''$ such that $6$-cycles of $G$
become triangles in $G''$. The shallow minor machinery of \nesetril{} and Ossona de Mendez can be used
to show if $G''$ has bounded degeneracy, then the graph $G'$ (from the previous paragraph) also has bounded degeneracy.

Hence, if all $1/2$-shallow minors of $G$ have bounded degeneracy, then we can count triangles in $G'$ in linear time.
And the former condition is precisely saying that $\nabla_{1/2}(G)$ is bounded.

\paragraph{Implementing via fraternal augmentations.} Let us implement the above approach so that
it works for all $H$ with $\licl(H) = 6$. We start with $\vec{G}$ and $\vec{H}$ as before,
and assume that $\nabla_{1/2}(G)$ is bounded. We perform a series of \emph{fraternal augmentations}
in both $\vec{G}$ and $\vec{H}$. For every out-out wedge $(u,v,w)$, we add the edge $(u,w)$ to 
get the graphs $\vec{G'}$ and $\vec{H'}$. Note that new edges are undirected, so we try to orient
them in $\vec{G'}$ so that the maximum outdegree is minimized. Denote this graph as $\vec{G''}$. We then enumerate over all orientations
$\vec{H''}$ of the new edges in $\vec{H'}$. Finally, we count $\Hom{\vec{G''}}{\vec{H''}}$ and sum over all
the $\vec{H''}$.

Since $\nabla_{1/2}(G)$ is bounded, we can prove that $\vec{G''}$ will have bounded outdegree. We can also show that $\licl(\vec{H''})$,
treated as an undirected graph, will be strictly less than $6$. The key is that the augmentations in $\vec{H}$ will
reduce the length of \emph{all} induced cycles. This is seen for the simple example of the $6$-cycle in \Fig{6cycle}.
Hence, the previous machinery of BPS and Bressan using width one \dagtree s can be applied
to get a linear time algorithm. With some painstaking effort, one can push this approach to $\licl(H) < 8$.
Essentially, fraternal augmentations in $H$ reduce the $\licl$ to less than $6$, at which point previous methods
can run in linear time.

We note that the term ``fraternal augmentation" was introduced by \nesetril{} and Ossana de Mendes (Chap. 4 of~\cite{NeMe-book}).
But the idea is implicit in many previous results on subgraph counting in bounded degeneracy graphs~\cite{cohen2009graph,PiSeVi17,ortmann2017efficient}.

\begin{figure}
	\centering
	\includegraphics[width=\textwidth*1/3]{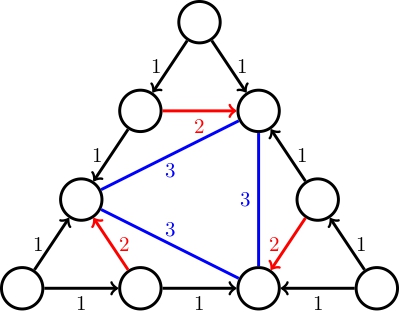}%
	\caption{An example of performing fraternal augmentations on an oriented $9$ cycle pattern (black edges). The first augmentation gives the red edges, reaching a situation analogous to the $6$ cycle in \Fig{6cycle}, an additional fraternal augmentation (blue edges) gives a pattern with an $\LICL$ less than $6$ and hence a \dagtreewidth\ of $1$.}
	\label{fig:9cycle}
\end{figure}

\subsection{More rounds of augmentations} \label{sec:more-aug}

Consider the oriented $9$-cycle pattern of \Fig{9cycle}. Let us perform a single round of fraternal augmentations,
to get the red edges. The $\LICL$ has now gone down to $6$, so we still cannot count homomorphisms (of the resulting pattern)
in linear time. Suppose we orient these new (red) edges, and perform another fraternal augmentation. This step
adds the blue edges, and the $\LICL$ is down to $3$. 

To count $9$-cycle homomorphisms in linear time, we need to perform two rounds of fraternal augmentations in $G$,
and hope that the degeneracy of the resulting graph is bounded. One might imagine that if $\nabla_1$ is bounded, then
two rounds of augmentations will lead to a bounded degeneracy graph. It turns out the situation is far more nuanced.
There are new obstacles for counting $9$-cycle homomorphisms in linear time.
This leads to the next technical tool.

\paragraph{Designing fraternity functions.} Augmentations are really shortcuts in the graph; each augmentation
represents a path of longer length. In general, we assume a bound on $\nabla_r(G)$ to get linear-time algorithms.
Such a bound refers to $r$-shallow minors, which essentially contract paths of length at most $2r$. Our augmentations
on such a graph should not shortcut a path that is longer than $2r$, since the bounded $\nabla_r(G)$ condition
cannot say anything about such augmentations. Thus, we have to perform augmentations carefully so that the bounded
$\nabla_r(G)$ condition can be used.

We discover the way to perform such careful augmentations is by crafting specific \emph{fraternity functions} of \nesetril{} and Ossona de Mendez.
This is a highly technical definition. At a high level, every augmented edge has a weight, which is (roughly) speaking
the length of the path shortcut by this edge. Any subsequent augmentation is not allowed to exceed a weight threshold.
We show an example of these weights in \Fig{9cycle}.
The final augmentation is described by a fraternity function, which satisfies a number of consistency constraints.
A deep result from the theory of sparsity is that if $\nabla_r(G)$ is bounded, then augmenting by a ``$(2r+1)$-fraternity function"
maintains bounded degeneracy. 
We apply this weighted fraternity function on both on the input $G$ and pattern $H$.

\begin{figure}
	\centering
	\includegraphics[width=\textwidth*1/2]{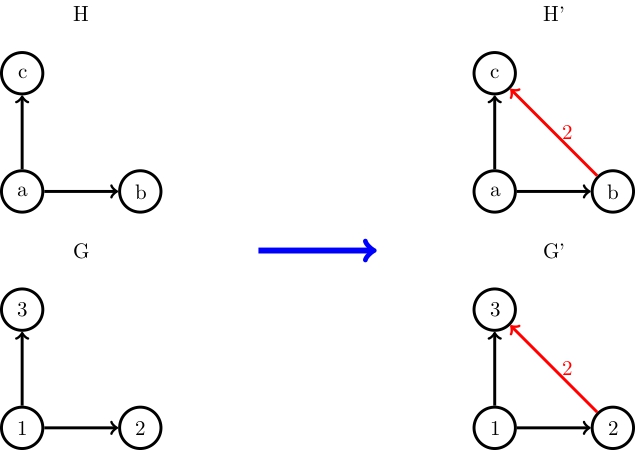}%
	\caption{An example of how fraternal augmentations do not preserve homomorphisms. Consider the homomorphism $\phi$ from $H$ to $G$ with $\phi(a)=1$ and  $\phi(b)=\phi(c)=2$. We can see how this will not be a valid homomorphism from $H'$ to $G'$ as the new edge connecting $b$ and $c$ is not preserved. However the number of subgraphs is preserved as the subgraphs $\{1,2,3\}$ in $G$ and in $G'$ are equivalent to $H$ and $H'$ respectively.}
	\label{fig:homomorphism}
\end{figure}

\paragraph{Maintaining homomorphism counts.} There are some annoyances when performing augmentations
for homomorphism counting. We explain these to motivate seemingly artificial technical conditions
in our homomorphisms and final counting algorithms.

As we add more edges to $\vec{G}$, we may create ``fake" $\vec{H}$ homomorphisms.
On the flip side, when augmenting $\vec{H}$, some existing matches may be inadmissible (due to new edges
in the pattern). We have a simple example in \Fig{homomorphism} where augmentations do not preserve
homomorphism counts.

We use two ideas to handle these problems. Firstly, we enforce that homomorphisms must be weight preserving,
where the weights come from the fraternity functions described earlier. This prevents mapping of augmented edges
to original edges and vice versa. Secondly, it is more convenient to create a new input instance
from a graph product $G \times H$. We find $H$-homomorphisms in this product graph, where it is much
easier to track the effect of augmentations on $H$-homomorphisms. All in all, we can then show direct
correspondences between homomorphisms in the original graph $G$, and the homomorphisms in the final graph
(constructed by graph products and a series of augmentation steps).

\begin{figure}
	\centering
	\includegraphics[width=\textwidth*1/2]{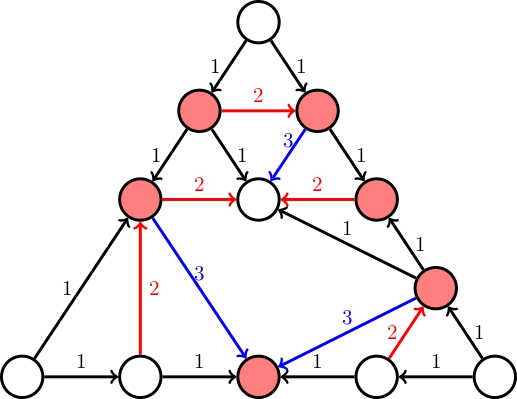}%
	\caption{An example of a pattern where we have performed 2 iterations of the augmentation (we will call this a $3$-fraternal extension of the pattern). As we can see the augmented pattern still has $\licl \geq 6$. However, the \hubtreewidth\ is $1$.}
	\label{fig:licl}
\end{figure}

\subsection{A major obstruction: extraneous induced cycles} \label{sec:extra}

So the overall story looks like the following. We have a graph $G$ such that $\nabla_r(G)$ is bounded.
We repeatedly perform augmentations, as long as they satisfy the constraints of a $(2r+1)$-fraternity function.
Intuitively, one can think of $2r$ rounds of augmentations. The resulting graph $G'$ has bounded degeneracy.
One also performs similar augmentations on $H$ to get the new pattern $H'$. (Of course, there is the extra complication of orienting every new edge
that is created, but let us ignore that for now.)

The hope is that $\licl(H')$ is strictly less than $6$, in which case previous algorithms can count $H'$-homomorphisms in $G'$
in linear time. Specifically, if $\licl(H) < 3(r+2)$, we would like $\licl(H')$ to be less than $6$. 
We think that with sufficiently many rounds of augmentations, we can cut down the $\licl$ length.

And this is false.
This statement fails, but only for a sufficiently complex example. The above approach does work for counting cycle
homomorphisms, or when $\licl(H) < 8$. But there is a pattern $H$ with $\licl(H) = 8$ where the approach breaks. 

For ease of exposition, we present an example with $\licl(H) = 9$.
Consider the pattern in \Fig{licl}. The problem is that the newly added augmentation edges (given in blue and red)
create a new induced cycle of length $6$. This induced cycle is given by the red vertices. Unfortunately,
we cannot guarantee a \dagtreewidth\ of one, so the existing algorithmic approach of BPS and BGLSS (as a black box) cannot yield
a linear time algorithm.

At this stage, the authors thought an entire rethink was needed. Thankfully, that was not needed.
The path around this obstruction is an unpacking of Bressan's algorithm and a deeper look
into the BPS machinery. By getting to the core of these results, we can generalize them appropriately
to deal with these ``extraneous" induced cycle in the patterns.

\paragraph{Dealing with cyclicity.} It turns out that a seemingly minor technicality
is important to handling \Fig{licl}. We started with a DAG $\vec{G}$ and 
a pattern $\vec{H}$. The reason to make $G$ into a DAG $\vec{G}$ was that the degeneracy orientation 
was linear time computable and gave a DAG with constant outdegree. As a result, the pattern $\vec{H}$ is also a DAG,
which motivated \dagtree{} and \dagtreewidth{}.

When we augment, we add new \emph{undirected} edges. To do a subsequent round of fraternal augmentations,
we need to orient these edges, so that we can construct new out-out wedges. Every orientation has to keep
the outdegree bounded. A natural approach is to extend the existing partial order (implied by the DAG $\vec{G}$).
This actually \emph{cannot} work. Meaning, if we want to keep the overall outdegree bounded after multiple augmentation
rounds, then we must use cyclic orientations of $G$. 

\paragraph{Hubsets to the rescue.} So we need to deal with cyclic patterns $\vec{H}$, while Bressan's algorithm is tailored to DAG patterns. 
Our insight is that Bressan's algorithm is quite flexible, and we can generalize the concept of DAG sources
to ``hubsets". A hubset is a set of vertices from which all other vertices can be reached. The corresponding
definitions of \dagtree{} and \dagtreewidth{} all generalize to hubsets. Technically, the proofs
of Bressan go through quite directly. But hubsets give us significantly more flexibility in minimizing
the ``hub treewidth".

Recall that the obstacle of \Fig{9cycle} has an induced cycle of length $6$, and is not guaranteed to have \dagtreewidth{} one. But we can argue than the hub-treewidth
is just one, which leads to a linear time algorithm for counting that pattern (when $\nabla_1(G)$ is bounded).

\paragraph{Extending BPS to hubsets.} 
In order to take advantage of the hubsets we have to rework the machinery of BPS 
that related induced cycles to \dagtreewidth{}, the process is quite technical. All in all, we can prove the following. If $\licl(H) < 3(r+2)$,
then (roughly speaking) after performing $r$ rounds of fraternal augments, the resulting pattern $\vec{H}$
has a hub treewidth of one. 

\subsection{Lower bounds and subgraphs} \label{sec:lower-main}

The lower bounds closely follow the techniques of BPS and BGLSS \cite{Bera2021,Bera2022}. Using 
the tensorization techniques of Curticapean, Dell, and Marx, one can essentially
show that the hardest patterns to count are cycles. The ideas in BPS and BGLSS 
are to use various graph products and manipulations, and they need to maintain
the degeneracy of their various constructions. In our setting, we deal with more
restrictive rank $r$ bounded grad graphs, so we need some extra care in our arguments.

The hardness for cycle counting is fairly straightforward, and taken from~\cite{bera2020linear}.
We basically subdivide an edge into a longer path, and reduce triangle counting in arbitrary graphs
to cycle counting in bounded grad graphs. We perform some calculations to show that the resulting
graphs has bounded grad. The rank $r$ determines the length of the subdivision, and hence
the length of the cycle that a triangle is converted to. It suffices to show that the final graph
has bounded rank $r$ grad, which is quite direct. These simple constructions match the upper bounds
of \Thm{main}, completing the story for homomorphism counting.

The deep insight of Curticapean, Dell, and Marx is that, as their title says, homomorphisms
form a good basis for subgraph counting \cite{curticapean2017homomorphisms}. Essentially, they show that for any quantity represented
as a linear combination of homomorphisms, the complexity of computing that quantity
is determined by the hardest homomorphism. It is fairly direct to see that $H$-subgraph counting
can be done by an inclusion-exclusion on the various $H'$-homomorphisms (for $H' \in \Spasm(H)$).
Using techniques from~\cite{curticapean2017homomorphisms}, we can translate the inclusion-exclusion
algorithm into hardness for subgraph counting.

\section{Related Work} \label{sec:related}

The theory of sparsity is a deep topic at the intersection of graph theory, logic, and combinatorics.
We refer the reader to the textbook~\cite{NeMe-book}. Chapters 4, 5, and 7 contain most of the relevant
background for our work. 

We cannot do justice to the literature on homomorphism counting, which has an immense history.
It was observed that the treewidth of the pattern plays a role in the final complexity.
D\'{\i}az et al.~\cite{diaz2002counting} designed an algorithm
with runtime $O(2^{k}n^{\tw(H)+1})$ where $\tw(H)$
is the treewidth of the target graph $H$.
Dalmau and Jonsson~\cite{dalmau2004complexity} proved that such a dependence on treewidth
is necessary. They show that that $\Hom{G}{H}$ is polynomial time solvable if and
only if $H$ has bounded treewidth, otherwise it is
$\#W[1]$-complete. 

Relevant to our framework of restrictions on both $G$ and $H$,
Roth and Wellnitz~\cite{roth2020counting} consider a doubly restricted version of $\Hom{G}{H}$, where both $H$ and $G$ are from 
graph classes.
They primarily focus on the parameterized dichotomy between
poly-time solvable instances and $\#W[1]$-completeness.

 Degeneracy is a measure of sparsity and has
 been known since the early work of Szekeres-Wilf~\cite{szekeres1968inequality}.
 We refer to reader to a recent short survey of Seshadhri on subgraph counting and degeneracy~\cite{Se23}.
 The family of bounded degeneracy graphs is quite rich: it involves
 all minor-closed families, bounded expansion families, and preferential attachment
 graphs. Most real-world graphs have small degeneracy (\cite{goel2006bounded,jain2017fast,shin2018patterns,bera2019graph,bera2020degeneracy}, also Table 2 in~\cite{bera2019graph}).

Arguably the first work on exploiting degeneracy for subgraph counting is the seminal work of Chiba and Nishizeki~\cite{chiba1985arboricity}.
Since then, it has been a central technique in theoretical and practical algorithms~\cite{eppstein1994arboricity,AhNe+15,jha2015path,PiSeVi17,ortmann2017efficient,jain2017fast,pashanasangi2020efficiently}.

Bressan~\cite{bressan2019faster} introduced 
the concept of \dagtreewidth{} to design faster algorithms for 
homomorphism and subgraph counting problems in bounded degeneracy
graphs. Bressan showed that for a pattern $H$ with $|V(H)|=k$ and 
an input graph $G$ with $|E(G)|=m$ and degeneracy $\degen$,
one can count $\Hom{G}{H}$ in
$f(\degen,k)O(m^{\dtw(H)}\log m)$ time, where $\dtw(H)$ is the 
\dagtreewidth{} of $H$. Assuming the exponential time hypothesis~\cite{impagliazzo1998problems}, the subgraph counting problem does not admit
any $f(\degen,k)m^{o(\dtw(H)/\ln \dtw(H))})$ algorithm, for any positive function $f:\mathbb{N}\times \mathbb{N} \rightarrow \mathbb{N}$.
Recent work of Bressan, Lanziger, and Roth develops algorithms for pattern counting in directed graphs~\cite{BrLaRo23}.

A focus on linear time algorithm in bounded degeneracy graphs was initiated 
by Bera, Pashanasangi, and Seshadhri~\cite{bera2020linear}. They showed the lower
bound for counting $6$-cycles. That work was significantly generalized
by BPS and BGLSS which completely characterized linear time homomorphism counting
in bounded degeneracy graphs \cite{Bera2021,Bera2022}.

There are numerous pattern counting results
in Big Data models such as the property testing model~\cite{eden2017approximately,eden2018approximating,assadi2018simple,eden2020faster},
the streaming model~\cite{BarYossefKS02, Manjunath2011,Kane2012, ahn2012graph,Jha2013, Pavan2013, McGregor2016, bera2017towards,bera2020degeneracy}, and the Map Reduce model~\cite{cohen2009graph, Suri2011, kolda2014counting}.
%
%

We now discuss the triangle detection conjecture.
Itai and Rodeh~\cite{itai1978finding}
gave the first non-trivial algorithm for the triangle 
detection and finding problem with $O(m^{3/2})$ runtime. 
The best known algorithm for the triangle detection problem uses fast matrix multiplication and runs in time
$O(\min \{n^\omega, m^{{2\omega}/{(\omega+1)}}\})$~\cite{alon1997finding}. If $\omega=2$, this yields a running time of $m^{4/3}$, which
many believe to be the true complexity. The current best is $O(m^{1.41\ldots})$, using the best matrix multiplication algorithms
Any improvement on this bound would be considered a huge breakthrough in algorithms.
Disproving the \TRICONJ{} would require an algorithm that would go even beyond the best possible matrix multiplication based algorithm.
We refer the reader to~\cite{abboud2014popular} for more details on \TRICONJ.

\section{Preliminaries} \label{sec:prelim}

\subsection{Graphs and Homomorphisms} \label{subsec:graphs}

We use $G = (V_G, E_G)$ to denote the input graph, we will use $n = |V_G|$ for the number of vertices of $G$ and $m = |E_G|$ for the number of edges. We use $H = (V_H, E_H)$ for the pattern graph and $k =|V_H|$ for the number of vertices of $H$, we consider $k$ to have constant value. Both graphs are simple and undirected.

We will also have labeled graphs, a labeled graph is a graph $G = (V_G,E_G,L_G)$, where $L_G: V_G \to S$ is the label function that maps the vertices of the graph to a set of labels $S$. Additionally we will have weighted labeled graphs $G = (V_G,E_G,W_G, L_G)$ where $W_G: E_G \to \mathbb{N}$ is the weight function that maps the edges of the graph to the correspondent weight. We will use $E_G^i$ to denote the the subset of edges of $G$ with weight equal to $i$, that is, $E_G^i= \{e \in E_{G} : W_{G}(e) = i\}$.

A homomorphism from $H$ to $G$ is a mapping $\phi: V_H \to V_G$ where $\forall(u,u') \in E_H$ we have $(\phi(u),\phi(u')) \in E_G$. We use $\Phi(H,G)$ for the set of homomorphisms from $H$ to $G$. We denote with $\Hom{G}{H}$ to the problem of counting the number of distinct homomorphisms from $H$ to $G$, that is $\Hom{G}{H} = |\Phi(H,G)|$. 

We extend these definitions for weighted and labeled graphs. Given two weighted labeled graphs $H',G'$ we will define a homomorphism from $H'$ to $G'$ as a mapping $\phi: V_{H'} \to V_{G'}$ such that $\forall\ u\in V_{H'}\ L_{H'}(u)= L_{G'}(\phi(u))$ and $\forall (u,v) \in E_{H'}$ we have $(\phi(u),\phi(u')) \in E_{G'}$ and $W_{H'}((u,v)) \geq W_{G'}((\phi(u),\phi(v)))$. Similarly $\Hom{G'}{H'}$ will correspond to the problem of counting the number of homomorphisms from $H'$ to $G'$.

We use $\LICL(H)$ for the largest induced cycle length of $H$, that is, the maximum length of any induced subgraph of $H$ that forms a cycle. We use $\Spasm(H)$ to refer to the spasm of $H$, that is, the collection of graphs obtained by contracting subsets of not-neighboring vertices in $H$. $\LICL(\Spasm(H))$ will be the largest induced graph in all the graphs in the spasm of $H$. 

\subsection{Subgraph copies}

Given the graphs $H$ and $G$ we say that $G'$ is a copy of $H$ in $G$ if $G'$ is a subgraph of $G$ such that there exists a 1:1 mapping from $H$ to $G'$ that preserves the edges. We use $\Sub{G}{H}$ for the problem of counting the number of distinct non-induced copies of $H$ in $G$.

There is a direct relation between $\Sub{G}{H}$ and $\Hom{G}{H'}$ for the graphs $H'\in \Spasm(H)$. The exact identity can be seen in \cite{curticapean2017homomorphisms}, but we can express it as follows:

\begin{lemma} \label{lem:sub}
	Given two graphs $G$ and $H$, for each graph $H_i \in \Spasm(H)$ there exists a non-zero constant $c_i$ such that:
	
	\begin{align*}
		\Sub{G}{H} = \sum_{H_i \in \Spasm(H)} c_i \Hom{G}{H_i}
	\end{align*}
\end{lemma}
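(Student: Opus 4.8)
The plan is to recover the identity from the classical partition-lattice decomposition of homomorphism counts, and then to invoke the non-vanishing theorem of Curticapean--Dell--Marx for the claim that every coefficient is nonzero. First I would classify homomorphisms by the partition of $V_H$ they induce. For $\phi \in \Phi(H,G)$, let $\rho_\phi$ be the partition of $V_H$ whose blocks are the fibers $\phi^{-1}(v)$, $v\in V_G$. Since $G$ is simple (no loops), each block of $\rho_\phi$ must be an independent set of $H$; write $\Pi(H)$ for the set of such independent-set partitions. For a fixed $\rho\in\Pi(H)$ the quotient $H/\rho$ is a simple loopless graph lying in $\Spasm(H)$, and $\phi \mapsto \overline\phi$ (the induced injective homomorphism $H/\rho\to G$) is a bijection between $\{\phi : \rho_\phi = \rho\}$ and the injective homomorphisms $H/\rho \to G$. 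Writing $\mathrm{inj}(F,G)$ for the number of injective homomorphisms $F\to G$ and $\mathrm{aut}(F)=|\mathrm{Aut}(F)|$, one checks that $\mathrm{inj}(F,G)=\mathrm{aut}(F)\cdot\Sub{G}{F}$, since an injective homomorphism is edge-injective and hence its image is a subgraph of $G$ isomorphic to $F$, and each such copy is the image of exactly $\mathrm{aut}(F)$ injective homomorphisms. Grouping the bijection above by the isomorphism type of the quotient gives, for every $F\in\Spasm(H)$,
\[
\Hom{G}{F}\;=\;\sum_{F'\in\Spasm(H)} p(F,F')\,\mathrm{inj}(F',G),\qquad p(F,F') := |\{\rho\in\Pi(F): F/\rho\cong F'\}| .
\]
Here I use that contracting an independent-set partition of a quotient of $H$ again produces a quotient of $H$ by an independent-set partition (pull the partition back along the quotient map), so $\Spasm(F')\subseteq\Spasm(H)$ for all $F'\in\Spasm(H)$; this makes the system a \emph{closed} linear system indexed by the finite set $\Spasm(H)$, which also contains $H$ itself (via the trivial contraction).

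Next I would invert this system. Order $\Spasm(H)=\{F_1=H, F_2,\dots,F_\ell\}$ by non-increasing number of vertices. A non-trivial contraction strictly decreases the vertex count, so $p(F_i,F_i)=1$ and $p(F_i,F_j)=0$ whenever $|V(F_j)|>|V(F_i)|$; hence the matrix $P=(p(F_i,F_j))$ is unitriangular and therefore invertible over $\mathbb Z$. Reading off the row of $P^{-1}$ indexed by $F_1=H$ yields $\mathrm{inj}(H,G)=\sum_i q_i\,\Hom{G}{F_i}$ with $q_i\in\mathbb Z$, and dividing by $\mathrm{aut}(H)$ gives $\Sub{G}{H}=\sum_i c_i\,\Hom{G}{F_i}$ with $c_i=q_i/\mathrm{aut}(H)\in\mathbb Q$. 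Equivalently, Möbius inversion on the partition lattice identifies $c_i$ with the alternating sum $\frac{1}{\mathrm{aut}(H)}\sum_{\rho:\,H/\rho\cong F_i}\prod_{B\in\rho}(-1)^{|B|-1}(|B|-1)!$, the partitions outside $\Pi(H)$ contributing nothing since their quotients carry loops.

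The one remaining point — and the genuine obstacle — is that each $c_i$ is \emph{nonzero}, i.e.\ that the alternating sums above do not accidentally cancel. Everything up to here is bookkeeping with the partition lattice, but non-vanishing is exactly the structural content of Curticapean--Dell--Marx (``homomorphisms are a good basis for counting subgraphs''): using Lovász's linear independence of homomorphism vectors, they show the support of $\Sub{G}{H}$ in the homomorphism basis is precisely $\Spasm(H)$. I would cite \cite{curticapean2017homomorphisms} for this fact rather than re-derive it, which is also why the lemma statement already defers the exact identity to that reference; the constants $c_i$ can then be taken as the $q_i/\mathrm{aut}(H)$ produced above.
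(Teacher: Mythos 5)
Your proposal is correct and is essentially the paper's approach: the paper states this lemma with no proof beyond a citation to Curticapean--Dell--Marx, and you supply the standard partition-lattice derivation (fiber partition, unitriangular inversion against injective counts, division by $\mathrm{aut}(H)$) while likewise deferring the non-vanishing of the coefficients to that reference. As a small remark, the non-vanishing also follows directly from your own M\"obius formula without invoking Lov\'asz's linear independence: every partition $\rho$ with $H/\rho\cong F_i$ has exactly $|V(F_i)|$ blocks, so each term $\prod_{B\in\rho}(-1)^{|B|-1}(|B|-1)!$ has the same sign $(-1)^{|V(H)|-|V(F_i)|}$ and the nonempty sum cannot cancel.
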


%
%

\subsection{Degeneracy and the degeneracy orientation}

A graph $G$ is $\degen$-degenerate if every subgraph of $G$ has a minimum degree of at most $\degen$. The degeneracy of $G$, $\degen(G)$, is the maximum value of $\degen$ such that $G$ is $\degen$-degenerate. 
The degeneracy is also called the \emph{coloring number}.
A graph has bounded degeneracy when $\degen$ has constant value. 

There is a way of orienting a graph acyclically, such that the maximum outdegree is upper bounded by the degeneracy of such graph.
This is a classic result in graph theory (refer to Section 5.2 of~\cite{Diestel-book} and survey~\cite{Se23}).
To construct such orientation, one can generate an ordering of the vertices of $G$ by iteratively selecting the lowest degree vertex in the graph and removing it. Hence the degeneracy orientation can be constructed in linear time.

\begin{fact} \label{fact:degen_orientation} (\cite{matula1983smallest})
	Given an undirected graph $G$ with degeneracy $\degen = \degen(G)$, there exists an acyclical orientation $\degenOrientation$ of $G$ such that the maximum outdegree of $\degenOrientation$ is $\degen$.
    Moreover, this orientation can be computed in time $O(n+m)$.
\end{fact}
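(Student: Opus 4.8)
The plan is to build $\degenOrientation$ explicitly from a \emph{degeneracy ordering} of the vertices and then verify the outdegree bound, acyclicity, and running time in turn. First I would recall the standard consequence of $G$ being $\degen$-degenerate: every (induced) subgraph of $G$ contains a vertex of degree at most $\degen$. Iterating this — repeatedly deleting a minimum-degree vertex of the current graph — produces an ordering $v_1, v_2, \ldots, v_n$ of $V_G$ such that, for each $i$, vertex $v_i$ has at most $\degen$ neighbors among $\{v_{i+1}, \ldots, v_n\}$, because $v_i$ is a minimum-degree vertex of the induced subgraph on $\{v_i,\dots,v_n\}$ and that minimum degree is at most $\degen$ by definition of degeneracy. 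I would then define $\degenOrientation$ by orienting every edge $\{v_i, v_j\}$ with $i < j$ from $v_i$ to $v_j$. The outdegree of $v_i$ is then exactly the number of its neighbors appearing later in the ordering, hence at most $\degen$; and $\degenOrientation$ is acyclic since every arc goes from a lower-indexed to a higher-indexed vertex, so a directed cycle is impossible.

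For the running time I would argue that the peeling order can be produced without recomputing minimum degrees from scratch, using the bucket structure of Matula and Beck~\cite{matula1983smallest}. Maintain, for the not-yet-removed subgraph, an array of buckets where bucket $d$ is a doubly linked list of the vertices of current degree $d$, together with an array of current degrees and, for each vertex, a handle to its cell in its bucket. Keep a pointer $\ell$ to the smallest nonempty bucket. Each round: advance $\ell$ upward to the next nonempty bucket, remove one vertex $v$ from bucket $\ell$, append $v$ to the ordering, and for each still-present neighbor $u$ of $v$ decrement its current degree and move it from bucket $\deg(u)$ to bucket $\deg(u)-1$, resetting $\ell$ downward if this created a smaller nonempty bucket. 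Every edge of $G$ triggers exactly one such decrement (when its first endpoint in the peeling order is removed), so there are $O(m)$ decrement operations, each taking $O(1)$ time with linked-list handles; the pointer $\ell$ drops by at most one per decrement, so its total upward movement is bounded by its total downward movement plus its final value, i.e.\ $O(n+m)$; and a final pass orienting the $m$ edges from the computed order costs $O(n+m)$. Altogether the construction runs in $O(n+m)$ time.

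The combinatorial core — the outdegree bound and the acyclicity — is immediate once the peeling order is available, so the only mildly delicate point is the amortized analysis of the bucket scan in the linear-time implementation, and even that reduces to the one-line observation that $\ell$ moves down at most $O(m)$ times and hence up at most $O(n+m)$ times. Since the entire statement is the classical theorem of Matula and Beck, I would keep the write-up brief and defer the algorithmic bookkeeping to the citation~\cite{matula1983smallest}.
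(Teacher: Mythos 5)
Your proposal is correct and matches the approach the paper sketches for this cited fact: a degeneracy (min-degree peeling) ordering, orienting each edge toward the later endpoint to get outdegree at most $\degen$ and acyclicity, with the standard Matula--Beck bucket-queue bookkeeping giving the $O(n+m)$ bound. The paper itself only states this as a known result with a one-line description of the ordering, so your write-up simply supplies the standard details.
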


For a directed graph $\vec{G}$, we will use $\Delta^+(\vec{G})$ to refer to the maximum outdegree of any vertex of $\vec{G}$.

%
\subsection{Shallow Topological Minors and Top-Grads} \label{subsec:top-grad}

We can define shallow topological minors, which play a useful role in our analysis.

\begin{definition} [Shallow topological Minor \cite{NeMe-book}] \label{def:shallow_top_minor}
	A shallow topological minor of a graph $G$ of depth $d$ is a graph $G'$ obtained from $G$ by taking a subgraph and then replacing  an internally vertex disjoint family of paths of length at most $2d + 1$ by single edges. $G'$ is a shallow topological minor of $G$ at depth $d$ if there is a $\leq d $-subdivision of $G'$ that is a subgraph of $G$. 
	
	We use $G'\; \in\; \topminor{G}{d}$ to indicate that $G'$ is a shallow topological minor of $G$ at depth $d$.
\end{definition}

Similar to grads, we can define the the topological greatest reduced average density:

\begin{definition} (top-grad \cite{NeMe-book})
	The topological greatest reduced average density (top-grad) with rank $r$ of a graph $G$ is defined as:
	\begin{align*}
		\topgrad_r(G) = \max_{G'\; \in\; \topminor{G}{r}}{\left\{\frac{|E_{G'}|}{|V_{G'}|}\right\}}
	\end{align*}
\end{definition}

\SparsityAuthors\ proved that there is a polynomial relation between $\grad$ and $\topgrad$ of a graph $G$, this is given by the following fact:

\begin{fact} [Corollary $4.1$ in \cite{NeMe-book}]
	For every graph $G$ and every integer $r \geq 1$ holds
	\begin{align*}
		\topgrad_r(G) \leq \grad_r(G) \leq 4(4\topgrad_r(G))^{(r+1)^2}
	\end{align*}
\end{fact}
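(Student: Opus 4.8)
The inequality has two halves. The left one, $\topgrad_r(G)\le\grad_r(G)$, is a direct structural observation; the right one, $\grad_r(G)\le 4(4\topgrad_r(G))^{(r+1)^2}$, is the substantive part and is where the $(r+1)^2$ exponent is earned.

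\textbf{The easy direction.} Let $G'\in\topminor{G}{r}$ attain the maximum, so that some $\le r$-subdivision of $G'$ is a subgraph of $G$: each edge $uv$ of $G'$ corresponds to a path $P_{uv}$ in $G$ of length $\ell_{uv}\le 2r+1$ between the principal vertices $u,v$, and these paths are internally vertex-disjoint. I would convert this into a depth-$r$ minor model of $G'$ by hand. Keep all principal vertices of $G'$. For each path $P_{uv}$ (with $\ell_{uv}-1\le 2r$ internal vertices), put the first $\lceil(\ell_{uv}-1)/2\rceil\le r$ internal vertices into the branch set of $u$ and the remaining $\le r$ into the branch set of $v$. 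Internal vertex-disjointness makes the branch sets pairwise disjoint; each branch set is a ``spider'' centred at its principal vertex with legs of length $\le r$, hence connected and of radius $\le r$; and each edge of $G'$ is witnessed by the middle edge of the corresponding path, which joins the two branch sets. Thus $G'\in\minor{G}{r}$, so $\topgrad_r(G)=|E_{G'}|/|V_{G'}|\le\grad_r(G)$.

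\textbf{The hard direction: plan.} Write $D=\topgrad_r(G)$. Since a depth-$i$ topological minor is a depth-$r$ one for $i\le r$, we have $\topgrad_0(G)\le\cdots\le\topgrad_r(G)=D$, and because depth-$0$ minors and depth-$0$ topological minors are both just subgraphs, $\grad_0(G)=\topgrad_0(G)\le D$; this is the base case of an induction on the depth $r$. For the inductive step, take a densest shallow minor $H\in\minor{G}{r}$ and, by repeatedly removing a minimum-degree vertex, pass to a subgraph $H'$ whose minimum degree $\delta$ is at least (half the average degree of $H$, i.e.) $\grad_r(G)$; note $H'$ is still a depth-$r$ minor of $G$. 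The key move is a \emph{depth-reduction}: fix a BFS tree of radius $\le r$ inside each branch set, and contract only the part of each branch set lying within distance $r-1$ of its centre. This exhibits $H'$ as a depth-$(r-1)$ minor of an auxiliary graph $G_1$ obtained from $G$ by a bounded-depth minor-type operation. One then argues that $\topgrad_{r-1}(G_1)\le \poly(D)$ with the polynomial of degree linear in $r$ (a topological minor of $G_1$ pulls back to a bounded-depth topological minor of $G$). Applying the induction hypothesis to $G_1$ at depth $r-1$ gives $\delta\le\grad_{r-1}(G_1)\le 4(4\topgrad_{r-1}(G_1))^{r^2}\le 4(4\poly(D))^{r^2}$; composing the linear-degree polynomial with the degree-$r^2$ bound yields a degree-$(r+1)^2$ bound in $D$, and tracking the constants reproduces exactly $4(4D)^{(r+1)^2}$. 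Since $\delta\ge\grad_r(G)$, this is the claim.

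\textbf{The main obstacle.} The delicate step is the depth-reduction together with the passage from the \emph{minor} model of $H'$ to a \emph{topological} one: two edges of $H'$ incident to the same branch set may be realised by paths in $G$ that overlap near the centre of that branch set, so a naive contraction does not give internally vertex-disjoint paths, and hence does not control $\topgrad$ of the auxiliary graph. This is precisely where the min-degree hypothesis $\delta\ge\grad_r(G)$ is used: it guarantees enough edges at each branch set that a charging argument (or a random partition of each branch set into an ``inner'' and ``outer'' half, following Ne\v{s}et\v{r}il--Ossona de Mendez) retains a constant fraction of the edges realised by internally disjoint paths, at the price of a polynomial loss in density. Iterating this polynomial loss over the $r$ levels of the induction is exactly what turns a linear exponent into the quadratic exponent $(r+1)^2$. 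Pinning down the exact degree of the polynomial lost at each level — i.e. the precise bound on $\topgrad_{r-1}(G_1)$ in terms of $D$ — is the principal bookkeeping obstacle; the easy direction, the base case, and the overall shape of the recursion are routine once this extraction lemma is established.
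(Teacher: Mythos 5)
The paper does not prove this fact at all: it is imported verbatim as Corollary~4.1 of the Ne\v{s}et\v{r}il--Ossona de Mendez sparsity book, so there is no internal proof to compare against. Judged on its own terms, your argument for the left inequality $\topgrad_r(G)\le\grad_r(G)$ is complete and correct: splitting each subdivision path of length $\le 2r+1$ into two legs of length $\le r$ does produce disjoint connected branch sets of radius $\le r$ witnessing the same graph as a depth-$r$ minor.

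The right inequality, however, has a genuine gap, and it sits exactly where you say it does. Your inductive step rests on two unproved assertions: (i) that contracting only the radius-$(r-1)$ core of each branch set exhibits the dense minor inside an auxiliary graph $G_1$ with $\topgrad_{r-1}(G_1)\le \mathrm{poly}(\topgrad_r(G))$ of controlled degree, and (ii) that composing this with the induction hypothesis ``reproduces exactly'' the constant $4$ and the exponent $(r+1)^2$. Neither is routine. Point (i) is the actual content of the theorem (it is essentially Dvo\v{r}\'ak's extraction of dense shallow \emph{topological} minors from dense shallow minors, which requires a careful argument choosing, for each branch set, a bounded number of edge-disjoint and internally vertex-disjoint paths out of the BFS tree --- the overlap problem you identify near the centres is not resolved by invoking a ``charging argument'' in the abstract). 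Point (ii) cannot be verified without the explicit polynomial from (i): the specific form $4(4x)^{(r+1)^2}$ is determined by that polynomial, and a different (still linear-degree) loss per level would give a different base and exponent. As written, your text is a correct high-level road map of the known proof, plus a correct proof of the easy half, but the load-bearing lemma is named rather than proved. Since the paper itself treats the statement as a black-box citation, the honest fix is either to cite the book's Corollary~4.1 (as the paper does) or to actually state and prove the extraction lemma with its exact quantitative loss.
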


This corollary directly gives us the following fact.

\begin{fact} \label{fact:grad_topgrad}
	For any constant $r$, the classes of bounded rank $r$ grads and bounded rank $r$ top-grads are equivalent. Moreover, for any graph $G$, $\grad_r(G)$ is bounded if and only if $\topgrad_r(G)$ is bounded.
\end{fact}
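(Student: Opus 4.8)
The plan is to obtain Fact~\ref{fact:grad_topgrad} directly from the two-sided polynomial bound stated immediately above (Corollary~4.1 of~\cite{NeMe-book}),
\[
	\topgrad_r(G) \;\leq\; \grad_r(G) \;\leq\; 4\bigl(4\,\topgrad_r(G)\bigr)^{(r+1)^2},
\]
valid for every graph $G$ and every integer $r \geq 1$. First I would make the notion of ``bounded'' precise: a graph class $\cC$ has bounded rank $r$ grad if there is a constant $c = c(\cC,r)$ with $\grad_r(G) \leq c$ for all $G \in \cC$, and similarly for top-grad; for a single graph, ``bounded'' just means we treat $r$ as fixed and ask whether the quantity is $O(1)$.

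Next I would verify the two implications by substituting into the displayed inequality. If $\grad_r(G) \leq c$, the left inequality gives $\topgrad_r(G) \leq c$ with no loss, so a grad bound is automatically a top-grad bound. Conversely, if $\topgrad_r(G) \leq c'$, the right inequality gives $\grad_r(G) \leq 4(4c')^{(r+1)^2}$; since $r$ is a constant the exponent $(r+1)^2$ is a constant, so this upper bound is itself a constant depending only on $c'$ and $r$. Reading these bounds uniformly over a class $\cC$ proves the equivalence of the two class-level notions, and reading them for a fixed $G$ proves the ``moreover'' clause. The degenerate case $r = 0$ is immediate separately, since a depth-$0$ shallow (topological) minor is just a subgraph in both definitions, whence $\topgrad_0(G) = \grad_0(G)$.

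There is no real obstacle here: all the mathematical content sits in Corollary~4.1 of~\cite{NeMe-book}, which is the nontrivial ingredient, and our argument is a routine unwinding of the definition of ``bounded''. The single point that deserves a word of care is that the polynomial relation carries an exponent $(r+1)^2$ growing with $r$; consequently the equivalence is asserted only for \emph{constant} rank, and we neither need nor should expect it to hold uniformly as $r$ grows with the input size.
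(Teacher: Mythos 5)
Your proposal is correct and matches the paper's treatment, which derives the fact directly from Corollary~4.1 of~\cite{NeMe-book} by observing that for constant $r$ the exponent $(r+1)^2$ is constant. Your extra remarks on the $r=0$ case and the non-uniformity in $r$ are sensible but not needed beyond what the paper already asserts.
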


\subsection{\dagtreewidth{} and Bressan's algorithm} \label{subsec:dag}

Bressan introduced the concepts of \dagtree\ and \dagtreewidth\ of a directed acyclic graph $\vec{H}$ \cite{bressan2019faster}. Before defining the \dagtree\ of $\vec{H}$ we need to define a few concepts. We use $S=S(\vec{H})$ to denote the set of sources of $\vec{H}$, that is, the vertices in $\vec{H}$ with no in-edges. Given two vertices $u,v \in V_{\vec{H}}$, we say that $v$ is reachable from $u$ if there exists a direct path in $\vec{H}$ from $u$ to $v$. 

For a vertex $s \in \vec{H}$ we use $\Reachable_{\vec{H}}(s)$ to denote the set of vertices of $\vec{H}$ that are reachable from $s$. We can extend this definition to set of vertices, let $B \subseteq S$ is a set of vertices of $\vec{H}$, we use $\Reachable_{\vec{H}}(B)=\bigcup_{s\in B} \Reachable_{\vec{H}}(s)$ for the union of the reachability sets of the vertices in the set $B$. Additionally we use $\vec{H}(s)$ to represent the induced subgraph of $\vec{H}$ in the vertices of $\Reachable_{\vec{H}}(s)$. 

We can now bring the definition of \dagtree\ of $\vec{H}$:

\begin{definition} [\dagtree \cite{bressan2019faster}]
	Let $\vec{H}$ be a directed acyclic graph with source set $S$. A \dagtree\ of $\vec{H}$ is a rooted tree $T=(\cB,\cE)$ with the following properties:
	\begin{enumerate}
		\item Each node $B\in \cB$ is a bag of sources, $B \subseteq S$.
		\item $\bigcup_{B \in \cB} B = S$.
		\item For all $B,B_1,B_2 \in \mathcal{B}$, if $B$ is on the unique path between $B_1$ and $B_2$ in $\mathcal{T}$, then we have $\Reachable_{\vec{H}}(B_1)\cap \Reachable_{\vec{H}}(B_2) \subseteq \Reachable_{\vec{H}}(B)$.
	\end{enumerate}
\end{definition}

Bressan also defined the \dagtreewidth:

\begin{definition} [\dagtreewidth ($\dtw$) \cite{bressan2019faster}]
	The \dagtreewidth\ of a \dagtree\ $\cT = (\cB, \cE)$ is defined as the maximum bag size over all the bags of sources of $\cT$:
	\begin{align*}
		\dtw(\cT) = \max_{B \in \cB} |B|
	\end{align*}
	We also use $\tau(\vec{H})$ to refer to the \dagtreewidth{} of the directed graph $\vec{H}$, which is the minimum $\tau(\cT)$ over all possible \dagtree\ $\cT$ of $\vec{H}$.
\end{definition}

Bressan introduced an algorithm that allows to compute the number of Homomorphisms between directed acyclical graphs $\vec{H}$ and $\vec{G}$. This algorithm uses the \dagtree\ $\cT$ of $\vec{H}$ to compute the homomorphisms from $\vec{H}$ to $\vec{G}$ from the homomorphisms of $\vec{H}(B)$ to $\vec{G}$ for each of the bags sources $B$ in $\cT$. When $\vec{G}$ has bounded outdegree and $\vec{H}$ has $\dtw(\vec{H})=1$ this will take $O(n \log n)$ time. We can restate this result as follows:

\begin{lemma} [\cite{bressan2019faster}]
	Let $\vec{H}$ be a directed acyclic graph with $\dtw(\vec{H})=1$ and let $\vec{G}$ be a directed acyclic graph with bounded maximum outdegree. There is an algorithm that computes $\Hom{\vec{G}}{\vec{H}}$ in $O(n\log{n}))$ time.
\end{lemma}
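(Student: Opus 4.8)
The plan is to run Bressan's dynamic program over the DAG-tree, specialized to the width-one case. First I would fix a DAG-tree $\cT=(\cB,\cE)$ of $\vec H$ with $\dtw(\cT)=1$, so every bag is a singleton source $\{s\}$ of $\vec H$; since $k=|V_{\vec H}|$ is constant, $\cT$ has $O(1)$ nodes. For a node $B\in\cB$ let $T_B$ be the subtree of $\cT$ rooted at $B$ and let $\vec H[T_B]$ be the subgraph of $\vec H$ induced on $U_B:=\bigcup_{B'\in T_B}\Reachable_{\vec H}(B')$; note $\vec H[T_{\mathrm{root}}]=\vec H$ since in a DAG every vertex is reachable from some source. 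Processing $\cT$ bottom-up, I would compute for each $B=\{s\}$ a table $M_B$ keyed by the homomorphic images of $\vec H(s)$ in $\vec G$, where $M_B[\psi]$ is the number of homomorphisms from $\vec H[T_B]$ to $\vec G$ whose restriction to $\Reachable_{\vec H}(s)$ is $\psi$. Since $\vec H(s)$ is connected with a spanning out-arborescence rooted at $s$, such an image is pinned down by the image of $s$ together with $O(1)$ further choices, so $M_B$ has only $O(n)$ keys. At the root every homomorphism of $\vec H$ restricts to exactly one key, so $\sum_\psi M_{\mathrm{root}}[\psi]=\Hom{\vec G}{\vec H}$.

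For the base case: for each $v\in V_{\vec G}$ and each source $s$, I would enumerate all homomorphisms of $\vec H(s)$ to $\vec G$ sending $s\mapsto v$ by a bounded-depth out-BFS from $v$ along the spanning arborescence of $\vec H(s)$ — each successive vertex must map into the out-neighborhood of an already-placed image, so there are at most $\Delta^+(\vec G)^{k}=O(1)$ candidates, each checked against the $O(k^2)$ edges of $\vec H(s)$ in $O(1)$ time. Over all $v$ and all $O(1)$ sources this yields $O(n)$ homomorphisms in $O(n)$ time; storing them in hash tables (or balanced search trees) is the source of the $\log n$ factor. For a leaf $B=\{s\}$ we have $\vec H[T_B]=\vec H(s)$, so $M_B[\psi]=1$ for each enumerated image $\psi$.

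The core is the merge step. Let $B=\{s\}$ be internal with children $B_1=\{s_1\},\dots,B_\ell=\{s_\ell\}$, $\ell=O(1)$. The structural fact I would extract from DAG-tree property 3 is: for $B''\in T_{B_i}$ the $\cT$-path from $B''$ to $B$ passes through $B_i$, whence $\Reachable_{\vec H}(B'')\cap\Reachable_{\vec H}(s)\subseteq\Reachable_{\vec H}(s_i)$; so $U_{B_i}\cap\Reachable_{\vec H}(s)=\Reachable_{\vec H}(s_i)\cap\Reachable_{\vec H}(s)=:C_i$, a set fixed by $\cT$, and for $i\ne j$ the path between $T_{B_i}$ and $T_{B_j}$ passes through $B$, so $U_{B_i}\cap U_{B_j}\subseteq\Reachable_{\vec H}(s)$, hence $\subseteq C_i\cap C_j$. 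Using closure of reachability under out-edges, one checks that every edge of $\vec H[T_B]$ lies inside $\vec H(s)$ or inside a single $\vec H[T_{B_i}]$, so $\vec H[T_B]$ is exactly the gluing of $\vec H(s)$ with the $\vec H[T_{B_i}]$ along these interfaces; therefore a homomorphism of $\vec H[T_B]$ is precisely a homomorphism $\psi$ of $\vec H(s)$ together with, for each $i$, a homomorphism $\chi_i$ of $\vec H[T_{B_i}]$ with $\chi_i|_{C_i}=\psi|_{C_i}$, the $\chi_i$ being mutually independent once $\psi$ is fixed (their pairwise overlaps lie in $C_i\cap C_j$ and are already determined by $\psi$). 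Hence $M_B[\psi]=\prod_{i=1}^{\ell}A_i(\psi|_{C_i})$ with $A_i(\rho):=\sum_{\chi:\ \chi|_{C_i}=\rho}M_{B_i}[\chi]$. I would build each $A_i$ by bucketing the $O(n)$ entries of $M_{B_i}$ by their restriction to $C_i$ ($O(n\log n)$ time), then fill $M_B$ by iterating over its $O(n)$ keys $\psi$, doing $\ell=O(1)$ lookups and one product per key. Each of the $O(1)$ nodes costs $O(n\log n)$, so the total, including the $O(n)$ base case and the root summation, is $O(n\log n)$.

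The hardest part will be the merge step: verifying rigorously that the DAG-tree separator property genuinely decouples the child subtrees — that once $\psi$ is fixed, the count of compatible extensions factors into per-child counts, each depending only on the fixed interface $C_i$ — and that the edge set of $\vec H[T_B]$ really partitions across the pieces. This decoupling, combined with width one (singleton bags) keeping every $M_B$ of size $O(n)$, is the whole substance of the argument; the base enumeration and the final summation are routine once $\vec G$ has $O(1)$ out-degree and $k$ is constant.
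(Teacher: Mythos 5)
Your proposal is correct and follows essentially the same route as the paper's own (generalized) treatment in the section on Bressan's algorithm: your tables $M_B$ are the dictionaries $C_B(\phi)=\ext(\vec{H}(\down(B)),\vec{G},\phi)$, your bucketed maps $A_i$ are the aggregation dictionaries $\Agg_{B_i}$ keyed by restrictions to $\Reachable_{\vec H}(B)\cap\Reachable_{\vec H}(\down(B_i))$, your base-case enumeration along a spanning out-arborescence is \Lem{hom_Hs}, and your decoupling of the children via the DAG-tree separator property is exactly the bijection of \Lem{bressan_extension}. The argument and the $O(n\log n)$ accounting match; no gaps.
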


\subsection{Graph Products}  \label{subsec:products}

We will use two different graph products in this work, they are standard definitions but we show them here for completeness:

\begin{definition} (Categorical Product)[\cite{NeMe-book}] \label{def:categorical_product}
	Given two graphs $G$ and $G'$, we define their categorical product $G \times G'$ as follows:
	\[
	V_{G \times G'} = V_{G} \times V_{G'}
	\]
	\[
	E_{G \times G'} = \{(\langle u,u'\rangle, \langle v, v'\rangle): (u,v) \in E_{G} \text{ and } (u',v') \in E_{G'}\}
	\]
\end{definition}

\begin{definition} (Lexicographical Product)[\cite{NeMe-book}]
	Given two graphs $G$ and $G'$, we define their lexicographical product $G \bullet  G'$ as follows:
	\[
	V_{G \bullet  G'} = V_{G} \times V_{G'}
	\]
	\[
	E_{G \bullet G'} = \{(\langle u,u'\rangle, \langle v, v'\rangle): (u,v) \in E_{G} \text{ or } (u=v \text{ and }(u',v') \in E_{G'}\}
	\]
\end{definition}

We can show that there is a direct relation between both products:

\begin{fact} \label{fact:products}
	For any constant $c>0$: $G \times K_c = G \bullet \bar{K_c}$
\end{fact}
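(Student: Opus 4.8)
\textbf{Proof proposal for \Fact{products} ($G \times K_c = G \bullet \bar{K_c}$).}

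The plan is to verify directly that the two graphs have the same vertex set and the same edge set, since both products are defined explicitly in terms of ordered pairs of vertices. First I would observe that the vertex sets agree trivially: by the definitions of the categorical and lexicographical products, both $G \times K_c$ and $G \bullet \bar{K_c}$ have vertex set $V_G \times V_{K_c} = V_G \times \{1,\dots,c\}$, and note that $\bar{K_c}$ (the empty graph on $c$ vertices) and $K_c$ share the same vertex set, so this is consistent. The substance is in the edge sets, which I would handle by a short double-inclusion argument keyed on whether the two endpoints share the same $G$-coordinate.

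For the edge comparison, take two distinct vertices $\langle u,u'\rangle$ and $\langle v,v'\rangle$. In $G \times K_c$, this is an edge iff $(u,v)\in E_G$ and $(u',v')\in E_{K_c}$; since $K_c$ is complete, $(u',v')\in E_{K_c}$ is equivalent to $u'\neq v'$. So the categorical-product edges are exactly the pairs with $u\sim_G v$ and $u'\neq v'$. On the other side, in $G\bullet \bar{K_c}$ the pair is an edge iff either $(u,v)\in E_G$, or ($u=v$ and $(u',v')\in E_{\bar{K_c}}$); but $\bar{K_c}$ has no edges, so the second disjunct is vacuous, leaving exactly the condition $u\sim_G v$. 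The apparent discrepancy — the categorical side additionally demands $u'\neq v'$, the lexicographical side does not — is resolved by the key observation that $u\sim_G v$ already forces $u\neq v$ (as $G$ is a simple graph with no loops), and then the distinctness of the two vertices $\langle u,u'\rangle\neq\langle v,v'\rangle$ is automatic regardless of whether $u'=v'$. Conversely, when $u\sim_G v$ holds, every choice of $u',v'$ (equal or not) gives a valid edge on both sides: on the categorical side we need $u'\neq v'$, but if $u'=v'$ we must recheck — so here I must be slightly careful.

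The one point that needs a clean statement is precisely this: when $u \sim_G v$ and $u' = v'$, is $\langle u,u'\rangle\langle v,v'\rangle$ an edge of $G\times K_c$? By the definition it requires $(u',v')=(u',u')\in E_{K_c}$, which is false ($K_c$ has no loop). So such a pair is \emph{not} an edge of the categorical product, yet it \emph{is} an edge of the lexicographical product (first disjunct). This would make the claimed identity false as literally stated, so the intended reading must be that $K_c$ denotes the complete graph \emph{with loops} (the reflexive clique), equivalently that the categorical product here is the one appropriate for homomorphism counting where the pattern may collapse vertices; under that convention $(u',v')\in E_{K_c}$ holds for all $u',v'$, and then the categorical-product edge condition becomes exactly $u\sim_G v$, matching the lexicographical side verbatim. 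The main obstacle is therefore not any calculation but pinning down this convention; once $K_c$ is read with loops (or equivalently once we agree the identity is about unordered adjacency of \emph{distinct} product-vertices together with the collapsing behavior), the two edge sets coincide term-for-term and the vertex sets are identical, completing the proof. I would state the loop convention explicitly at the start and then give the two-line edge equivalence above.
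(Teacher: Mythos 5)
Your approach is the same as the paper's: both proofs observe that the vertex sets coincide by definition and then compare the two edge sets directly. The paper's version simply rewrites $E_{G\times K_c}=\{(\langle u,u'\rangle,\langle v,v'\rangle):(u,v)\in E_G \text{ and } (u',v')\in E_{K_c}\}$ as $\{(\langle u,u'\rangle,\langle v,v'\rangle):(u,v)\in E_G\}$, i.e.\ it silently treats $(u',v')\in E_{K_c}$ as always satisfied. Your caveat is therefore legitimate and is the one substantive point of difference: under the standard loopless reading of $K_c$, pairs with $(u,v)\in E_G$ and $u'=v'$ lie in $G\bullet\bar{K_c}$ but not in $G\times K_c$, so the stated equality only holds verbatim if $K_c$ is read as the reflexive (looped) clique, exactly as you say. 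You are also right that nothing downstream breaks: in \Lem{labeled_grad} (and again in \Lem{generalization_4_2}) the fact is only invoked to get the containment $G\times K_c\subseteq G\bullet\bar{K_c}\subseteq G\bullet K_c$, and that one-directional inclusion holds under either convention. So your proof is correct, follows the paper's route, and adds a precision the paper's own argument omits.
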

\begin{proof}
	First, from both definitions we have that the vertex sets are equivalent: $V_{G \times K_c} = V_{G \bullet  \bar{K_c}} = V_G \times V_{K_c}$.
	
	We can also show that the edge sets are equivalent:
	\begin{itemize}
		\item $E_{G \times K_c} = \{(\langle u,u'\rangle, \langle v,v'\rangle): (u,v) \in E_G \text{ and } (u',v') \in E_{K_c}\} = \{(\langle u,u'\rangle, \langle v,v'\rangle): (u,v) \in E_G\}$.
		
		\item $E_{G \bullet \bar{K_c}} = \{(\langle u, u'\rangle, \langle v, v'\rangle): (u,v) \in E_G \text{ or } (u=v \text{ and }(u',v') \in E_{\bar{K_c}}\} = \{(\langle u,u'\rangle, \langle v,v'\rangle): (u,v) \in E_G\}$.
	\end{itemize}
\end{proof}
\subsection{Fraternity Function} \label{subsec:fraternal}

 \SparsityAuthors\ introduced the notion of Fraternity Function in \cite{NeMe-book}:

\begin{definition} \label{def:frat_function}
	\textbf{(Fraternity Function) [Section 7.4 \cite{NeMe-book}]} Let $\mathcal{V}$ be a finite set and let $t$ be an integer. $t$-fraternity function is a function $\omega: \mathcal{V} \times \mathcal{V} \to \mathbb{N} \cup \{\infty\}$ such that for every $x,y \in \mathcal{V}$ one of $\omega(x,y)$ and $\omega(y,x)$ (at least) is $\infty$ and such that for every $x\neq y \in \mathcal{V}$:
	\begin{itemize}
		\item Either $\min(\omega(x,y),\omega(y,x))=1$
		\item Or $\min(\omega(x,y),\omega(y,x)) = \min_{z\in \mathcal{V}\setminus\{x,y\}} \omega(z,x) + \omega(z, y)$ \footnote{All the definitions in the Sparsity book use indegree instead of outdegree, which is more commonly used for subgraph counting using the degeneracy orientation. We will use the outdegree orientation instead and hence some of the definitions have been altered to reflect this.} 
		\item Or $\min(\omega(x,y),\omega(y,x)) > t$ and $\min_{z\in \mathcal{V}\setminus\{x,y\}} \omega(z,x) + \omega(z, y)> t$.
	\end{itemize}
\end{definition}

Given a $t$-fraternity function $\omega$ we define the directed weighted graph $\vec{G}^\omega = (V_{\vec{G}^\omega}, E_{\vec{G}^\omega}, W_{\vec{G}^\omega})$ as the graph with vertex set $V_{\vec{G}^\omega} = \mathcal{V}$ whose edges are all the pairs $u,v \in \mathcal{V}$ such that $\omega(u,v)\leq t$ and for every edge $(u,v)$ we have $W((u,v)) = \omega(u,v)$. We also define the directed graph $\vec{G}^\omega_i = (V_{\vec{G}^\omega_i}, E_{\vec{G}^\omega_i})$ as the graph with vertex set $V_{\vec{G}^\omega} = \mathcal{V}$ whose edges are all the pairs $u,v \in \mathcal{V}$ such that $\omega(u,v) = i$. We define $\Delta^+_i(\omega) = \Delta^+(\vec{G}^\omega_i)$.

We say that a directed weighted graph $\vec{G}$ forms a $t$-fraternity function if there is a $t$-fraternity function $\omega$ such that $\vec{G} = \vec{G}^\omega$. We call the fraternity function of $G$ to $\omega$ in such cases.

\section{Proving the Upper Bound} \label{sec:upper}

In this section we state the theorem that shows the upper bound of the Main Theorem and give a complete proof for it. In the following sections we will proof the different lemmas that compose the proof: 

\begin{theorem} \label{thm:upper}
	For all $t>0 \in \mathbb{N}$, let $G$ be an input graph with $n$ vertices, $m$ edges and bounded $\grad_{(t-1)/2}(G)$. If $\LICL(H) < 3(t+1)$ then exists an algorithm that computes $\Hom{G}{H}$ in $O(n \log{n})$ time.
\end{theorem}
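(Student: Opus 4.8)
The plan is to transform, in linear time, the pair $(G,H)$ into weighted directed graphs $\vec{G}^\star$ and $\vec{H}^\star$ so that $\vec{G}^\star$ has bounded maximum outdegree, every $\vec{H}^\star$ arising along the way has $\hubtreewidth$ one, and a weight-preserving homomorphism count from $\vec{H}^\star$ to $\vec{G}^\star$ determines $\Hom{G}{H}$; we then finish with a hubset generalization of Bressan's algorithm. Concretely, write $\rho=(t-1)/2$, so that bounded $\grad_\rho(G)$ is, by \Fact{grad_topgrad}, the same as bounded $\topgrad_\rho(G)$. Orient $G$ into an acyclic $\vec{G}$ of bounded outdegree via \Fact{degen_orientation}, and enumerate the $2^{|E_H|}$ orientations $\vec{H}$ of $H$; every $H$-match of $G$ induces an $\vec{H}$-match of $\vec{G}$ for exactly one of them. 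Since fraternal augmentation on the pattern need not preserve homomorphisms (\Fig{homomorphism}), we first pass to the categorical product $G\times H$ — equivalently, by \Fact{products}, a lexicographic product — carrying labels that record the $H$-coordinate; the count we ultimately want is then recovered from this product instance by an explicit constant correction.

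\textbf{Weighted fraternal augmentation.} Starting from $\vec{G}$, and in parallel from each $\vec{H}$, we repeatedly add a new edge $\{u,w\}$ of weight $i+j$ for every out-out wedge with center $v$ whose two edges have weights $i$ and $j$ with $i+j\le t$, orienting each new edge so as to keep all outdegrees bounded, and we stop once no admissible wedge remains; the choices are arranged so that the resulting weighted directed graph forms a $t$-fraternity function in the sense of \Def{frat_function}. The crucial sparsity input is the theorem of \SparsityAuthors{} that augmenting a graph of bounded $\grad_\rho$ by a $t$-fraternity function (here $t=2\rho+1$) yields a graph of bounded degeneracy; hence $\vec{G}^\star$ has bounded outdegree and only $O(n)$ edges. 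All homomorphisms are required to be weight-preserving — a pattern edge's weight is at least its image's weight — which kills the spurious matches that augmenting $\vec{G}$ would otherwise create. Note that the orientations chosen for the new edges generally make $\vec{G}^\star$ and $\vec{H}^\star$ cyclic, which is exactly why the DAG-based version of Bressan's algorithm no longer suffices.

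\textbf{The augmented pattern has a width-one \hubtree.} The combinatorial crux is: if $\LICL(H)<3(t+1)$, then every $\vec{H}^\star$ produced above has $\hubtreewidth$ one. The naive hope — that augmentation drives $\LICL(\vec{H}^\star)$ (viewed as an undirected graph) below $6$, so that the BPS/BGLSS characterization and Bressan's algorithm apply verbatim — fails, because augmentation can create \emph{extraneous} induced $6$-cycles (\Fig{licl}), and then $\dagtreewidth$ need not be one. We instead generalize Bressan's ``sources'' to \emph{hubsets} (vertex sets from which every vertex is reachable), check that the \dagtree{} axioms, Bressan's recurrence, and its correctness proof all survive with sources replaced by hubsets, and then rework the BPS unique-reachability machinery for hubsets to prove that repeated $t$-fraternity-function augmentation of a pattern with $\LICL(H)<3(t+1)$ always leaves a hubset whose induced subtree structure has width one. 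This step — understanding the interaction of the fraternal edges of $H$ with its induced cycles, recast through hubsets — is where essentially all the difficulty lies, and is the main obstacle.

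\textbf{Counting and translating back.} The hubset version of Bressan's algorithm counts weight-preserving homomorphisms from a width-one pattern into a bounded-outdegree host in $O(n\log n)$ time; here $|V_{G\times H}|=O(n)$ and $|E_{\vec{G}^\star}|=O(n)$. Summing these counts over the constantly-many orientations $\vec{H}$ and augmentation choices, then undoing the product correction, gives $\Hom{G}{H}$ in $O(n\log n)$ time. (Swapping hash tables for van Emde Boas trees yields a deterministic $O(m\log\log m)$ bound, as remarked after \Thm{main}.)
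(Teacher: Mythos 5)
Your proposal is correct and follows essentially the same route as the paper: pass to the labeled categorical product, perform weighted fraternal extensions with weight-preserving homomorphisms, invoke the sparsity machinery to keep the augmented host at bounded outdegree, replace DAG sources by hubsets, prove the width-one \hubtree{} lemma from $\LICL(H)<3(t+1)$ via the unique-reachability argument, and finish with a hubset version of Bressan's algorithm. The only cosmetic difference is that the labeled product gives an exact equality of homomorphism counts, so no final ``constant correction'' is actually needed.
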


\begin{proof}
	Fix any $t>0$. 
	
	First, we need to compute the labeled version of $H$ and $G$, respectively $\LabeledH$ and $\ExpG$. This is because the fraternal extension procedures that we will perform in the following step will not preserve the number of homomorphisms if applied directly on $H$ and $G$ (see \Fig{homomorphism}).
	
	The constructions of these graphs are defined in \Sec{label}. We can construct $\LabeledH$ in $O(k)$ time and $\ExpG$ in $O(n)$ time (\Clm{construct_G_1}). Additionally, we will show that this transformations preserves both the LICL of the pattern graph and the bounded grad conditions on the pattern and input graphs respectively:
	\begin{itemize}
		\item $\LICL(\LabeledH) = \LICL(H)$
		\item For any constant $i$, $\grad_i(G)$ is bounded if and only if $\grad_i(\ExpG)$ is bounded. (\Lem{labeled_grad}).
	\end{itemize}
	We will also prove that the number of homomorphisms from $H$ to $G$ is equal to the number of homomorphisms from $\LabeledH$ to $\ExpG$ (\Clm{hom_equivalence_1}).
	
	Then, we will compute both the optimal acyclic $t$-Fraternal extension of $\ExpG$, $\MinFrat{\ExpG}{t}$, and the collection of $t$-fraternal extensions of $\LabeledH$. We will formally define these concepts in \Sec{fraternal}. We will also show how to efficiently construct $\MinFrat{\ExpG}{t}$:
	
	\begin{restatable}{lemma}{buildingG} \label{lem:construct_G}
		Let $\ExpG$ be a graph with $O(n)$ vertices and bounded $\grad_{(t-1)/2}$, we can construct $\MinFrat{\ExpG}{t}$ in $O(t \cdot n)$ time.
	\end{restatable}

	Additionally, in the same section we prove that $\MinFrat{\ExpG}{t}$ will have bounded maximum outdegree:
	
	\begin{restatable}{lemma}{lemmagrad} \label{lem:main_grad}
		Let $\ExpG$ be a graph with $O(n)$ vertices and bounded $\grad_{(t-1)/2}$, then the directed graph $\MinFrat{\ExpG}{t}$ has bounded max outdegree.
	\end{restatable}

	$\Frat{H}{t}$ is independent on the input graph and will only depend on the pattern graph $H$. Because $H$ is assumed to be constant sized we will be able to compute $\Frat{H}{t}$ in $O(1)$ time. 
	
	We can now reduce our problem from computing $\Hom{\ExpG}{\LabeledH}$ to computing $\Hom{\MinFrat{\ExpG}{t}}{\vec{H}'}$ for all $\vec{H}'\in \Frat{H}{t}$. In Subsection \ref{subsec:equivalence} we will show a direct equivalence between both quantities, given by the following lemma:
	
	\begin{restatable}{lemma}{equivalence} \label{lem:equivalence}
		\[
		\Hom{\ExpG}{\LabeledH} = \sum_{\vec{H}'\in \Frat{H}{t}} \Hom{\MinFrat{\ExpG}{t}}{H'}
		\]
	\end{restatable}
	
	In \Sec{hub-set} we introduce the concepts of hubset, \hubtree\ and \hubtreewidth. These concepts are a generalization of the source set, \dagtree\ and \dagtreewidth\ respectively for directed graphs that are not acyclical. Using this new concepts, in \Sec{licl} we show that there is a relation between the \hubtreewidth\ of the $t$-fraternal extensions of $H$ and the $LICL$ of $H$, as given by the following lemma:
	
	\begin{restatable}{lemma}{lemmalicl} \label{lem:main_licl}
		Let $H$ be a pattern graph with $LICL(H) < 3(t+1)$, then for any $t$-fraternal extension $\vec{H}' \in \Frat{H, t}$ we have that $\htw(\vec{H}') = 1$.
	\end{restatable}
	
	Hence, because $H$ has an $LICL < 3(t+1)$ we will have that the \hubtreewidth\ of all the graphs in $\Frat{H}{t}$ is $1$.

	Finally, in \Sec{bressan} we show how to compute $\Hom{\MinFrat{\ExpG}{t}}{\vec{H}'}$ for each $\vec{H}' \in \Frat{H}{t}$. By replacing the \dagtree\ with the \hubtree\ we are able to adapt Bressan's algorithm \cite{Bressan2021} to work with labeled, weighted and directed graphs, giving the following lemma:
	
	\begin{restatable}{lemma}{lemmaBressan} \label{lem:main_bressan}
		Let $\vec{G}$ be a directed weighted and labeled graph with $n$ vertices and bounded outdegree and let $\vec{H}$ be a directed weighted and labeled graph with $\dtw(H) = 1$. There exists an algorithm that computes $\Hom{\vec{G}}{\vec{H}}$ in $O(n \log n)$ time.
	\end{restatable}
	
	Hence, we can compute $\Hom{\MinFrat{G}{t}}{\vec{H}'}$ in linear time for each graph $\vec{H}' \in \Frat{H}{t}$, as from \Lem{main_licl} we have that $\htw(\vec{H}')=1$ and from \Lem{main_grad} we have that $\Delta^+(\MinFrat{\ExpG}{t})$ is bounded. We can then aggregate the counts using \Lem{equivalence} to obtain the final homomorphism count. The whole process will take $O(n \log{n})$ time.

\end{proof}

\section{Labeled Graphs} \label{sec:label}

As we mentioned in the introduction, we can not use the fraternal extensions directly on $G$ and $H$. Fraternal extensions do not preserve the number of homomorphisms (see \Fig{homomorphism}). 
We define a pair of labeled graphs that can be obtained directly from $H$ and $G$ in linear time. Every homomorphism of the original graphs will translate into an injective homomorphism in the labeled graphs. We then will be able to do fraternal extensions on the labeled graphs while preserving the homomorphism count.

First, we define the labeled version of $H$, $\LabeledH$ which basically is $H$ but with every vertex labeled to itself:

\begin{definition}
	$\LabeledH$: Given a pattern graph $H=(V_H,E_H)$ we define the labeled graph $\LabeledH = (V_{\LabeledH},E_{\LabeledH}, L_{\LabeledH})$, where:
	\begin{itemize}
		\item $V_{\LabeledH} = V_H$
		\item $E_{\LabeledH} = E_H$
		\item $L_{\LabeledH}: V_{\LabeledH} \to V_{\LabeledH}$ is a labeling function such that $\forall v \in V_{\LabeledH},L_{\LabeledH}(v)=v$.
	\end{itemize}
\end{definition}

Now we define the graph $\ExpG$, this graph is obtained using the categorical product (\Def{categorical_product}) of $\LabeledH$ and $G$:

\begin{definition}
	$\ExpG$: Given a pattern graph $H=(V_H,E_H)$ and an input graph $G=(V_G, E_G)$, we define the labeled graph $\ExpG = (V_{\ExpG}, E_{\ExpG}, L_{\ExpG})$ as follows:
	\begin{itemize}
		\item $V_{\ExpG} = V_{H \times G}$
		\item $E_{\ExpG} = E_{H \times G}$
		\item $L_{\ExpG}: V_{\ExpG} \to V_{\LabeledH}, \text{ where } \forall \langle u,v \rangle \in V_{\ExpG}\ L(\langle u,v \rangle) = u$
	\end{itemize}
\end{definition}

Constructing $\LabeledH$ is trivial and takes constant time $O(k)$. We can construct $\ExpG$ efficiently as in the following claim:
\begin{claim} \label{clm:construct_G_1}
	$\ExpG$ has $O(n \cdot \kappa)$ vertices and $O(n \cdot \degen \cdot k^2)$, and can be constructed in $O(n\cdot \degen \cdot k^2)$ time.
\end{claim}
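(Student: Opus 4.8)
The plan is to analyze the categorical product $\ExpG = \LabeledH \times G$ directly from \Def{categorical_product}, using that $|V_{\LabeledH}| = k$ is constant and that $G$ can be preprocessed into a bounded-outdegree orientation via \Fact{degen_orientation}. First I would bound the vertex count: since $V_{\ExpG} = V_H \times V_G$, we immediately get $|V_{\ExpG}| = k \cdot n = O(n \cdot \degen)$ — here I am folding the constant $k$ into the $O(\cdot)$ as the claim does, though strictly the bound is $kn$; I would state it as $O(n \cdot \degen)$ to match the claim's phrasing, noting $k$ is constant. Next, the edge count: an edge of $\ExpG$ is a pair $(\langle u,u'\rangle, \langle v,v'\rangle)$ with $(u,v)\in E_H$ and $(u',v')\in E_G$. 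So $|E_{\ExpG}| = 2|E_H|\cdot|E_G|$ if we count each undirected edge once together with its two orientations of the defining pairs; more carefully, for each of the $|E_G| = m$ edges of $G$ and each of the $|E_H| \le k^2$ edges of $H$ we generate $O(1)$ edges of the product. Since $G$ has degeneracy $\degen$, $m \le \degen \cdot n$, giving $|E_{\ExpG}| = O(n \cdot \degen \cdot k^2)$.

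For the construction time, I would first compute the degeneracy orientation of $G$ in $O(n+m) = O(n\degen)$ time (\Fact{degen_orientation}), or simply iterate over the adjacency list of $G$. Then I would loop over every edge $(u',v') \in E_G$ (there are $m \le \degen n$ of them) and every edge $(u,v) \in E_H$ (there are $|E_H| \le \binom{k}{2} = O(k^2)$ of them), and for each such pair emit the product edges $(\langle u,u'\rangle,\langle v,v'\rangle)$ and $(\langle v,u'\rangle,\langle u,v'\rangle)$ into the edge list of $\ExpG$; simultaneously I would populate the label function $L_{\ExpG}(\langle u,v\rangle) = u$, which is defined in $O(1)$ per vertex over the $kn$ vertices. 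This is $O(m \cdot k^2) = O(n \cdot \degen \cdot k^2)$ total work, dominating the $O(kn)$ vertex/label setup, so the overall running time is $O(n \cdot \degen \cdot k^2)$ as claimed.

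The only mild subtlety — and the step I would be most careful about — is bookkeeping the vertex identifiers so that the adjacency structure of $\ExpG$ supports constant-time neighbor access later (needed for the fraternal-extension and Bressan steps); this is handled by indexing each product vertex $\langle u, v\rangle$ by the pair $(u, \text{id}(v))$ in an array of size $k \cdot n$, which is built in $O(kn)$ time. Everything else is a routine unpacking of the product definition together with the degeneracy bound $m = O(\degen n)$, so no real obstacle arises; I would simply write out the counting and the loop and observe the time bound.
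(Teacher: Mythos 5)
Your proposal is correct and follows essentially the same route as the paper: count the $kn$ vertices of the product, bound the edges by $|E_G|\cdot|E_H| \leq (\degen n)\cdot k^2$, and observe the construction is a direct double loop over edge pairs, giving $O(n\cdot\degen\cdot k^2)$ time. The extra bookkeeping you mention (degeneracy orientation, vertex indexing) is not needed for this claim itself, but it does no harm.
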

\begin{proof}
	We can construct the vertices and assign the labels in $O(n\cdot k)$ time. We will have at most $n \cdot \degen$ edges in $E_G$, and $k^2$ edges in $E_H$, hence we will take at most $O(n\cdot \degen \cdot k^2)$ to generate $E_{\ExpG}$. The total complexity will be $O(n\cdot \degen \cdot k^2)$.
\end{proof}

We can show that the number of homomorphisms from $H$ to $G$ is equivalent to the number of homomorphisms from $\LabeledH$ to $\ExpG$:

\begin{claim} \label{clm:hom_equivalence_1}
	\[
	\Hom{G}{H} = \Hom{\ExpG}{\LabeledH}
	\]
\end{claim}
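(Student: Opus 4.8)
The plan is to exhibit an explicit bijection between $\Phi(H, G)$, the set of homomorphisms from $H$ to $G$, and $\Phi(\LabeledH, \ExpG)$, the set of homomorphisms from the labeled graph $\LabeledH$ to the labeled graph $\ExpG$. Recall that $\ExpG = H \times G$ (categorical product) with a label function recording the $H$-coordinate, and $\LabeledH$ is $H$ with each vertex labeled by itself. Given a homomorphism $\phi: V_H \to V_G$, I would define $\psi: V_H \to V_{\ExpG}$ by $\psi(u) = \langle u, \phi(u)\rangle$. The point is that $\psi$ is essentially the ``graph of $\phi$'' embedded into the product, and the labeling constraint in $\ExpG$ forces exactly this diagonal form.

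The key steps, in order, are as follows. First, I would check that $\psi$ is well-defined as a homomorphism of labeled graphs: for an edge $(u, u') \in E_H$, we have $(\phi(u), \phi(u')) \in E_G$ since $\phi$ is a homomorphism, and $(u, u') \in E_H$ trivially, so by the definition of the categorical product $(\langle u, \phi(u)\rangle, \langle u', \phi(u')\rangle) \in E_{\ExpG}$; and the label condition holds because $L_{\ExpG}(\langle u, \phi(u)\rangle) = u = L_{\LabeledH}(u)$. Second, I would define the inverse map: given any label-preserving homomorphism $\psi: V_{\LabeledH} \to V_{\ExpG}$, write $\psi(u) = \langle a(u), b(u)\rangle$; the label condition $L_{\ExpG}(\psi(u)) = L_{\LabeledH}(u) = u$ forces $a(u) = u$, so $\psi(u) = \langle u, b(u)\rangle$ for a function $b: V_H \to V_G$. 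Then for $(u,u') \in E_H$, the fact that $(\langle u, b(u)\rangle, \langle u', b(u')\rangle) \in E_{\ExpG} = E_{H\times G}$ means, by the product definition, both $(u, u') \in E_H$ and $(b(u), b(u')) \in E_G$; the second gives that $b$ is a homomorphism $H \to G$. Third, I would observe $\phi \mapsto \psi$ and $\psi \mapsto b$ are mutually inverse, hence a bijection, so the counts are equal.

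I do not expect any serious obstacle here — the statement is essentially immediate from unwinding the definitions of the categorical product and of the label functions. The only thing requiring a moment of care is the direction from $\ExpG$ back to $G$: one must use that membership in $E_{H \times G}$ requires the $H$-coordinate pair to be an edge of $H$ \emph{and} the $G$-coordinate pair to be an edge of $G$ simultaneously (not merely one of them), which is exactly what pins down $b$ as a genuine homomorphism rather than an arbitrary map. I would state the bijection, verify the two homomorphism conditions and the label condition in each direction, note mutual inversion, and conclude $\Hom{G}{H} = |\Phi(H,G)| = |\Phi(\LabeledH, \ExpG)| = \Hom{\ExpG}{\LabeledH}$.
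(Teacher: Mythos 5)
Your proposal is correct and follows essentially the same route as the paper: both construct the diagonal bijection $\phi \mapsto (u \mapsto \langle u,\phi(u)\rangle)$ and its inverse via projection onto the $G$-coordinate. If anything, your write-up is slightly more careful than the paper's, since you explicitly invoke the label-preservation condition to force $\psi(u) = \langle u, b(u)\rangle$ and note mutual inversion, both of which the paper leaves implicit.
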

\begin{proof}
	We show that there is a bijection between the homomorphisms from $H$ to $G$ and from $\LabeledH$ to $\ExpG$:
	
	\begin{itemize}
		\item Consider a homomorphism $\phi$ from $H$ to $G$, it will map the vertex $u$ to $\phi(u)$, we can create a homomorphism $\phi'$ from $\LabeledH$ to $\ExpG$ by mapping $u$ to the vertex $\langle u,\phi(u) \rangle$ in $V_{\ExpG}$ for all $u\in V_H$. If there is an edge in $H$ between $u_i, u_j$ we will have that there is also an edge between $\phi(u_i)$ and $\phi(u_j)$ in $G$, that implies by construction the existence of the arc $(\langle u_i,\phi(u_i)\rangle,\langle  u_j,\phi(u_j)\rangle)$ in $\ExpG$, and hence $\phi'$ will be a valid homomorphism.
		
		\item 	Similarly, given a homomorphism $\phi'$ from $\LabeledH$ to $\ExpG$ we can obtain a homomorphism $\phi$ from $H$ to $G$ by setting $\phi(u) = v : \phi'(u)=\langle u,v\rangle$. Again we need to show that this is a valid homomorphism: let $u_i,u_j$ be two vertices in $H$, we have $\phi(u_i) = v_i = v : \phi'(u_i)=\langle u_i, v\rangle$ and $\phi(u_j) = v_j = v : \phi'(u_j)=\langle u_j, v\rangle$.
		If there is an edge in $H$ between $u_i$ and $u_j$ we will have that there is also an edge between $\langle u_i,v_i \rangle$ and $\langle u_j,v_j \rangle$ in $\ExpG$, by construction that is only possible if there was also an edge between $v_i$ and $v_j$ in $G$ and hence $\phi$ preserves the edge.
	\end{itemize} 
\end{proof}

Therefore, we have proven that we can create these labeled graphs $\LabeledH$ and $\ExpG$ in linear time, and use them to count the homomorphisms instead. If we look at $\LabeledH$ we will have that $\LICL(\LabeledH) = \LICL(H)$ as the only difference between these graphs are the labels, which do not influence the structure of the graph. However, in order to be able to use these graphs we need to show a similar property for the input graph $G$ and its labeled version $\ExpG$, in this case, we need to be able to preserve the grad $\grad$, at least by a polynomial factor. In order to prove this we will use the following proposition from \cite{NeMe-book}:

\begin{proposition} [Prop. 4.6 \cite{NeMe-book}] \label{prop:grad}
	Let $G$ be a graph, let $p \geq 2$ be a positive integer and let $r$ be a half-integer. Then
	\[
	\topgrad_r(G \bullet K_p) \leq \max(2r(p-1)+1,p^2)\topgrad_r(G)+p-1
	\]
\end{proposition}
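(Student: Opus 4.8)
The plan is to bound the edge density of an arbitrary topological minor $G' \in \topminor{G \bullet K_p}{r}$ by projecting a witness for it back onto $G$ along the blow-up. View $G \bullet K_p$ as the graph obtained from $G$ by replacing each vertex $v$ by a clique $B_v$ ("blob") on $p$ vertices and making $B_u$ completely adjacent to $B_v$ whenever $uv \in E(G)$; let $\pi$ collapse every blob $B_v$ to the single vertex $v$. Fix a witness of $G' \in \topminor{G\bullet K_p}{r}$, consisting of a branch-vertex set $X$ (in bijection with $V(G')$) together with an internally vertex-disjoint family of paths $\{P_e : e \in E(G')\}$, each of length at most $2r+1$ and with internal vertices outside $X$. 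Set $N = |V(G')| = |X|$; since each blob contains at most $p$ branch vertices, at most $N$ blobs are occupied, and $|\pi(X)| \le N$.

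I first split $E(G')$ into \emph{internal} edges (whose path $P_e$ joins two branch vertices in the same blob) and \emph{crossing} edges (the rest). Since a blob induces a $K_p$ and $G'$ is simple, a blob carrying $a$ branch vertices contains at most $\binom{a}{2} \le \tfrac{p-1}{2}a$ internal edges; summing over blobs with $\sum_v a_v = N$ bounds the internal edges by $\tfrac{p-1}{2}N$, which is absorbed into the additive term $(p-1)N$ of the target inequality. It then remains to bound the number of crossing edges by $\max(2r(p-1)+1,p^2)\cdot\topgrad_r(G)\cdot N$.

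For a crossing edge $e$, the path $P_e$ projects under $\pi$ to a walk in $G$ between two distinct vertices of $\pi(X)$ of length at most $2r+1$; extract from it a simple sub-path $Q_e$ with the same endpoints (after a routine cleanup so that the interior of $Q_e$ avoids $\pi(X)$, which only shortens paths). Two bounds drive the argument: (i) because $G \bullet K_p$ has only $p$ copies of each vertex and the $P_e$'s are internally vertex-disjoint, every vertex of $G$ is an interior vertex of at most $p$ of the $Q_e$'s; (ii) because $G'$ is simple and each blob holds at most $p$ branch vertices, at most $p^2$ crossing edges of $G'$ join any fixed pair of blobs. Build the conflict graph on the crossing edges, with $e \sim e'$ whenever $Q_e$ and $Q_{e'}$ share an interior vertex \emph{or} $e$ and $e'$ join the same pair of blobs; by (i) the first condition contributes degree at most $2r(p-1)$ and by (ii) the second contributes degree at most $p^2-1$, so the conflict graph is properly colorable with $O(\max(2r(p-1)+1,p^2))$ colors. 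Each color class is an internally vertex-disjoint family of paths of length $\le 2r+1$ in $G$ whose contraction is a \emph{simple} graph on a subset of $\pi(X)$, hence a member of $\topminor{G}{r}$, and therefore has at most $\topgrad_r(G)\cdot N$ edges. Summing over the color classes bounds the crossing edges; combining with the internal-edge count and dividing by $N$ yields an inequality of the stated shape.

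The step I expect to be the real obstacle is pushing the coefficient of $\topgrad_r(G)$ down to \emph{exactly} $\max(2r(p-1)+1,p^2)$ rather than the sum — or, with the cruder "pick one representative path per edge and multiply by $p^2$" route, the product — of the "long-path interference" factor $2r(p-1)+1$ and the "multiplicity" factor $p^2$. Getting the clean $\max$ requires a sharper accounting of how these two effects interact (essentially because the paths that force the coloring cost are not the same as the ones that force the multiplicity cost), and is the quantitative heart of Proposition 4.6 in \cite{NeMe-book}. The remaining ingredients — that a witness can be taken with subdivision vertices disjoint from branch vertices, that $\topgrad_0(G) \le \topgrad_r(G)$ so that the length-one crossing edges are also absorbed into the $\topgrad_r(G)$ term, and that projecting a path can only shorten it — are routine.
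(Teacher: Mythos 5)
First, note that the paper does not prove this statement at all --- it is imported verbatim as Proposition 4.6 of \cite{NeMe-book} --- so there is no in-paper argument to compare against; what matters is whether your reconstruction actually establishes the stated inequality. It does not, and you essentially say so yourself. Your conflict graph has maximum degree at most $2r(p-1) + (p^2-1)$, so greedy coloring yields $2r(p-1)+p^2$ color classes and hence a multiplicative constant equal to the \emph{sum} of the two factors, not their $\max$. Deferring "the quantitative heart" of the proposition as an expected obstacle means the proposition as stated remains unproven; a complete argument has to treat the length-one crossing edges (which have no interior and suffer only the $p^2$ multiplicity) and the longer paths (which have interiors) in a way that lets the two costs be charged against the same budget rather than added. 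To be fair, every use of \Prop{grad} in this paper (in \Lem{labeled_grad}, \Lem{main_grad}, and \Lem{generalization_4_2}) only needs the qualitative consequence that bounded $\topgrad_r(G)$ implies bounded $\topgrad_r(G \bullet K_p)$, so your weaker constant would suffice downstream --- but it is not the claimed bound.

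There is also a concrete error in a step you label routine: the "cleanup so that the interior of $Q_e$ avoids $\pi(X)$, which only shortens paths." Shortening cannot achieve this. If $P_e$ runs from $B_u$ to $B_w$ and passes through a non-branch vertex of a blob $B_v$ that also contains a branch vertex, then every sub-walk of $\pi(P_e)$ joining $u$ to $w$ may be forced through $v \in \pi(X)$; you cannot reroute around a vertex the projected walk genuinely visits, and truncating at $v$ changes the endpoints and hence the minor. This matters because a color class whose paths have interior vertices coinciding with branch vertices of that class is not a valid $\le r$-subdivision, so the appeal to $\topgrad_r(G)$ fails for that class. The repair is not a coloring condition (the number of edges $e'$ with an endpoint projecting to a given $v\in\pi(X)$ is unbounded, so the conflict degree blows up) but a separate count: each $v \in \pi(X)$ lies in a blob with at least one branch vertex, so $v$ is interior to at most $p-1$ of the projected paths, and discarding every crossing edge whose projected interior meets $\pi(X)$ costs at most $(p-1)|\pi(X)| \le (p-1)N$ additional edges, which can be folded into the additive term. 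With that fix your argument gives an inequality of the correct shape, but still with $2r(p-1)+p^2$ in place of $\max(2r(p-1)+1,p^2)$ and a worse additive constant.
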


We can now prove the following lemma:

\begin{lemma} \label{lem:labeled_grad}
	Let $G$ be an input graph, and $H$ be a pattern graph with constant size $k$. For any constant $i$, $\grad_i(G)$ is bounded if and only if $\grad_i(\ExpG)$ is bounded.
\end{lemma}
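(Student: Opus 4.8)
The plan is to relate $\ExpG$ to a lexicographical product of $G$, so that \Prop{grad} and \Fact{grad_topgrad} become applicable. Recall that $\ExpG$ is built as the categorical product $H \times G$ (with labels added, which do not affect the underlying graph structure). The key structural observation is that $H$ is a subgraph of the complete graph $K_k$ on $k = |V_H|$ vertices, and hence $H \times G$ is a subgraph of $K_k \times G$. By \Fact{products}, $K_k \times G = G \bullet \bar{K_k}$. Since $\bar{K_k}$ is an independent set, $G \bullet \bar{K_k}$ is in turn a subgraph of $G \bullet K_k$. Therefore $\ExpG$ (ignoring labels) is a subgraph of $G \bullet K_k$.

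Given this, the first direction is easy. Suppose $\grad_i(G)$ is bounded. By \Fact{grad_topgrad} it is equivalent to work with top-grads, so $\topgrad_i(G)$ is bounded. Applying \Prop{grad} with $p = k$ (a constant) and $r = i$, we get that $\topgrad_i(G \bullet K_k) \leq \max(2i(k-1)+1, k^2)\topgrad_i(G) + k - 1$, which is bounded since $i$ and $k$ are constants. Since $\ExpG$ is a subgraph of $G \bullet K_k$ and top-grads (like grads) are monotone under taking subgraphs, $\topgrad_i(\ExpG) \leq \topgrad_i(G \bullet K_k)$ is bounded. One more application of \Fact{grad_topgrad} gives that $\grad_i(\ExpG)$ is bounded.

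For the converse, suppose $\grad_i(\ExpG)$ is bounded. Here I would exhibit $G$ itself (up to isomorphism) as a shallow minor — indeed a subgraph — of $\ExpG$ at depth $0$. Concretely, fix any vertex $u_0 \in V_H$; then the map $v \mapsto \langle u_0, v\rangle$ need not preserve edges (since $(u_0,u_0) \notin E_H$), so instead I would pick an edge $(u_0, u_1) \in E_H$ (assuming $H$ has at least one edge; if $H$ is edgeless the statement is trivial since then both $G$ and $\ExpG$ are edgeless) and map $v \mapsto \langle u_0, v\rangle$ for $v$ in one ``copy'' — actually the cleanest route is: the set of vertices $\{\langle u_0, v\rangle : v \in V_G\} \cup \{\langle u_1, v\rangle : v \in V_G\}$ induces in $\ExpG$ a graph containing a copy of the bipartite double cover, but more simply, contracting each pair $\{\langle u_0, v\rangle, \langle u_1, v\rangle\}$ (a connected pair, via... hmm, they are not adjacent). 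The genuinely clean argument: $G$ is a shallow minor of $K_k \times G = G \bullet \bar K_k \supseteq \ExpG$? That's the wrong containment direction. So instead I would argue directly that $G$ is a depth-$0$ minor of $\ExpG$ by taking, for each $v \in V_G$, the branch set $\{\langle u_0, v\rangle, \langle u_1, v\rangle\}$: these two vertices are connected in $\ExpG$ because... they are not adjacent, but $\langle u_0, v\rangle$ and $\langle u_1, v'\rangle$ are adjacent whenever $(v,v') \in E_G$. So if $G$ has no isolated vertices, each branch set is connected through a common neighbor, which ruins disjointness. The honest fix is to allow radius-$1$ branch sets and conclude $\grad_1(G) \le \grad_1(\ExpG) \cdot \poly$, but the lemma only claims equivalence of boundedness ``for any constant $i$'', and $\grad_i(G)$ bounded for one $i$ does not give it for another.

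Reconsidering, the robust approach for the converse is: $G$ is isomorphic to an \emph{induced subgraph} of $\ExpG$ after all, provided $H$ has an edge. Pick $(u_0,u_1)\in E_H$. The vertices $\langle u_0,v\rangle$ for $v\in V_G$ together with $\langle u_1, v\rangle$ for $v \in V_G$: among the $\langle u_0, \cdot\rangle$ vertices there are no edges (since $(u_0,u_0)\notin E_H$). So take instead the ``diagonal-shift'' set $D = \{\langle u_0,v\rangle, \langle u_1, v\rangle : v\}$ and contract each $\langle u_0,v\rangle$ alone — no. Cleanest honest statement: map $v \mapsto \langle u_0, v\rangle$; then $(v,v')\in E_G \iff (\langle u_0,v\rangle,\langle u_1,v'\rangle)\in E_{\ExpG}$, so $G$ embeds into $\ExpG$ as a subgraph of the bipartite graph between the $u_0$-layer and $u_1$-layer via $v \sim \langle u_0,v\rangle$ and separately $v\sim\langle u_1,v\rangle$ — i.e.\ the disjoint union of two copies of the layers realizes $G$'s adjacency across layers. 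Formally, the subgraph of $\ExpG$ induced on $\{\langle u_0,v\rangle: v\in V_G\}\cup\{\langle u_1,v\rangle:v\in V_G\}$, with each $v$ represented by the branch set $\{\langle u_0,v\rangle,\langle u_1,v\rangle\}$, \emph{is} a valid depth-$1$ minor model (radius $1$ suffices since the two vertices of a branch set, when $v$ is non-isolated, connect via an edge to \emph{some} vertex — but they must connect \emph{within} the branch set). I will therefore simply invoke the known fact (immediate from the definitions, see \cite{NeMe-book}) that $G$ is a topological minor, indeed a subdivision-free minor, of $G\bullet \bar K_2 \subseteq \ExpG$ at depth $0$ up to isomorphism — OR, and this is what I would actually write, observe that $\grad_i$ is monotone under subgraphs and that $G$ itself is a subgraph of $\ExpG$ whenever $H$ contains a triangle/edge configuration allowing a diagonal embedding; if not, replace $H$ by $H' = H$ with one extra pendant and note $\ExpG$ only grows. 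The main obstacle, then, is precisely pinning down the converse direction cleanly: the forward direction is a two-line application of \Prop{grad}, but showing ``$\grad_i(\ExpG)$ bounded $\Rightarrow$ $\grad_i(G)$ bounded'' requires carefully realizing $G$ inside $\ExpG$ at depth $0$ (or noting that $\ExpG$ retracts onto $G$), and handling the degenerate cases where $H$ has few edges. I expect the write-up to dispatch the converse by exhibiting an explicit depth-$0$ minor model of $G$ within $\ExpG$ and invoking monotonicity of $\grad_i$ under shallow minors.
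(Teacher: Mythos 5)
Your forward direction ($\grad_i(G)$ bounded $\Rightarrow$ $\grad_i(\ExpG)$ bounded) is exactly the paper's argument: $\ExpG = H\times G \subseteq G\times K_k = G\bullet \bar K_k \subseteq G\bullet K_k$ via \Fact{products}, then \Prop{grad}, then \Fact{grad_topgrad}. That half is complete and correct.

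The converse is where your proposal has a genuine gap: after several abandoned attempts (branch sets $\{\langle u_0,v\rangle,\langle u_1,v\rangle\}$ that induce no edge, the bipartite double cover, a depth-$1$ minor model that would change the value of $i$) you never produce a valid argument, and you close by deferring to ``an explicit depth-$0$ minor model'' that you do not construct. For comparison, the paper dispatches this direction with the single assertion $G\subseteq \ExpG$ --- which, as your own analysis correctly senses, is false in general: for $H=K_2$ and $G=K_3$ one has $\ExpG=K_2\times K_3\cong C_6$, which contains no triangle, so $G$ is not a subgraph of $\ExpG$; the same issue arises for any non-bipartite $G$, since the induced subgraph of $H\times G$ on two adjacent $H$-layers is the bipartite double cover of $G$. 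So the obstruction you kept hitting (a branch set $\{\langle u_0,v\rangle,\langle u_1,v\rangle\}$ is an independent set, hence not a legal bush) is real, and neither you nor the paper resolves it; a correct treatment would, e.g., lift dense depth-$i$ minors of $G$ to dense depth-$O(i)$ minors of the double cover, or simply restate the lemma as the one implication that is actually needed. Indeed, only the forward implication is used in the proof of \Thm{upper}, so the gap is harmless for the main results --- but as written your proposal establishes only one direction of the stated equivalence.
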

\begin{proof}
	First, we have that $G \subseteq \ExpG$, hence $\grad_i(G)<\grad_i(\ExpG)$ and if $\grad_i(\ExpG)$ is bounded so will $\grad_i(G)$. We  show the other direction: note that $\ExpG$ will be a subgraph of $G \bullet K_k$:
	\[
	\ExpG = G \times H \subseteq G \times K_k = G \bullet \bar{K}_k \subseteq G \bullet K_k
	\]
	Where the second equality comes from \Fact{products}.
	
	Lastly, using \Prop{grad} we have that if $\topgrad_i(G)$ is bounded so will $\topgrad_i(G \bullet K_k)$, and hence so will $\topgrad_i(\ExpG)$. Combining this with \Fact{grad_topgrad} we get that $\grad_i(\ExpG)$ will be bounded if $\grad_i(G)$ is bounded.
\end{proof}
\begin{corollary}
	$G$ has bounded degeneracy if and only if $\ExpG$ has bounded degeneracy.
\end{corollary}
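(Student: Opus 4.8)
The plan is to obtain this as an immediate consequence of \Lem{labeled_grad}, specialized to $i = 0$, together with the standard fact that bounded degeneracy is the same condition as bounded $\grad_0$. Recall from \Sec{intro} that a depth-$0$ shallow minor of a graph is simply a subgraph, so $\grad_0(G)$ is exactly the maximum average degree over all subgraphs of $G$; and by Theorem $4$ of \cite{Se23} this quantity is within a constant factor of the degeneracy, i.e. $\grad_0(G) \le \degen(G) \le 2\,\grad_0(G)$. Consequently $G$ has bounded degeneracy if and only if $\grad_0(G)$ is bounded, and the same statement holds verbatim for $\ExpG$.

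First I would note that $0$ is a legal value of the parameter in \Lem{labeled_grad} (it is a non-negative half-integer), so the lemma gives: $\grad_0(G)$ is bounded if and only if $\grad_0(\ExpG)$ is bounded. Chaining this with the two degeneracy-vs-$\grad_0$ equivalences from the previous paragraph yields the corollary. The forward direction ($\ExpG$ bounded degeneracy $\Rightarrow$ $G$ bounded degeneracy) is in any case trivial since $G \subseteq \ExpG$; the content is the reverse direction, which is exactly what \Lem{labeled_grad} supplies through the reduction $\ExpG = G \times H \subseteq G \bullet K_k$ and \Prop{grad}.

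There is essentially no obstacle here — the corollary is genuinely a one-line deduction — but the one point worth a sentence of care is that the proof of \Lem{labeled_grad} routes through \Prop{grad}, which is stated for half-integers $r$, and at $r = 0$ the bound there reads $\topgrad_0(G \bullet K_p) \le p^2\,\topgrad_0(G) + p - 1$ (the $\max(2r(p-1)+1, p^2)$ factor degenerates to $p^2$), so the argument still applies. Hence no additional work beyond invoking \Lem{labeled_grad} with $i=0$ is needed.
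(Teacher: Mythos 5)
Your proposal is correct and matches the paper's (implicit) argument: the corollary is stated as an immediate consequence of \Lem{labeled_grad} with $i=0$, combined with the standard equivalence between bounded $\grad_0$ and bounded degeneracy. Your added remarks about the triviality of one direction and the behavior of \Prop{grad} at $r=0$ are fine but not needed.
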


\section{Fraternal Extensions} \label{sec:fraternal}

In this section we formally introduce our augmentation procedure and define the concept of fraternal extensions and how to construct them efficiently. We will also show some properties of the fraternal extensions and prove the relation between the homomorphisms from $H$ to $G$ and the ones of their fraternal extensions.

\subsection{The Fraternal Extension Procedure}

We start by formally defining the fraternal extension of a graph:

\begin{definition} 
	\textbf{(Fraternal Extension)} Given a directed graph $\vec{G} = (V_{\vec{G}},E_{\vec{G}})$ we say that the directed weighted graph $\vec{G}'= (V_{\vec{G'}},E_{\vec{G'}}, W_{\vec{G'}})$ is a $t$-fraternal extension of $\vec{G}$ if $\vec{G}'$ forms a $t$-fraternity function and $E_{\vec{G}} = \{e \in E_{\vec{G'}} : W_{\vec{G'}}(e)=1\}$. 
	
	If $G$ is an undirected graph, we say that $\vec{G}'$ is a $t$-fraternal extension of $G$ if it is a $t$-fraternal extensions of some orientation $\vec{G}$ of $G$. The orientations of $G$ with unit weights are $1$-fraternal extensions of $G$.
\end{definition}

Abusing notation, given a directed weighted graph $\vec{G}'$ that forms $t$-fraternity function we say that the directed weighted graph $\vec{G}''$ is a $t'$-fraternal extension of $\vec{G}'$ if $\vec{G}''$ forms a $t'$-fraternity function and $\forall i \in (1,t), \vec{G}'_i = \vec{G}''_i$.

Note that all these definitions can be extended to labeled graphs.

We can construct fraternal extensions of a graph efficiently using a recursive procedure. The initial step is to give unit weights to every edge in the graph, each orientation of the resultant graph will be a different fraternal extension. Then, for $i \in [2,t]$, for every out-out wedge with combined weight is $i$ we add an undirected arc connecting it endpoints with weight $i$. Each orientation of the undirected edges will generate a distinct fraternal extension. The orientation of the undirected edges in each step will be different if we are performing this procedure on the pattern or on the input graph, as we will see in following subsections. We summarize the extension procedure in \Alg{extension}:

\begin{algorithm}
	\hspace*{\algorithmicindent} \textbf{Input:} \\
	\hspace*{6ex}-Directed graph $\vec{G}$ (should be a $(t-1)$-fraternal extension)\\
	\hspace*{6ex}-Integer $t$\\
	\hspace*{\algorithmicindent} \textbf{Output:}  \\
	\hspace*{6ex}-Set of edges $E^t$
	\begin{algorithmic}[1]
		\State Let $E^t=\emptyset$.
		\For{each out-out wedge $(u,v,w)$ in $\vec{G}$}
		\If{$W_{\vec{G}}(v,u) + W_{\vec{G}}(v,w) = t$ and $(u,w) \notin E_{\vec{G}}$ and $(w,u) \notin E_{\vec{G}}$}
		\State Add $(u,w)$ to $E^t$.
		\EndIf
		\EndFor
		\State Return $E^t$
	\end{algorithmic}
	\caption{\Extension($\vec{G}$,$t$)
	 }\label{alg:extension}
\end{algorithm}

We can prove that the fraternal extensions will connect the endpoints of out-out wedges.

\begin{claim} \label{clm:short-out-wedge}
	If a graph $\vec{G}'$ is a $t$-fraternal extension of a graph $\vec{G}$, then for every out-out wedge $(u,v,w)$ in $\vec{G}$ with $W_{\vec{G}}((v,u)) + W_{\vec{G}}(v,w) \leq t$ we have that there is an edge connecting $u$ and $w$ in $\vec{G}'$ of weight at most $t$.
\end{claim}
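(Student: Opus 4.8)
The plan is to induct on $t$, tracking the recursive structure of the fraternal extension procedure (\Alg{extension}). The base case $t=1$ is vacuous: a $1$-fraternal extension of $\vec G$ is just an orientation of $\vec G$ with unit weights, and the only out-out wedges $(u,v,w)$ with $W_{\vec G}((v,u))+W_{\vec G}((v,w))\le 1$ are degenerate (such a sum is at least $2$), so there is nothing to prove. For the inductive step, suppose the claim holds for $(t-1)$, and let $\vec G'$ be a $t$-fraternal extension of $\vec G$. By definition $\vec G'$ restricted to edges of weight at most $t-1$ is a $(t-1)$-fraternal extension of $\vec G$; call it $\vec G''$. Fix an out-out wedge $(u,v,w)$ in $\vec G$ with $s := W_{\vec G}((v,u))+W_{\vec G}((v,w)) \le t$.

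First I would handle the case $s \le t-1$: then the inductive hypothesis applied to $\vec G''$ already gives an edge between $u$ and $w$ of weight at most $t-1 \le t$ in $\vec G''$, hence in $\vec G'$, and we are done. So assume $s = t$. If $u$ and $w$ are already adjacent in $\vec G''$, i.e.\ there is an edge of weight at most $t-1$ between them, again we are done. Otherwise $(u,w)\notin E_{\vec G''}$ and $(w,u)\notin E_{\vec G''}$, so $(u,v,w)$ is exactly an out-out wedge of combined weight $t$ whose endpoints are non-adjacent, and the procedure \Extension$(\vec G'', t)$ adds an undirected edge between $u$ and $w$ of weight $t$. Since $\vec G'$ is a $t$-fraternal extension, its weight-$t$ edge set is (an orientation of) this output $E^t$, so $\vec G'$ contains an edge connecting $u$ and $w$ of weight exactly $t \le t$, as required.

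The one subtlety — and the main thing to get right rather than a genuine obstacle — is to argue directly from the \emph{fraternity function} definition (\Def{frat_function}) rather than only from the informal procedure, so that the claim holds for \emph{any} $t$-fraternal extension $\vec G'$ and not just the canonical one produced by \Alg{extension}. Concretely: let $\omega$ be the $t$-fraternity function with $\vec G' = \vec G^\omega$, and fix the out-out wedge $(u,v,w)$ with $W_{\vec G}((v,u))+W_{\vec G}((v,w)) = s \le t$. Because weights are inherited on the lower-weight layers, $\omega(v,u) \le W_{\vec G}((v,u))$ and $\omega(v,w)\le W_{\vec G}((v,w))$ once these are at most $t$ (using that $\vec G$ is recovered as the weight-$1$ layer and, more generally, that $\vec G'_i$ agrees across extensions for $i<t$); hence $\min_{z\ne u,w}\big(\omega(z,u)+\omega(z,w)\big) \le \omega(v,u)+\omega(v,w) \le s \le t$. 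By the trichotomy in \Def{frat_function}, this rules out the third case, so $\min(\omega(u,w),\omega(w,u))$ equals either $1$ or $\min_{z}(\omega(z,u)+\omega(z,w)) \le s \le t$; in either case $\min(\omega(u,w),\omega(w,u)) \le t$, which by definition of $\vec G^\omega$ means there is an edge between $u$ and $w$ in $\vec G'$ of weight at most $t$. Writing the chain of weight inequalities carefully (distinguishing whether the relevant quantities have already ``saturated'' past $t$) is the only place that needs attention; everything else is bookkeeping.
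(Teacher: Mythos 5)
Your proposal is correct, and the decisive argument in your final paragraph --- invoking the $t$-fraternity function $\omega$ with $\vec G'=\vec G^\omega$, observing that $\min_{z}\bigl(\omega(z,u)+\omega(z,w)\bigr)\le \omega(v,u)+\omega(v,w)\le t$ rules out the third case of \Def{frat_function}, and concluding $\min(\omega(u,w),\omega(w,u))\le t$ in the remaining two cases --- is exactly the paper's proof. The inductive preamble tracking \Alg{extension} is redundant (and, as you note yourself, would only cover the canonical extension), but it does no harm.
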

\begin{proof}
	If $\vec{G}'$ is a $t$-fraternal extension of a graph $\vec{G}$ then by the definition of fraternal extension we have that $\vec{G}'$ must form a $t$-fraternity function, thus there is a $t$-fraternity function $\omega: \mathcal{V} \times \mathcal{V} \to \mathbb{N} \cup \{\infty\}$ such that $\vec{G}' = G^\omega$. Let $(u,v,w)$ be an out-out wedge in $\vec{G}$ with $W_{\vec{G}}((v,u)) + W_{\vec{G}}(v,w) \leq t$.
	
	Abusing notation we will refer with $u,v,w$ to the correspondent elements in $\cV$. Because $\omega$ is a $t$-fraternity function we have by the definition of fraternity function (\Def{frat_function})that for $u$ and $w$, either:
	\begin{itemize}
		\item $\min(\omega(u,w),\omega(w,u))=1$, in which case there is an edge of weight $1$ connecting the two vertices.
		\item $\min(\omega(u,w),\omega(w,u)) = \min_{z\in \mathcal{V}\setminus\{u,w\}} \omega(z,u) + \omega(z, w)$, for $z=v$ we have $\omega(z,u) + \omega(z, w) = W_{\vec{G}}((v,u)) + W_{\vec{G}}(v,w) \leq t$, and hence we have that there will be an edge of weight at most $t$ connecting $u$ to $w$ or $w$ to $u$.
		\item $\min(\omega(x,y),\omega(y,x)) > t$ and $\min_{z\in \mathcal{V}\setminus\{x,y\}} \omega(z,x) + \omega(z, y)> t$, which is not possible as for $z= v$ we have $\omega(z,u) + \omega(z, w) = W_{\vec{G}}((v,u)) + W_{\vec{G}}(v,w) \leq t$.
	\end{itemize}
In all the cases we either reach to a contradiction or we prove the existence of the edge, completing the proof.
\end{proof}

\subsection{The Fraternal Extensions of $H$} \label{subsec:frat_h}

In the case of the pattern graph $H$, we are interested in generating all the possible $t$-fraternal extensions of its labeled version $\LabeledH$ at an specific depth $t$. We will use $\Frat{H}{t}$ to denote such collection of graphs. We define it as follows:

\begin{definition} 
	\textbf{$\Frat{H}{t}$} Given a pattern graph $H$ We call $\Frat{H}{t}$ to the collection of directed weighted and labeled graphs $\vec{H}'$ such that $\vec{H}'$ is a $t$-fraternal extension of any acyclical orientation of the labeled graph $\LabeledH$.
\end{definition}

\begin{figure}[H]
	\centering
	\includegraphics[width=\textwidth*3/4]{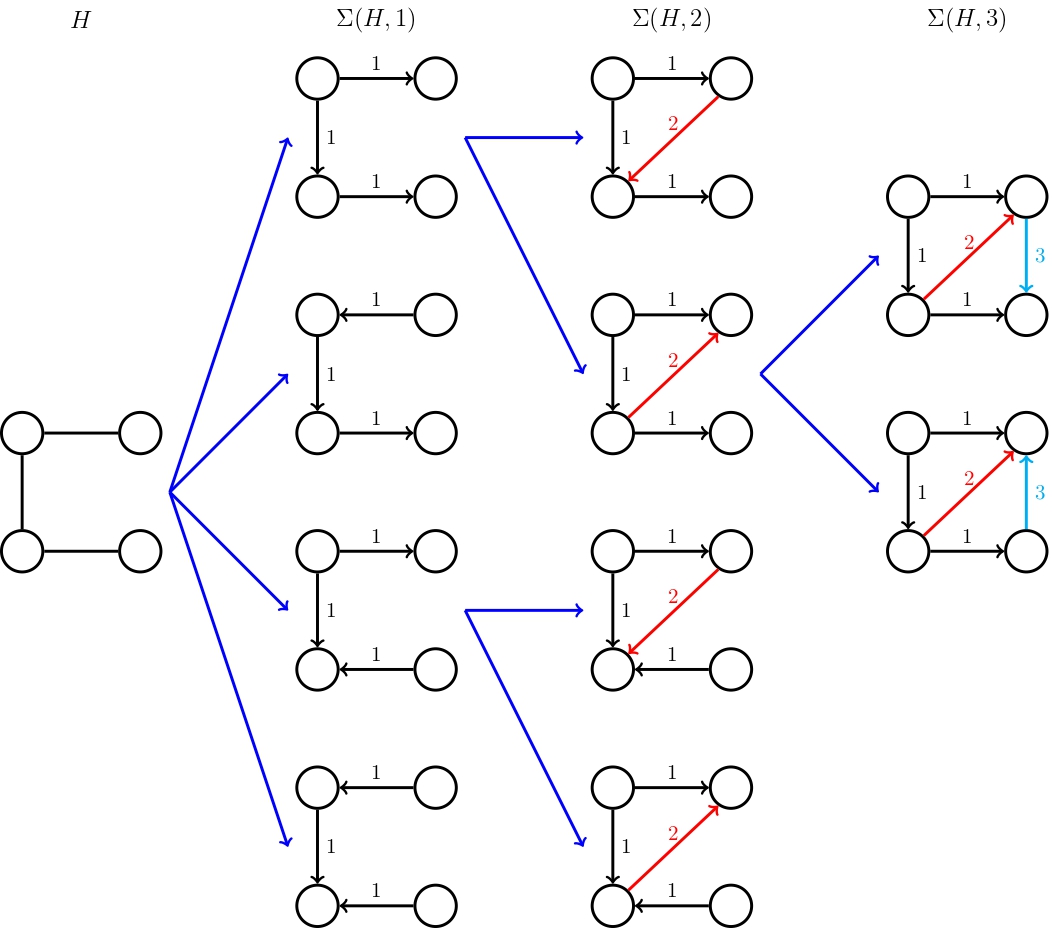}%
	\caption{An example of how the fraternal extensions of a graph $H$ are generated, the figure is not showing all the possible fraternal extensions. $\Frat{H}{1}$ corresponds to all the possible acyclical orientations of $H$. $\Frat{H}{2}$ adds the graphs obtained by adding edges between out-out wedges in the graphs of $\Frat{H}{1}$, this may generate some new out-out wedges that will be connected by edges in the graphs in $\Frat{H}{3}$. Note that a graph in $\Frat{H}{1}$ with no out-out wedges will also be part of $\Frat{H}{2}$ and $\Frat{H}{3}$.} 
	\label{fig:branching}
\end{figure}

Note that $\Frat{H}{1}$ corresponds exactly with the collection of acyclical orientations of $H$ (adding unit weights to the edges and trivial labels to the vertices). 

Let $\vec{H}_i$ be a $i$-fraternal extension of $H$. Abusing notation, we will use $\Frat{\vec{H}_i}{t}$ for $t>i$ to denote the set of $t$-fraternal extensions $\vec{H}'$ of $H$ such that $\vec{H}'$ is also a $t$-fraternal extension of $\vec{H}_i$.

Obtaining the collection of fraternal extensions of a graph can be seen as an iterative process where we apply the extension procedure defined in \Alg{extension} to all the fraternal extensions of the previous layer and then consider all the possible orientations of these new edges, this process is summarized in \Alg{Frat}. \Fig{branching} shows an example of this process applied to a simple pattern graph.

\begin{algorithm}
	\hspace*{\algorithmicindent} \textbf{Input:} \\
	\hspace*{6ex}-Labeled Graph $\LabeledH$\\
	\hspace*{6ex}-Integer $t$\\
	\hspace*{\algorithmicindent} \textbf{Output:}  \\
	\hspace*{6ex}-$\Frat{H}{t}$
	\begin{algorithmic}[1]
		\State Set $\Frat{H}{1} = \emptyset$
		\For{each acyclical orientation $\vec{H}$ of $\LabeledH$}
			\State Add $\vec{H}$ to $\Frat{H}{1}$ with unit weights.
		\EndFor
		\For{$i \in [2,t]$} 
			\State Set $\Frat{H}{i} = \emptyset$
			\For{$\vec{H} \in \Frat{H}{i-1}$}
				\State Let $E^i =$ \Extension($\vec{H},i)$
				\For{each orientation $\vec{E}^i$ of the edges $E^i$ with weights $i$}
					\State Add $\vec{H} \cup \vec{E}^i$ to $\Frat{H}{i}$
				\EndFor
			\EndFor
		\EndFor
		\State Return $\Frat{H}{t}$
	\end{algorithmic}
	\caption{\ComputeFrat($\LabeledH$, $t$)}\label{alg:Frat}
\end{algorithm}

\subsection{The Fraternal Extensions of $\ExpG$} \label{subsec:buildingG}

We now formally define the optimal acyclic $t$-fraternal extension of $\ExpG$:

\begin{definition} 
	\textbf{(Optimal Acyclic $t$-Fraternal Extension $\MinFrat{\ExpG}{t}$)} Given a graph $\ExpG$ we use $\MinFrat{\ExpG}{t}$ to denote an optimal acyclic $t$-fraternal extension of $\ExpG$. It is defined as follows:
	\begin{itemize}
		\item $\MinFrat{\ExpG}{t}$ is a $t$-fraternal extension of $\ExpG$.
		\item For every $i \in [1,t]$ we have that the edges ${\MinFrat{\ExpG}{t}}_i$ are oriented acyclically following the degeneracy orientation of the edges with weight $i$.
	\end{itemize}
\end{definition}
Note that for a graph $F$ there are multiple graphs that will follows those properties, as the degeneracy orientation of a set of undirected edges might not be unique. We can select any arbitrary degeneracy orientation whenever it is not unique.

We can show that there is a strong relation between the $\grad_{(t-1)/2}$ of a graph $\ExpG$ and the maximum outdegree of any of its optimal acyclic $t$-fraternal extensions:

\lemmagrad*
\begin{proof}
	We prove by induction on $t$:
	
	When $t=1$, $\MinFrat{\ExpG}{1}$ corresponds with the degeneracy orientation of $\ExpG$. Because $\ExpG$ has bounded $\grad_{(t-1)/2}$, we get that it also has bounded $\grad_{0}$ and hence, bounded degeneracy. Therefore, the degeneracy orientation of $\ExpG$ will have bounded outdegree $\Delta^+(\MinFrat{\ExpG}{1})$.
	
	Now, for the inductive step, assume that for any $k<t$, $\Delta^+(\MinFrat{\ExpG}{k})$ is bounded. We prove that $\Delta^+(\MinFrat{\ExpG}{k+1})$ will also be bounded. First, we bring the following lemma from \cite{NeMe-book}:
	
	\begin{lemma} [Lemma 7.6 in \cite{NeMe-book}] \label{lem:grad_to_delta}
		Let $\mathcal{V}$ be a finite set, let $ k\geq 1$ be an integer, let $\omega$ be a $k$-fraternity function and let $\ExpG=G_1^\omega$.
		
		There exists a $(k+1)$-fraternity function $\omega'$ such that:
		\[
		\forall(x,y)\in \mathcal{V}^2, \omega(x,y) \leq k \Longrightarrow \omega'(x,y) = \omega(x,y)
		\]
		\[
		\Delta^+_{k+1}(\omega') \leq \topgrad_{k/2}(\ExpG \bullet \overline{K}_{1+N_\omega(k+1)})
		\]
	\end{lemma}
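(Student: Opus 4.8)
\noindent\emph{Proof plan.} We may first assume $\omega$ takes values in $\{1,\dots,k\}\cup\{\infty\}$ (replacing a finite value larger than $k$ by $\infty$ keeps $\omega$ a $k$-fraternity function and does not alter $\vec G^\omega_i$ for $i\le k$). The plan is to build $\omega'$ explicitly: keep $\omega'(x,y)=\omega(x,y)$ whenever $\omega(x,y)\le k$ (securing the first displayed requirement), and consider the \emph{fraternal pairs} $\{u,w\}$, namely those with $\min(\omega(u,w),\omega(w,u))=\infty$ for which some $z$ has $\omega(z,u)+\omega(z,w)=k+1$. Let $F$ be the undirected graph on $\mathcal V$ whose edges are exactly the fraternal pairs. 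For each such pair we set one of $\omega'(u,w),\omega'(w,u)$ to $k+1$ and the other to $\infty$, according to an orientation $\vec F$ of $F$ chosen below, and set every remaining value of $\omega'$ to $\infty$. Then $\vec G^{\omega'}_{k+1}=\vec F$, so $\Delta^+_{k+1}(\omega')=\Delta^+(\vec F)$, and the lemma reduces to orienting $F$ well. Checking that $\omega'$ is a $(k+1)$-fraternity function for \emph{any} choice of $\vec F$ is routine: one runs through the three alternatives of \Def{frat_function} at threshold $k+1$, splitting on the value of $\min(\omega(x,y),\omega(y,x))\in\{1,\dots,k\}\cup\{\infty\}$; the only care needed is that $\omega'\le\omega$ pointwise, so a short case check is needed to see that $\min_z\omega'(z,x)+\omega'(z,y)$ never drops below the threshold required by the applicable alternative.

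\noindent The substantive step is the choice of $\vec F$. By a classical orientation theorem (Hakimi), $F$ has an orientation with maximum out-degree at most its maximum subgraph density $\topgrad_0(F):=\max_{F'\subseteq F}|E_{F'}|/|V_{F'}|$ (up to rounding up); take this as $\vec F$. So it suffices to prove $\topgrad_0(F)\le \topgrad_{k/2}\big(\ExpG\bullet\overline{K}_{1+N_\omega(k+1)}\big)$, and for that it suffices to show
\[
F\;\in\;\topminor{\ExpG\bullet\overline{K}_{1+N_\omega(k+1)}}{k/2},
\]
since any subgraph $F'\subseteq F$ then also lies in $\topminor{\ExpG\bullet\overline{K}_{1+N_\omega(k+1)}}{k/2}$ (restrict the witnessing family of subdivision paths to the edges of $F'$), which caps $|E_{F'}|/|V_{F'}|$ by $\topgrad_{k/2}(\cdot)$.

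\noindent To build the topological-minor witness I would \emph{unfold} weighted edges into walks of $\ExpG=\vec G^\omega_1$: recursively, a weight-$1$ edge $(p,q)$ becomes the single edge $pq$, and a weight-$i$ edge $(p,q)$ with $1<i\le k$ becomes the reversal of $P(z,p)$ followed by $P(z,q)$ for a witness $z$ with $\omega(z,p)+\omega(z,q)=i$ -- a walk of length $i$ from $p$ to $q$. A fraternal pair $\{u,w\}$ with witness $z$ then unfolds to a walk $P_{uw}$ of length $k+1\le 2(k/2)+1$ from $u$ to $w$. We route each $P_{uw}$ into the blow-up $Q:=\ExpG\bullet\overline{K}_{1+N_\omega(k+1)}$ by sending the endpoints $u,w$ to their first copies and every internal occurrence of a vertex to a distinct copy of index $\ge 2$, distinct across all walks simultaneously; since consecutive walk-vertices are adjacent in $\ExpG$ they remain adjacent in $Q$ for any copy choice, so this exhibits a $(\le k/2)$-subdivision of $F$ as a subgraph of $Q$.

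\noindent The hard part is showing that $1+N_\omega(k+1)$ copies per vertex suffice, i.e.\ that each vertex $v$ occurs as an internal vertex in at most $N_\omega(k+1)$ of the walks $P_{uw}$ -- this is precisely what $N_\omega(k+1)$ is (a quantity depending only on $k$ and the lower-layer out-degrees $\Delta^+_1(\omega),\dots,\Delta^+_k(\omega)$). The idea: if $v$ is internal to $P_{uw}$, then along the recursion tree of witnesses producing $P_{uw}$ (depth $\le k$, leaf-edge weights summing to $k+1$) both $u$ and $w$ are reached from $v$ by boundedly many directed $\omega$-steps of total weight $\le k+1$; as the $i$-th type of step offers $\le\Delta^+_i(\omega)$ choices and there are $O(k)$ of them, only boundedly many pairs $\{u,w\}$ can have $v$ internal. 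Turning this into a clean charging argument -- formalizing the recursion tree of a walk, bounding its size, and charging each internal occurrence of $v$ to a bounded out-branching rooted at $v$ -- is the technical core; given the bound, a greedy copy assignment (walk by walk, occurrence by occurrence, always picking an unused copy) never runs out, since at most $N_\omega(k+1)$ internal copies of any vertex are requested and one more is reserved for branch use. Combining the valid $\omega'$, the reduction to $\topgrad_0(F)$, and the topological-minor embedding proves the lemma.
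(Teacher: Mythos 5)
The paper does not prove this statement: it is imported verbatim as Lemma~7.6 of \SparsityAuthors's book, so there is no in-paper proof to compare against. Your reconstruction follows the same strategy as the cited source: form the ``fraternity graph'' $F$ of new weight-$(k+1)$ pairs, realize it as a depth-$k/2$ shallow topological minor of the blow-up $\ExpG \bullet \overline{K}_{1+N_\omega(k+1)}$ by unfolding each weighted edge into a walk of weight-$1$ edges and routing internal occurrences to distinct copies, bound the density of $F$ by the top-grad, and orient $F$ by Hakimi's theorem. The parts you call routine do check out: the case analysis verifying that $\omega'$ is a $(k+1)$-fraternity function works because any sum $\omega'(z,x)+\omega'(z,y)$ involving a newly assigned value $k+1$ is at least $k+2$, so no minimum drops below the threshold of the applicable alternative, and the walks have length exactly $k+1=2(k/2)+1$, which is exactly the subdivision length permitted at depth $k/2$.

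Two caveats. First, the step you flag as the technical core --- that each vertex occurs internally in at most $N_\omega(k+1)$ of the walks --- is the actual content of the lemma and is only sketched; since the paper never states the definition of $N_\omega(k+1)$ (it only records that it is polynomial in $\Delta^+_1(\omega),\dots,\Delta^+_k(\omega)$), treating $N_\omega(k+1)$ as \emph{defined} to be this multiplicity bound is legitimate, but your charging argument via the recursion tree of witnesses would need to be carried out to confirm the bound is indeed a function of the lower-layer out-degrees alone. Second, a minor quantitative slip: Hakimi's theorem yields maximum out-degree at most the \emph{ceiling} of the maximum subgraph density, so as written you obtain $\Delta^+_{k+1}(\omega')\le\lceil\topgrad_{k/2}(\ExpG\bullet\overline{K}_{1+N_\omega(k+1)})\rceil$ rather than the stated bound without the ceiling; this is immaterial to every use of the lemma in the paper, which needs only boundedness.
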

	Here $N_\omega(k+1)$ is a real valued function defined in section 7.4 of \cite{NeMe-book} and its value is polynomial in $\Delta_1^+(\omega)+,\ldots,\Delta_{k}^+(\omega)$.
	
	Let $\omega_k$ be the fraternity function of $\MinFrat{\ExpG}{k}$. By our assumption $\Delta^+(\MinFrat{\ExpG}{k})$ is bounded, hence $\Delta_1^+(\omega)+,\ldots,\Delta_{k}^+(\omega)$ will also be bounded and so will $N = N_\omega(k+1)$. Applying \Lem{grad_to_delta} we know that there exists a function $\omega'$ such that $\forall(x,y)\in \mathcal{V}^2, \omega_k(x,y) \leq k \Longrightarrow \omega'(x,y) = \omega_k(x,y)$ and $\Delta^+_{k+1}(\omega') \leq \topgrad_{k/2}(\ExpG \bullet \overline{K}_{1+N})$.
	
	In other words, there exists a $k+1$-fraternal extension $\vec{\ExpG}' = G^{\omega'}$ of $\ExpG$ such that its outdegree is at most $\topgrad_{k/2}(\ExpG \bullet \overline{K}_{1+N_\omega(k+1)})$. We must show now that $\topgrad_{k/2}(\ExpG \bullet \overline{K}_{1+N_\omega(k+1)})$ is bounded: 
	
	Recall that $N = N_\omega(k+1)$ is bounded. We have that $k<t$ hence $k/2 \leq (t-1)/2$ and $\grad_{k/2}(\ExpG) \leq \grad_{(t-1)/2}(\ExpG)$. Therefore, because $\ExpG$ has bounded $\grad_{(t-1)/2}$, it will also have bounded $\grad_{k/2}$. Now, applying \Prop{grad} we get that $\topgrad_{k/2}(\ExpG \bullet \overline{K}_{1+N})$ will also be bounded as $N$ is a constant, and we get that:
	\begin{align*}
		\Delta^+(\vec{\ExpG}')_{k+1} = \Delta^+(G^{\omega'})_{k+1} = \Delta^+_{k+1} \leq \topgrad_{k/2}(\ExpG \bullet \overline{K}_{1+N})
	\end{align*}
	Therefore $\Delta^+(\vec{\ExpG}')_{k+1}$ will be bounded.
	
	However is possible that $\omega'$ is orienting the edges of the $k+1$ layer in an orientation that is not acyclical, we can show that the max outdegree of the degeneracy orientation is as most a factor of $2$ with respect to the minimum max out degree of all cyclical orientations:
	
	\begin{claim}
		Given an undirected graph $G$, let $\optimalOrientation$ be the orientation of $G$ with minimal $\Delta^+$ and let $\degenOrientation$ be the degeneracy orientation of $G$:
		\[
		\Delta^+(G^\degen) \leq 2\Delta^+(G^*)
		\]
	\end{claim}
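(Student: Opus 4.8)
The plan is to go through the graph degeneracy as an intermediary: I will show that the existence of \emph{any} orientation with max outdegree $d^\ast := \Delta^+(\optimalOrientation)$ forces $\degen(G) \le 2d^\ast$, and then invoke \Fact{degen_orientation} to conclude that the degeneracy orientation $\degenOrientation$ has max outdegree $\degen(G) \le 2d^\ast = 2\Delta^+(\optimalOrientation)$.

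First I would establish a sparsity bound from the optimal orientation. Let $H$ be an arbitrary subgraph of $G$, and restrict $\optimalOrientation$ to the edges of $H$; this is an orientation of $H$ in which every vertex still has outdegree at most $d^\ast$. Summing outdegrees over $V(H)$ counts each edge of $H$ exactly once, so $|E(H)| \le d^\ast \, |V(H)|$. Consequently the average degree of $H$ is $2|E(H)|/|V(H)| \le 2d^\ast$, and hence $H$ has a vertex of degree at most $2d^\ast$.

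Next, since $H$ was an arbitrary subgraph, this says precisely that every subgraph of $G$ has minimum degree at most $2d^\ast$, i.e.\ $G$ is $2d^\ast$-degenerate, so $\degen(G) \le 2d^\ast$. Finally, by \Fact{degen_orientation}, the degeneracy orientation $\degenOrientation$ has maximum outdegree exactly $\degen(G)$, so
\[
\Delta^+(\degenOrientation) = \degen(G) \le 2d^\ast = 2\Delta^+(\optimalOrientation),
\]
which is the claim.

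There is no real obstacle here; the only points requiring a little care are (i) that restricting an orientation to a subgraph can only decrease outdegrees, so the outdegree bound is inherited by every subgraph, and (ii) that the degeneracy orientation of \Fact{degen_orientation} actually achieves outdegree equal to $\degen(G)$ (not merely bounded by it), which is what lets us match the factor $2$ exactly. Both are immediate from the definitions already recalled in the excerpt.
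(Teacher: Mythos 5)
Your proof is correct and is essentially the same edge-counting argument as the paper's: the paper fixes the subgraph of minimum degree $\degen(G)$ and shows any orientation of it must have a vertex of outdegree at least $\degen(G)/2$, while you run the same count in the other direction, using the optimal orientation to bound the average degree of every subgraph and hence $\degen(G) \leq 2\Delta^+(\optimalOrientation)$. The two are just the two directions of the standard ``degeneracy is a $2$-approximation of the minimum maximum outdegree'' fact, so no substantive difference.
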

	\begin{proof}
		$\Delta^+(\degenOrientation)$ is equal to the degeneracy $\degen$ of $G$. There exists a subgraph $G'$ in $G$ such that has minimum degree $\degen = \Delta^+(\degenOrientation)$. Let $n'$ be the number of vertices in $G'$ and $m'$ the number of edges. We will have that $m' \geq \frac{n' \cdot \Delta^+(\degenOrientation)}{2}$.
		
		If we divide all the $m'$ edges equally so that the maximum outdegree in $G'$ is minimized we will have that every vertex has an outdegree of at least $\frac{\Delta^+(\degenOrientation)}{2}$. Hence any orientation of the edges of $G'$ will have an outdegree of at least $\frac{\Delta^+(\degenOrientation)}{2}$. Therefore, $\Delta^+(\optimalOrientation) \geq \frac{\Delta^+(\degenOrientation)}{2}$.
	\end{proof}
	
	Therefore $\Delta^+_{k+1}(\MinFrat{\ExpG}{k+1}) \leq 2 \Delta^+(\vec{\ExpG}')_{k+1}$, and hence, it is bounded. Because for all the other layers $\MinFrat{\ExpG}{k}$ and $\MinFrat{\ExpG}{k+1}$ are identical and $\MinFrat{\ExpG}{k}$ has bounded outdegree, we get that $\Delta^+(\MinFrat{\ExpG}{k+1})$ is bounded.
	
\end{proof}

Additionally, If $\ExpG$ belongs to the class of graphs with bounded rank $(t-1)/2$ grad. we can then compute $\MinFrat{\ExpG}{t}$ efficiently. The process is summarized in \Alg{G_construct}, and basically consist of perform multiple iterations of the extension procedure, orienting the edges by the degeneracy orientation each time. We get the following lemma:

\buildingG*

\begin{proof}
	First, because $\ExpG$ has bounded $\grad_{(t-1)/2}$ it will also have bounded degeneracy $\kappa$. We can then construct $\MinFrat{\ExpG}{1}$ by orienting $\ExpG$ acyclically using the degeneracy orientation, by \Fact{degen_orientation} this will take linear time on the number of vertices and edges of $\ExpG$, which is still $O(n)$.
	
 	By \Lem{grad_to_delta} we will have that for all $k\leq t$ the graph $\MinFrat{\ExpG}{k}$ has bounded $\Delta^+$. We show an inductive process where given $\MinFrat{\ExpG}{k}$ we can compute $\MinFrat{\ExpG}{k+1}$ in linear time:
	
	\begin{itemize}
		\item First, compute all the out-out wedges of $\MinFrat{\ExpG}{k}$. This can be done in linear time as the outdegree of every vertex is bounded. The number of out-out wedges will also be linear in $n$.
		
		\item For every out-out wedge $(u,v,w)$, if the sum of the weights of the two edges $(v,u)$ and $(v,w)$ is $k+1$ then create an edge connecting $(u,w)$ with weight $k+1$. This process takes linear time in the number of out-out wedges, which is again, linear in $n$. And there will be at most $O(n)$ edges in the $k+1$ layer.
		
		\item Finally, orient the newly created edges using the degeneracy orientation when considering only the new edges. This again can be done in $O(n)$ time.
	\end{itemize}
	We complete the proof by showing that the resultant graph $\vec{\ExpG}'$ is a valid $\MinFrat{\ExpG}{k+1}$. First, we can see that every layer in the graph is oriented by the degeneracy orientation. That was true for the layer $1$ to $k$ as we started our construction with $\MinFrat{\ExpG}{k}$ and we haven't added any additional edge with a weight $\leq k$. It is also true for the $k+1$ layer as the last step of our construction orients the layer using the degeneracy orientation.
	
	Now we just need to show that $\vec{\ExpG}'$ is a $k+1$-fraternal extension of $\ExpG$. Let $\cV = V_{\vec{\ExpG}'}$ and $\omega: \cV \times \cV \to \mathbb{N}$ be a function such that $G^\omega =\vec{\ExpG}'$. We must show that $\omega$ forms a $k+1$-fraternity function. We prove by contradiction, if $\omega$ is not a $k+1$ fraternity function, then by \Def{frat_function} we have that there must exist a pair of vertices in $V_{\vec{\ExpG}'}$ such that the equivalent nodes $u,v$ in $\cV$ do not meet any of the following conditions:
	
	\begin{enumerate}
		\item $\min(\omega(u,v),\omega(v,u))=1$
		\item $\min(\omega(u,v),\omega(v,u)) = \min_{w\in \mathcal{V}\setminus\{u,v\}} \omega(w,u) + \omega(w, v)$
		\item $\min(\omega(u,v),\omega(v,u)) > k+1$ and $\min_{w\in \mathcal{V}\setminus\{u,v\}} \omega(w,u) + \omega(w, v)> k+1$.
	\end{enumerate}
	
	If $u$ and $v$ do not meet the first condition then we have that $\min(\omega(u,v),\omega(v,u))>1$ (recall that in a fraternity function a missing edge is considered as $\infty$). If $\min(\omega(u,v),\omega(v,u))<k+1$ then $u,v$ was an edge in $\MinFrat{\ExpG}{k}$ which implies that it will meet the second condition, as $\MinFrat{\ExpG}{k}$ was a valid $k$-fraternity extension. Hence, we have that $\min(\omega(u,v),\omega(v,u))$ is either $k+1$ or $>k+1$:
	\begin{itemize}
		\item In the first case $\min(\omega(u,v),\omega(v,u)) = k+1$: because the second condition is not true we will have that $\min_{w\in \mathcal{V}\setminus\{u,v\}} \omega(w,u) + \omega(w, v) \neq k+1$. If it is greater than $k+1$ then our procedure would not have generated an edge connecting $u,v$ with weight $k+1$. Otherwise we have that is lower than $k+1$ we have that $\min_{w\in \mathcal{V}\setminus\{u,v\}} \leq k$, but then in $\MinFrat{\ExpG}{t}$ we will have an edge connecting $u$ and $v$ with weight $k$ or it would not be a valid fraternal extension. Both cases reach to a contradiction.
		
		\item In the second case we have that $\min(\omega(u,v),\omega(v,u))>k+1$: Then because the third condition is false, we will have that $\min_{w\in \mathcal{V}\setminus\{u,v\}} \omega(w,u) + \omega(w, v)\leq k+1$. But in that case an out-out wedge with weight at most $k+1$ would connect $u$ and $v$, and hence our procedure would have generated an edge connecting $u$ and $v$ with weight at most $k+1$, again reaching a contradiction.
	\end{itemize}
\end{proof}

\begin{algorithm}
	\hspace*{\algorithmicindent} \textbf{Input:} \\
	\hspace*{6ex}-Labeled Graph $\ExpG$\\
	\hspace*{6ex}-Integer $t$\\
	\hspace*{\algorithmicindent} \textbf{Output:}  \\
	\hspace*{6ex}-$\MinFrat{\ExpG}{t}$
	\begin{algorithmic}[1]
		\State Let $\MinFrat{\ExpG}{1}$ be the degeneracy orientation of $\ExpG$ with unit weights.
		\For {$k \in [2,t]$ }
			\State Let $E^i =$ \Extension($\MinFrat{\ExpG}{i-1},i)$
			\State Let $\vec{E}^i$ be the degeneracy orientation of the edges in $E^i$ with weight $i$.
			\State Set $\MinFrat{\ExpG}{i} = \MinFrat{\ExpG}{i-1} \cup \vec{E}^i$
		\EndFor
		\State Return $\MinFrat{\ExpG}{t}$
	\end{algorithmic}
	\caption{\ComputeMinFrat($F$,$t$)} \label{alg:G_construct}
\end{algorithm}

\subsection{Equivalence of Homomorphisms} \label{subsec:equivalence}

In this section we prove the equivalence between the homomorphisms of the original graphs and the fraternal extensions. This is given by \Lem{equivalence}, that we restate:

\equivalence*

\begin{proof}

	Let $\Phi(\LabeledH,\ExpG)$ be the set of homomorphisms from $\LabeledH$ to $\ExpG$. For every $\vec{H}'\in \Frat{H}{t}$, let $\Phi(\vec{H}',\MinFrat{\ExpG}{t})$ be the set of homomorphisms from $\vec{H}'$ to $\MinFrat{\ExpG}{t}$. We can see that each of these sets are disjoint:
	
	\begin{claim} 
		Let $\vec{H}',\vec{H}'' \in \Frat{H}{t}$ be two distinct $t$-fraternal extensions of $H$:
		\begin{align*}
			\Phi(\vec{H}',\MinFrat{\ExpG}{t})\cap \Phi(\vec{H}'',\MinFrat{\ExpG}{t}) = \emptyset.
		\end{align*}

	\end{claim}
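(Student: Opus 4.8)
The claim asserts that homomorphisms from two distinct $t$-fraternal extensions $\vec{H}'$ and $\vec{H}''$ into the common host $\MinFrat{\ExpG}{t}$ cannot coincide. The key structural fact to exploit is that $\vec{H}'$ and $\vec{H}''$ are both built on the same vertex set as $\LabeledH$, carry the same trivial labels, and share the same weight-$1$ layer (the original edges of $\LabeledH$), but differ either in the orientation chosen at some layer or in the set/orientation of the augmenting edges added at some layer. The plan is to take a map $\phi$ that is simultaneously a homomorphism $\vec{H}' \to \MinFrat{\ExpG}{t}$ and $\vec{H}'' \to \MinFrat{\ExpG}{t}$ and derive a contradiction by reconstructing, from $\phi$ and the label function $L_{\ExpG}$, enough of the combinatorial data of the extension to force $\vec{H}' = \vec{H}''$.

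First I would record the crucial rigidity property: since $L_{\ExpG}(\langle u, v\rangle) = u$ and $L_{\LabeledH}$ is the identity, any homomorphism $\phi$ from a $t$-fraternal extension of $\LabeledH$ to $\MinFrat{\ExpG}{t}$ must satisfy $\phi(u) = \langle u, \psi(u)\rangle$ for some function $\psi : V_H \to V_G$; i.e.\ $\phi$ is \emph{injective} and its first coordinate is pinned down. Next I would use the weight-monotonicity condition in the definition of a weighted-graph homomorphism ($W_{H'}(e) \geq W_{G'}(\phi(e))$) together with the fact that both extensions and the host are built layer-by-layer by \Alg{extension}/\Alg{G_construct}, so that the weight of every edge equals the round at which it was created. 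The argument then proceeds by induction on the layer index $i$ from $1$ to $t$: assuming $\vec{H}'_j = \vec{H}''_j$ for all $j < i$, I would show that the existence of a common homomorphism forces $\vec{H}'_i = \vec{H}''_i$. For layer $1$ this is automatic since both equal the (oriented) edges of $\LabeledH$. For the inductive step, the out-out wedges of combined weight $i$ in $\vec{H}'$ (equivalently in $\vec{H}''$, by the inductive hypothesis) are exactly the same set; \Alg{extension} deterministically decides \emph{which} pairs get an edge of weight $i$; and the only freedom is the \emph{orientation} of those new edges. Here I would invoke that, in the host $\MinFrat{\ExpG}{t}$, the weight-$i$ layer is oriented acyclically (by a fixed degeneracy orientation), so a weight-$i$ edge $(u,w)$ in the pattern must map to a weight-$\leq i$ directed edge in the host; comparing the directions forced on $\phi$'s image pins down the orientation in the pattern uniquely, hence $\vec{H}'_i = \vec{H}''_i$.

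The main obstacle I anticipate is the step that converts ``$\phi$ is a homomorphism of both extensions'' into ``the orientation of the new layer is forced.'' A priori, a weight-$i$ pattern edge could map to a host edge of \emph{smaller} weight (the definition only requires $W_{H'}(e) \geq W_{G'}(\phi(e))$), so one cannot read off the orientation of layer $i$ of the pattern directly from the layer-$i$ structure of the host — one has to argue that the underlying undirected pair in the host, at whatever weight it sits, still has a well-defined direction that $\phi$ must respect, and that this direction is consistent between the two extensions. Resolving this likely requires: (i) noting that $\MinFrat{\ExpG}{t}$ has no anti-parallel edges (each pair $u,w$ carries at most one directed edge, of a single weight, since each layer is a degeneracy orientation and lower layers take precedence), so the direction of the host edge is unambiguous; and (ii) observing that $\vec{H}'$ and $\vec{H}''$ differ at layer $i$, so there is some pair $u,w$ oriented one way in $\vec{H}'$ and the other way in $\vec{H}''$, whence $\phi$ would have to realize both $(\phi(u),\phi(w))$ and $(\phi(w),\phi(u))$ as directed edges of the host — impossible. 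I would also double-check the edge case where $\vec{H}'$ and $\vec{H}''$ differ because one of them added a weight-$i$ edge on a pair that is already an edge at a lower weight in the other; but the inductive hypothesis $\vec{H}'_j = \vec{H}''_j$ for $j<i$ together with \Alg{extension}'s explicit check ``$(u,w)\notin E_{\vec{G}}$ and $(w,u)\notin E_{\vec{G}}$'' rules this out cleanly.
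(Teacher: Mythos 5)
Your proposal is correct and follows essentially the same route as the paper: both arguments reduce to the facts that (a) the undirected edge set of each weight layer is determined by the lower layers, so two distinct extensions must disagree on the \emph{orientation} of some edge, and (b) a common homomorphism would then force two anti-parallel arcs between $\phi(u)$ and $\phi(w)$ in $\MinFrat{\ExpG}{t}$, which is impossible since a fraternity-function graph has at most one arc per vertex pair. The only slip is your base case: layer $1$ is \emph{not} automatic, since $\Frat{H}{1}$ contains all acyclic orientations of $\LabeledH$ and two extensions may already differ there — but the same anti-parallel-arc argument you give for the inductive step handles layer $1$ verbatim.
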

	\begin{proof}
		First, we show that if $\vec{H}'$ and $\vec{H}''$ are distinct $t$-fraternal extensions of $H$ then there must exist an edge $e \in E_{\vec{H}'}$ such that it is reversed in $\vec{H}''$. We prove it by contradiction: Assume there is no such edge, we show that then $\vec{H}' = \vec{H}''$. We do induction in the depth of the fraternal extensions:
		
		\begin{itemize}
			\item For the base case, $\vec{H'}$ and $\vec{H}''$ are both $1$-fraternal extensions and therefore they will correspond to different orientations of the edges in $H$, if they don't differ in any edge then $\vec{H}'$ and $\vec{H}''$ will correspond to the exact same orientation.
			
			\item For the inductive step, assume for some $k<t$ that if there is no reversed edge in two $k$-fraternal extensions $\vec{H}'$,$\vec{H}''$ of $H$, then $\vec{H}'=\vec{H}''$. We show that same holds for $k+1$-fraternal extensions. By the assumption we know that two $k$-fraternal extensions that do not differ in any edge will correspond to the same graph. Hence, two $k+1$-fraternal extension $\vec{H}',\vec{H}'' \in \Frat{H}{k+1}$ will have the same edges (with the exact same orientations) up to the layer $k$, and therefore the edges in the layer $k+1$ must be the same. Because we are assuming that there are not reversed edges, they will also have the same orientation. Therefore $\vec{H}'=\vec{H}''$.
		\end{itemize}
		
		Hence, there exists an edge $e$ that have different orientations in $\vec{H}'$ and $\vec{H}''$. Let $u,v$ be the endpoints of the edge $e$, the arc $(u,v)$ belongs to $\vec{H}'$ and the arc $(v,u)$ to $\vec{H}''$. We show that no homomorphism $\phi$ can be both in $\Phi(\vec{H}',\MinFrat{\ExpG}{t})$ and in $\Phi(\vec{H}'',\MinFrat{\ExpG}{t})$. Consider the vertices $\phi(u)$ and $\phi(v)$ of $\MinFrat{\ExpG}{t}$, we can have either an arc from $\phi(u)$ to $\phi(v)$, from $\phi(v)$ to $\phi(u)$, or none. In order for $\phi$ to be a valid homomorphism of $\vec{H}'$, we will need the directed arc $(\phi(u), \phi(v))$ to be in $\MinFrat{\ExpG}{t}$, similarly for $\vec{H}''$ we will need the directed arc $(\phi(v),\phi(u))$ to be in $\MinFrat{\ExpG}{t}$. A fraternal extension can not contain two opposite edges connecting the same two vertices. Therefore $\phi$ can not be a homomorphism of $\vec{H}'$ and $\vec{H}''$ at the same time.
	\end{proof}
	
	Now we just need to show that:
	
	\begin{align*}
		\Phi(\LabeledH,\ExpG) = \bigcup_{\vec{H}' \in \Frat{H}{t}}\Phi(\vec{H}',\MinFrat{\ExpG}{t})
	\end{align*}
	
	We start by proving that for every $\vec{H}' \in \Frat{H}{t}$, if $\phi$ is a homomorphism from $\vec{H}'$ to $\MinFrat{\ExpG}{t}$ then it will also be a valid homomorphism from $\LabeledH$ to $\ExpG$: Let $\vec{H}' \in \Frat{H}{t}$ be a $t$-fraternal extension of $H$ and $\phi$ a homomorphism from $\vec{H}'$ to $\MinFrat{\ExpG}{t}$, $\phi$ must be an injective mapping, hence every vertex in $\vec{H}'$ is mapped to a distinct vertex in $\MinFrat{\ExpG}{t}$. 
	
	We show that $\phi$ is also a valid homomorphism from $\LabeledH$ to $\ExpG$, consider the edge $(u,v) \in \LabeledH$, we must show that the edge $(\phi(u),\phi(v))$ is present in $\ExpG$. Because $\vec{H}'$ is a fraternal extension of $\LabeledH$, we will have that they share the edges of weight $1$, hence either the arc $(u,v)$ or the arc $(v,u)$ will be present in $\vec{H}'$ (we can assume without loss of generality that it is oriented from $u$ to $v$) with unit weight $W_{\vec{H}'}((u,v))=1$. Because $\phi$ is a homomorphism from $\vec{H}'$ to $\MinFrat{\ExpG}{t}$ we will have that the edge $(\phi(u),\phi(v))$ must be present in $\MinFrat{\ExpG}{t}$ with weight $W_{\MinFrat{\ExpG}{t}}((u,v)) = W_{\vec{H}'}((u,v))=1$. For $\MinFrat{\ExpG}{t}$ to have an edge with weight $1$, such edge must be also in $\ExpG$, and hence $(\phi(u),\phi(v))$ is present in $E_{\ExpG}$.
	
	Now we prove that if $\phi$ is a homomorphism from $\LabeledH$ to $\ExpG$ then there exists a fraternal extension $\vec{H}' \in \Frat{H}{t}$ such that $\phi$ is a valid homomorphism from $\vec{H}'$ to $\MinFrat{\ExpG}{t}$: Let $\phi$ be a homomorphism from $\LabeledH$ to $\ExpG$, we show that we can construct a $t$-fraternal extension of $\LabeledH$ such that $\phi$ is a valid homomorphism from it to $\MinFrat{\ExpG}{t}$. We use induction on $t$:
	
	\begin{itemize}
		\item For the base case $k=1$, we can orient every edge $(u,v)$ in $\LabeledH$ so it matches the orientation of the edge $(\phi(u),\phi(v))$ in  $\MinFrat{\ExpG}{1}$. This orientation will be acyclic and therefore the resultant graph will be in $\Frat{H,1}$.
		
		\item For the inductive step, we assume that for $k<t$ exists a $k$-fraternal extension $\vec{H}^k$ of $H$ where $\phi$ is a valid homomorphism from $\vec{H}^k$ to $\MinFrat{\ExpG}{k}$. We prove that it also holds for $k+1$: First, let $\vec{H}'$ be a $k+1$-fraternal extension of $\vec{H}^k$, any edge $(u,v)$ in $\vec{H}'$ with $W_{\vec{H}'}((u,v))<t$ must have a correspondent edge $(\phi(u),\phi(v))$ in $\MinFrat{\ExpG}{k}$ and hence in $\MinFrat{\ExpG}{k+1}$ with $W_{\MinFrat{\ExpG}{k+1}}((\phi(u),\phi(v))) < W_{\vec{H}'}((u,v))$. 
		
		Hence we only need to verify that the edges $e$ in $\vec{H}'$ with $W_{\vec{H}'}(e)=k+1$ are also present in $\MinFrat{\ExpG}{t}$. Consider the edge $e=(u,v)$ in $\vec{H}'$ with $W_{\vec{H}'}(e)=k+1$, we need to show that $\MinFrat{\ExpG}{t}$ contains either $(\phi(u),\phi(v))$ or $(\phi(v),\phi(u))$ as the edges in the last layer of $\vec{H}'$ can be oriented arbitrarily and still will be a valid $k+1$-fraternal extension of $\vec{H}^k$. If such edge $e$ exists, then there must exist a vertex $w$ in $\vec{H}'$ such that there is an out-out wedge $(u,w,v)$ in $\vec{H}^k$ with total weight $k+1$, by the assumption, we will have that there is an out-out wedge $(\phi(u),\phi(w),\phi(v))$ in $\MinFrat{\ExpG}{k}$ with total weight $\leq k+1$. Hence $\MinFrat{\ExpG}{t}$ must include either $(\phi(u),\phi(v))$ or $(\phi(v),\phi(u))$ with weight $\leq k+1$.
		
	\end{itemize}

\end{proof}

\section{The Hub-Set} \label{sec:hub-set}

As we mentioned before, the fraternal extensions of an acyclic graph might no longer be acyclic. Bressan's algorithm for counting homomorphism requires of directed acyclic graphs, as it relays on the definitions of \dagtree\ and \dagtreewidth{}. We will generalize these definitions and extend Bressan's algorithm to directed graphs that are not necessarily acyclic. For that purpose we introduce the concept of hubset of a directed graph:

\begin{definition}
	Let $\vec{H}$ be a directed graph, a hubset of $\vec{H}$ is any subset of vertices $\HubSet \subseteq V_{\vec{H}}$ such that:
	\begin{itemize}
		\item For each pair $s,s'\in \HubSet$ with $s\neq s'$ there is no directed path connecting $s$ to $s'$ (or vice versa).
		\item For each vertex $v \in V_{\vec{H}} \setminus \HubSet$, there exists a vertex $s \in \HubSet$ such that there is a directed path connecting $s$ to $v$.
	\end{itemize}
\end{definition}

We will use $\HubSet(\vec{H})$ to denote any hubset of $\vec{H}$. Note that when $\vec{H}$ is acyclic the hubset of $\vec{H}$ is unique and corresponds exactly with the source set. Furthermore, this applies to any fraternal extension of a DAG as we can see in the following claim:

\begin{claim}
	Let $\vec{H}'$ be a fraternal extension of the DAG $\vec{H}$. $\vec{H}$ has an unique hubset and $\HubSet(\vec{H}') = S(\vec{H})$.
\end{claim}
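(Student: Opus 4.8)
The assertion about $\vec{H}$ itself is immediate: $\vec{H}$ is acyclic, so by the remark preceding the claim its hubset is unique and equals $S(\vec{H})$. The real content is the identity $\HubSet(\vec{H}') = S(\vec{H})$ together with uniqueness of a hubset of $\vec{H}'$, and the plan is to reduce everything to one structural fact, which I would state as an auxiliary claim:

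\emph{Claim A.} If $s$ has in-degree $0$ in $\vec{H}$ (i.e.\ $s\in S(\vec{H})$), then $s$ has in-degree $0$ in $\vec{H}'$ as well.

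The hard part will be Claim A, since it is the only place the definition of fraternal extension / fraternity function is genuinely used. I would prove it by a minimal-counterexample / descent on edge weights. Write $\omega$ for the $t$-fraternity function with $\vec{H}' = \vec{G}^\omega$, and suppose toward a contradiction that $s\in S(\vec{H})$ has an in-edge in $\vec{H}'$. Among all in-edges of $s$ choose one of minimum weight, say $w\to s$ with weight $k=\omega(w,s)$; since the weight-$1$ edges of $\vec{H}'$ are exactly $E_{\vec{H}}$ and $s$ is a source of $\vec{H}$, there is no weight-$1$ in-edge at $s$, so $2\le k\le t$. Because $\omega(w,s)=k<\infty$, the fraternity-function requirement forces $\omega(s,w)=\infty$, hence $\min(\omega(w,s),\omega(s,w))=k$. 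As $k\le t$ the third case of \Def{frat_function} is excluded, and as $k\ge 2$ the first is excluded, so the second case applies: there is a vertex $z\notin\{w,s\}$ with $\omega(z,w)+\omega(z,s)=k$. Both summands are finite, and $\omega(z,w)\ge 1$, so $\omega(z,s)\le k-1<k$, i.e.\ $z\to s$ is an in-edge of $s$ in $\vec{H}'$ of weight strictly less than $k$, contradicting minimality. This proves Claim A.

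Granting Claim A, the remaining steps are routine reachability bookkeeping, which I would carry out as follows. First, a vertex $v\notin S(\vec{H})$ already has an incoming weight-$1$ edge in $\vec{H}'$, so combined with Claim A this gives $S(\vec{H})=\{v\in V_{\vec{H}'}: v \text{ has in-degree } 0 \text{ in } \vec{H}'\}$. Next I would check that $S(\vec{H})$ is a hubset of $\vec{H}'$: no directed path of $\vec{H}'$ can end at an $s\in S(\vec{H})$ (that would give $s$ an in-edge), so no directed path joins two distinct elements of $S(\vec{H})$; and every $v\notin S(\vec{H})$ is reachable from a source already inside $\vec{H}$ — walk backwards along in-edges of the DAG $\vec{H}$, which terminates at a source by finiteness and acyclicity — and that backward walk uses only weight-$1$ edges, hence is a directed path in $\vec{H}'$. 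Finally, for uniqueness let $\HubSet$ be any hubset of $\vec{H}'$. If some $s\in S(\vec{H})$ were missing from $\HubSet$, the second hubset axiom would supply a directed path in $\vec{H}'$ from some $s'\in\HubSet$ into $s$, contradicting Claim A; so $S(\vec{H})\subseteq\HubSet$. If some $v\in\HubSet$ were not in $S(\vec{H})$, then $v$ is reachable in $\vec{H}'$ from a source $s\in S(\vec{H})\subseteq\HubSet$ with $s\ne v$, violating the first hubset axiom; so $\HubSet\subseteq S(\vec{H})$. Hence $\HubSet=S(\vec{H})$.

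To summarize, the main obstacle is exactly Claim A, i.e.\ ruling out that the iterated augmentation ever creates an edge pointing into an original source; the key leverage is the feature of a fraternity function that a heavy edge into a vertex forces a strictly lighter edge into the same vertex (through the common out-neighbour of an out-out wedge). Everything downstream is a direct unwinding of the two hubset axioms together with the observation that reachability from sources is already witnessed inside the weight-$1$ DAG $\vec{H}$.
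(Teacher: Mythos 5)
Your proposal is correct and follows essentially the same route as the paper: both reduce the claim to showing that an original source of $\vec{H}$ never acquires an in-edge in any fraternal extension, and then finish with the same reachability bookkeeping for the two hubset axioms. The only difference is in how that key fact is justified --- the paper notes that a new in-edge at $s$ would force $s$ to be the endpoint of an out-out wedge (hence to already have an in-edge), while your Claim A derives the same contradiction by a minimal-weight descent directly from the fraternity-function axioms, which is a somewhat more rigorous rendering of the same idea.
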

\begin{proof}
	Note that the edges of $\vec{H}$ are a subset of the edges of $\vec{H}'$. Hence, the second condition for the hubset is automatically satisfied as every vertex in $\vec{H}$ is reachable from at least one source in $S(\vec{H})$. For the first condition suffices to observe that the indegree of any of the sources will be $0$ in all the fraternal extensions of $\vec{H}$. We can prove it by contradiction, assume that there is a fraternal extension that adds an edge incident to the source $s \in S(\vec{H})$, then we will have that $s$ was one of the endpoints of an out-out wedge. That is not possible as $s$ is a source and hence its initial indegree in $\vec{H}$ was $0$.
	
	Now we prove that the hubset is unique, note that by the previous argument every source of $\vec{H}$ will still be source in $\vec{H}'$. Hence all the sources must be included in the hubset in order to satisfy the second condition. Adding any additional vertex in the hubset is also not possible, as all the vertices are reachable from at least one source and we would violate the first condition.
\end{proof}

Note that every graph $\Frat{H}{1}$ is a DAG, therefore the previous claim will apply to every graph in $\Frat{H}{t}$ for every $t>0$.

We can now define a new type of decomposition of a graph based on the hubset. This new decomposition is just a generalization of Bressan's \dagtree\ for directed graphs:

\begin{definition} [\hubtree]
	Let $\vec{H}$ be a directed graph with hubset $\HubSet = \HubSet(\vec{H})$. A \hubtree of $\vec{H}$ is a rooted tree $\cT=(\cB,\cE)$ with the following properties:
	\begin{enumerate}
		\item Each node $B\in \cB$ is a subset of $\HubSet$ of $\vec{H}$, $B \subseteq \HubSet$.
		\item $\bigcup_{B \in \cB} B = \HubSet$.
		\item For all $B,B_1,B_2 \in \mathcal{B}$, if $B$ is on the unique path between $B_1$ and $B_2$ in $\mathcal{T}$, then we have $\Reachable_{\vec{H}}(B_1)\cap \Reachable_{\vec{H}}(B_2) \subseteq \Reachable_{\vec{H}}(B)$.
	\end{enumerate}
\end{definition}

We similarly define the \hubtreewidth\ of a graph: 

\begin{definition} [\hubtreewidth ($\htw$)]
	The \hubtreewidth\ of a \hubtree\ $\cT = (\cB, \cE)$ is defined as the maximum size over all the nodes of $\cT$:
	\begin{align*}
		\htw(\cT) = \max_{B \in \cB} |B|
	\end{align*}
	We also use $\htw(\vec{H})$ to refer to the \hubtreewidth\ of the directed graph $\vec{H}$, which is the minimum $\htw(\cT)$ over all possible \hubtree\ of $\vec{H}$.
\end{definition}

Note that when $\vec{H}$ is acyclic the definitions for \dagtree\ and \hubtree\ are equivalent. Similarly we will have that the \dagtreewidth\ and \hubtreewidth\ are the same, that is why we will refer to both of them using $\htw$.

\section{The \hubtreewidth\ of fraternal extensions and the LICL} \label{sec:licl}

As we can see in \Fig{licl}, the fraternal extensions do not necessarily reduce the $\LICL$ of the pattern graphs, as new cycles can be formed with the new edges in the extensions. However, there is a clear relation between the $\LICL$ of the original graph and the \hubtreewidth\ of the fraternal extensions. We will be proving such relation in this section, given by the following lemma:

\lemmalicl*

\subsection{Main Technical Lemma}

In this subsection we prove the main technical lemma of this paper, which will allow us to prove the relation between fraternal extensions and dag-treewidth. First we will define a long out-out wedge:

\begin{definition}
	A long out-out wedge is a graph form by the union of two directed paths of any length as the result of combining their sources. \Fig{outwedge} shows an example of a long out-out wedge.
\end{definition}

We now prove the following claims that will be used in the main lemma of this subsection:

\begin{claim} \label{clm:out-wedge}
	Let $u,v$ be the endpoints of a long out-out wedge with total weight $w$ and $l$ edges in some fraternal extension $\vec{H}_i$ of $H$, then for all $t \geq w$, for all $\vec{H}' \in \Frat{\vec{H}_i}{t}$ there is a direct path connecting either $u$ to $v$ or $v$ to $u$ using only the vertices in the long out-out wedge.
\end{claim}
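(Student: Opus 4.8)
The plan is to prove \Clm{out-wedge} by strong induction on the number $l$ of edges of the long out-out wedge. Write $c$ for its center, $c = a_0 \to a_1 \to \cdots \to a_p = u$ and $c = b_0 \to b_1 \to \cdots \to b_q = v$ for its two directed paths (so $p + q = l$), and $w$ for the sum of the weights of its edges, where $w \le t$. If one of the paths is trivial, that is, $\min(p,q) = 0$, then the wedge is itself a directed path between $u$ and $v$; it lies in $\vec{H}_i$, hence in every $\vec{H}' \in \Frat{\vec{H}_i}{t}$, and uses only vertices of the wedge. This handles the base case (in particular all $l \le 1$), so assume $p, q \ge 1$.

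The edges $c \to a_1$ and $c \to b_1$ form an ordinary out-out wedge of combined weight $W_{\vec{H}_i}(c,a_1) + W_{\vec{H}_i}(c,b_1) \le w \le t$, so by \Clm{short-out-wedge} every $\vec{H}' \in \Frat{\vec{H}_i}{t}$ contains an edge $e$ between $a_1$ and $b_1$. Reading the proof of \Clm{short-out-wedge}, the third case of \Def{frat_function} is impossible for the pair $a_1, b_1$ (the witness $z = c$ gives a sum $\le t$, and the edges $c \to a_1$, $c \to b_1$ keep their orientations and weights in $\vec{H}'$), so the weight $w_e$ of $e$ satisfies $w_e \le W_{\vec{H}_i}(c,a_1) + W_{\vec{H}_i}(c,b_1)$. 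Now replace the two edges $c \to a_1$ and $c \to b_1$ by $e$: if $e$ is oriented $a_1 \to b_1$ this yields a long out-out wedge with center $a_1$ and paths $a_1 \to \cdots \to u$ and $a_1 \to b_1 \to \cdots \to v$ (symmetrically with center $b_1$ if $e$ is oriented $b_1 \to a_1$), with endpoints still $u$ and $v$, with exactly $l - 1$ edges, with vertex set contained in that of the original wedge, and with total weight $w - W_{\vec{H}_i}(c,a_1) - W_{\vec{H}_i}(c,b_1) + w_e \le w \le t$.

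It remains to apply the induction hypothesis to this shorter wedge, for which we must name a fraternal extension of $H$ that contains it and of which $\vec{H}'$ is still a $t$-fraternal extension. If $e \in \vec{H}_i$ we reuse $(\vec{H}_i, \vec{H}')$. Otherwise $w_e$ is strictly larger than every weight appearing in $\vec{H}_i$ (and, when $l \ge 3$, still $w_e \le w - 1 < t$ since the original wedge has an edge besides the two at its center), so we let $\vec{H}^\star$ be $\vec{H}'$ restricted to edges of weight $\le w_e$; truncating a fraternity function to weights $\le j$ is again a $j$-fraternity function (a routine check against \Def{frat_function}), so $\vec{H}^\star$ is a $w_e$-fraternal extension of $\vec{H}_i$, it contains the shorter wedge, and $\vec{H}'$ is a $t$-fraternal extension of $\vec{H}^\star$. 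When $l = 2$ the shorter wedge is just the single edge $e$ and we are immediately back in the base case. In every case the induction hypothesis, applied to the shorter wedge with the same $t$ (which still exceeds its weight), produces a directed path between $u$ and $v$ in $\vec{H}'$ using only vertices of the shorter wedge, hence only vertices of the original wedge, completing the induction.

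The main obstacle is the bookkeeping just described: \Clm{short-out-wedge} tells us only that the shortcut edge $e$ exists somewhere in $\vec{H}'$, not in which round it was added, so before invoking the induction we must identify an appropriate truncation $\vec{H}^\star$ as a genuine fraternal extension sitting between $\vec{H}_i$ and $\vec{H}'$, and verify (via the refined weight bound $w_e \le W_{\vec{H}_i}(c,a_1)+W_{\vec{H}_i}(c,b_1)$) that the shorter wedge still has total weight at most $t$. Everything else is a direct unpacking of the definitions of fraternal extension and fraternity function.
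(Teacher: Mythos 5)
Your proof is correct and takes essentially the same route as the paper's: induct on the number of edges, apply \Clm{short-out-wedge} to the ordinary out-out wedge at the center to obtain a shortcut edge, and recurse on the resulting shorter long out-out wedge with the new source at an endpoint of that edge. Your version is in fact more careful than the paper's, which glosses over both the weight of the shortcut edge and the identification of an intermediate fraternal extension containing the shorter wedge to which the induction hypothesis can legitimately be applied.
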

\begin{proof}
	We can prove by induction on the number of edges of the long out-out wedge:
	
	The base case is when $l=2$, this is simply a standard out-out wedge with where the two edges have a combined weight of $w$. By \Clm{short-out-wedge} we have that any $t$-fraternal extension of $\vec{H}_i$ at depth $t \geq w$ will add an edge connecting $u,v$ if it was not already present. Hence, we will either have a path from $u$ to $v$ or $v$ to $u$.
	
	Now we show the inductive step: Assume that the claim holds for $l=k$, we will prove that it also holds for $l=k+1$. 
	
	Let $s$ be the source of the long out-out wedge with length $k+1$, $s$ will be forming a normal out-out wedge with a vertex $u'$ in the $s-u$ path and a vertex $v'$ in the $s-v$ path, both $(s,u')$ and $(s,v')$ edges will have a combined weight strictly less than $w$, thus for some $t'<w$ by \Clm{short-out-wedge} any fraternal extension of $H_i$ at level $t'$ will have an edge $(u',v')$ (we can assume without loss of generality that it will go from $u'$ to $v'$).
	
	If $u'=u$ then we actually a directed path from $u$ to $v$, otherwise, we have a long out-out wedge where the total weight is still at most $w$, but with the source at $u'$, with one less edge, hence $l=k$. Using the assumption of the inductive step we know that there will be a path either from $u$ to $v$ or from $v$ to $u$ in the $t$-fraternal extension.
	
\end{proof}

In \Fig{outwedge} we can see an example of the long out-out wedge construction.

\begin{figure}
	\centering
	\includegraphics[width=\textwidth*3/4]{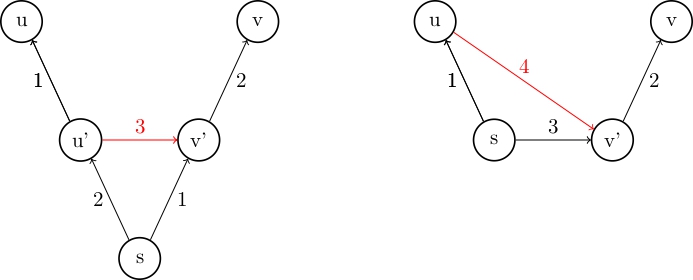}%
	\caption{An example of connectivity between the end-points of a long out-out wedge, we can see how $u'$ becomes the source of a new long out-out wedge after adding the edge of weight $3$. When adding the edge of weight $4$ in the right picture we end up with a direct path between $u$ and $v$.}
	\label{fig:outwedge}
\end{figure}

We now prove the following:

\begin{claim} \label{clm:reachability-in-paths}
	Let $P$ be an induced undirected path of length $t$ from $u$ to $v$ in $H$, then for any $t$-fraternal extension $\vec{H} \in \Frat{H}{t}$: There is an undirected path $P'$ connecting $u$ and $v$ (ignoring edge directions) using only vertices in $P$ with total weight at most $t$, such that $P'$ either:
	\begin{itemize}
		\item Case 1: Forms a direct path from $u$ to $v$.
		\item Case 2: Forms a direct path from $v$ to $u$.
		\item Case 3: Contains a vertex $s$ such that there is a direct path from both $u$ and $v$ to $s$ in $P'$ (a long in-in wedge).
	\end{itemize}
\end{claim}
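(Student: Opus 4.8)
The plan is to induct on $t$, the length of the induced path $P$, mirroring the structure of \Clm{out-wedge} but now tracking a third possibility (a long in-in wedge) that can arise because the orientation of $P$ in a $1$-fraternal extension $\vec H \in \Frat{H}{1}$ is arbitrary. First I would fix any $\vec H \in \Frat{H}{t}$ and look at the orientation that $P$ receives in its underlying $1$-fraternal extension $\vec H_1 \subseteq \vec H$. Write $P = u = x_0, x_1, \dots, x_t = v$. If $P$ is oriented as a single directed path from $u$ to $v$ we are immediately in Case~1, and symmetrically for Case~2; if $P$ contains a unique source $x_j$ with $0 < j < t$ and a directed path from $x_j$ to each of $u$ and $v$, then for the purposes of the induction this is already a long in-in wedge and we are in Case~3. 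The real work is when $P$ has at least two sources along its interior.

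For the inductive step, suppose $P$ has $\geq 2$ sources in $\vec H_1$. Pick two consecutive sources $x_a$ and $x_b$ ($a < b$) with a common sink $x_c$ strictly between them ($a < c < b$): this is an out-out wedge $(x_a, x_c, x_b)$ of weight at most $t$ (since $P$ has total weight $t$ and this subwedge is a proper sub-path), so by \Clm{short-out-wedge} the $t$-fraternal extension $\vec H$ contains an edge between $x_a$ and $x_b$, of weight at most (the weight of this subwedge) $\le t$. Replacing the sub-path $x_a \cdots x_c \cdots x_b$ of $P$ by this single new edge yields a shorter walk $Q$ from $u$ to $v$ through vertices of $P$, with strictly fewer sources and total weight still at most $t$. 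The subtlety is that $Q$ need not be a path in $H$ (the endpoints $x_a, x_b$ of the shortcut were not adjacent in $H$, so we cannot directly re-invoke the claim's hypothesis that $P$ be induced) — so I would instead phrase the induction on a slightly more general object: an \emph{out-out-reducible walk} of weight $\le t$ among vertices of an induced path, for which the shortcutting step stays inside the class. Repeatedly contracting out-out wedges this way strictly decreases the number of sources while preserving ``weight $\le t$'' and ``uses only vertices of $P$''; when only one source (or zero sources, i.e.\ a monotone path) remains, we have landed in Case~1, Case~2, or Case~3.

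The main obstacle I anticipate is precisely this bookkeeping: showing that after shortcutting an interior out-out wedge, the resulting object is still amenable to the inductive hypothesis (in particular that the new edge, together with the flanking sub-paths, still has total weight $\le t$, and that the ``induced'' hypothesis is not actually needed once we pass to the right auxiliary notion). A clean way around it is to do the induction directly on the number of interior sources of the orientation of $P$ rather than on $t$: the base case (zero or one interior source) is immediate from the orientation being a directed path or an in-in wedge, and each inductive step removes one source via one application of \Clm{short-out-wedge}, always keeping total weight at most $t$ because the shortcut edge weight is bounded by the combined weight of the two wedge arms it replaces. I would also take care that in Case~3 the vertex $s$ and the two directed sub-paths to it indeed lie entirely within $P'$, which follows since every shortcut edge we introduce connects two vertices of $P$ and we never leave $P$'s vertex set. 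This claim will then feed into the proof of \Lem{main_licl} exactly as \Clm{out-wedge} does, by letting us convert long induced cycles in $H$ into short cycles (triangles, in-in/out-out wedge triangles) in the fraternal extension, which in turn control the \hubtreewidth.
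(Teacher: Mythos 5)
There is a genuine error: you have systematically swapped the roles of sources and sinks, and this inverts the logic of the whole argument. Fraternal augmentation only shortcuts \emph{out-out} wedges, i.e.\ two arcs (or, via \Clm{out-wedge}, two directed paths) leaving a common \emph{source}. The configuration you propose to contract --- two local sources $x_a,x_b$ with a local sink $x_c$ between them, so the arcs run $x_a\to\cdots\to x_c\leftarrow\cdots\leftarrow x_b$ --- is a long \emph{in-in} wedge. Neither \Clm{short-out-wedge} nor \Clm{out-wedge} applies to it, and no edge between $x_a$ and $x_b$ is guaranteed; in-in wedges are exactly the obstruction that augmentation cannot remove, which is why Case~3 exists at all. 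Symmetrically, your ``base case'' is also inverted: a unique interior source $x_j$ with directed paths from $x_j$ to both $u$ and $v$ is a long \emph{out-out} wedge, which \Clm{out-wedge} resolves into a directed path between $u$ and $v$ (Case~1 or~2); it is not an instance of Case~3, which requires paths \emph{from} $u$ and $v$ \emph{to} a common vertex $s$. As written, the argument tries to contract the uncontractible configurations and terminates on the ones that should have been contracted.

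A secondary issue: even for a genuine out-out configuration whose two arms have more than one edge each, \Clm{short-out-wedge} (which concerns a two-edge wedge) does not yield a single shortcut edge between the arm endpoints; you would need \Clm{out-wedge}, which only guarantees a directed \emph{path} between them, in one direction or the other, inside the wedge, so ``replace the sub-path by this single new edge'' is not available. With the in/out roles restored, your plan --- induct on the number of interior local sources, eliminate each via \Clm{out-wedge}, and terminate in a monotone path or a single surviving common sink --- would be a reorganization of the paper's proof, which instead inducts on the length of $P$, peels one edge off the end, and runs a case analysis of that edge's orientation against the three cases for the shorter path. But the proposal as stated does not prove the claim.
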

\begin{proof}
	We can prove by induction on $t$:
	
	Our base case is when $t=1$, we have that $u$ and $v$ are connected by one edge in $H$, in every orientation of $H$ we will have that either the edge goes from $u$ to $v$ (Case 1) or from $v$ to $u$ (Case 2).
	
	For the inductive step we assume that the claim holds for all $t \leq k$, we will show that it also holds for $t= k+1$: 
	
	Let $u$ and $v$ be two vertices in $H$ inducing the path $P$ of length $k+1$. Let $u'$ be the vertex adjacent to $u$ in $P$, and let $e$ be the edge connecting $u$ and $u'$. Then $u'$ and $v$ form an induced path of length $k$ using the vertices of $P$. 
	
	Using the assumption of the inductive step, we have that for all the $k$-fraternal extensions $\vec{H}_k \in \Frat{H}{k}$ we will have that there is a path $P''$ connecting $u'$ and $v$ (ignoring directions) using only vertices in $P$ with total weight at most $k$, following one of the three cases. We will prove that all the cases lead to the construction of a path $P'$ from $u$ to $v$ for all $\vec{H}_{k+1}\in \Frat{\vec{H}_k}{k+1}$ following one of the three conditions:
	\begin{itemize}
		\item Case 1: We have that $P''$ is a direct path from $u'$ to $v$ with weight at most $k$. We have two possibilities depending on the orientation of $e$:
		\begin{enumerate}[(a)]
			\item $e$ is oriented from $u$ to $u'$: then $e \cup P''$ forms a direct path from $u$ to $v$ with total weight at most $k+1$.
			\item $e$ is oriented from $u'$ to $u$: then, in $\vec{H}_{k}$, $e \cup P''$ forms a long out-out wedge with the source at $u'$ with at most $k+1$ edges and at most $k+1$ total weight. Hence by \Clm{out-wedge} we have that there will be a direct path in all $\vec{H}_{k+1}\in \Frat{\vec{H}_k}{k+1}$ from $u$ to $v$ or from $v$ to $u$.
		\end{enumerate}
		\item Case 2: We have that $P''$ is a direct path from $v$ to $u'$ with weight at most $k$. We have two possibilities depending on the orientation of $e$:
		\begin{enumerate}[(a)]
			\item $e$ is oriented from $u$ to $u'$: then we have that $u'$ is reachable from both $u$ and $v$ in $\vec{H}_{k}$.
			\item $e$ is oriented from $u'$ to $u$: then $e \cup P''$ forms a direct path from $v$ to $u$ with total weight at most $k+1$. 
		\end{enumerate}
		\item Case 3: We have that $P''$ has a vertex $s$ such that there is a direct path from both $u'$ and $v$ to $s$ in $P''$. We have two possibilities depending on the orientation of $e$:
		\begin{enumerate}[(a)]
			\item $e$ is oriented from $u$ to $u'$: then we have a direct path from $u$ to $s$, and $e \cup P''$ form a long in-in wedge ending in $s$.
			\item $e$ is oriented from $u'$ to $u$: then we can see how in $\vec{H}_k$, $u$ and $s$ form a long out-out wedge in $e\cup P''$ centered in $u'$ with less than $k$ edges and less than $k$ weight. Hence by \Clm{out-wedge} we have that there will be a direct path in all $\vec{H}_{k+1}\in \Frat{\vec{H}_k}{k+1}$ from $u$ to $s$ or from $s$ to $u$. In the first case, $s$ will be reachable from both $u$ and $v$, and in the second case we have a direct path from $v$ to $u$.
		\end{enumerate}
	\end{itemize}
	
	As we can see, every possibility lead to one of the three cases in the claim. In \Fig{paths} there is a depiction of all the cases.
	
	\begin{figure}
		\includegraphics[width=\textwidth]{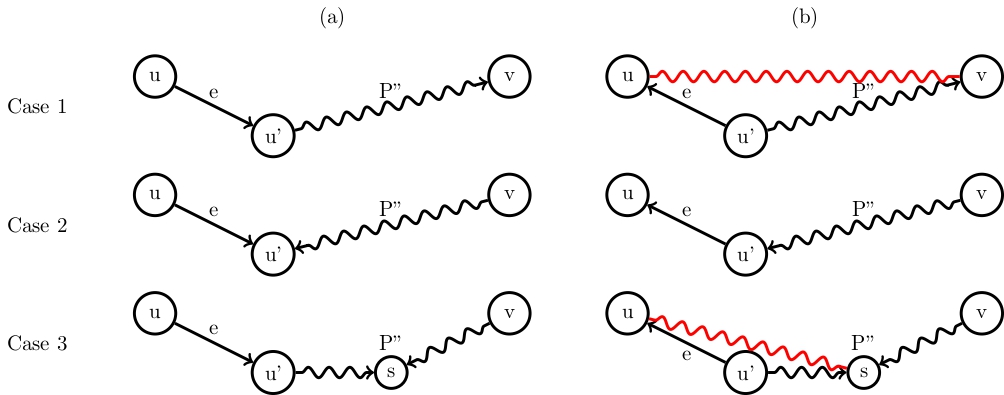}%
		\caption{The $6$ possible scenarios in the proof of \Clm{reachability-in-paths} depending on the orientations of $e$ and $P''$. The red edges represent the cases where a long out-out wedge appears, we know the endpoints of the long out-out wedges will connect because of \Clm{out-wedge}}
		\label{fig:paths}
	\end{figure}
\end{proof}

Before presenting the main technical lemma, we bring the definition of Unique reachability graph from \cite{Bera2021}. However this definition was created for directed acyclical graphs as it uses the sources of the graph. We adapt it to use the hubset instead:

\begin{definition} [Unique reachability graph] \label{def:urg}
	Let $\vec{H}$ be a directed graph with hubset $\HubSet = \HubSet(\vec{H})$ and $\HubSet_p \subseteq \HubSet$ be a subset of the hubset. We define a unique reachability graph $UR_{\HubSet_p}(\HubSet_p,E_{\HubSet_p})$ on the vertex set $\HubSet_p$, and the edge set $E_{\HubSet_p}$ such that there exists an edge $e = \{s_1,s_2\} \in E_{\HubSet_p}$, for $s_1,s_2 \in \HubSet_p$ if and only if the set $(\Reachable_{\vec{H}}(s_1) \cap \Reachable_{\vec{H}}(s_2))\setminus \Reachable_{\vec{H}}(\HubSet_p\setminus\{s_1,s_2\})$ is non-empty.
\end{definition}

We can now finally introduce the main technical lemma:

\begin{lemma} \label{lem:licl_to_acyclic}
	Given a pattern graph $H$, if $\LICL(H) < 3(t+1)$ then for any graph $\vec{H}' \in \Frat{H}{t}$ we have that $\UR_{\vec{H}'}$ is acyclical.
\end{lemma}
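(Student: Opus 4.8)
The plan is to prove the contrapositive: assuming $\UR_{\vec{H}'}$ contains a cycle, produce an induced cycle of length at least $3(t+1)$ in $H$, which gives $\LICL(H)\ge 3(t+1)$.

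Fix $\vec{H}'\in\Frat{H}{t}$ with hubset $\HubSet=\HubSet(\vec{H}')$, and suppose $\UR_{\vec{H}'}$ has a cycle; take a chordless one $s_1,s_2,\dots,s_\ell$ with $\ell\ge 3$, e.g.\ a shortest cycle. By \Def{urg}, for each $i$ (indices mod $\ell$) there is a vertex $v_i\in\big(\Reachable_{\vec{H}'}(s_i)\cap\Reachable_{\vec{H}'}(s_{i+1})\big)\setminus\Reachable_{\vec{H}'}(\HubSet\setminus\{s_i,s_{i+1}\})$ together with directed paths $P_i$ from $s_i$ to $v_i$ and $Q_i$ from $s_{i+1}$ to $v_i$ in $\vec{H}'$. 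I would first record the easy facts forced by unique reachability: the $v_i$ are pairwise distinct, none is a hub vertex, and no $P_i$ or $Q_i$ visits a hub vertex other than its source (any violation would place a directed path between two distinct hub vertices, or make some $v_j$ reachable from a forbidden hub vertex). In particular every $P_i,Q_i$ has at least one edge, and concatenating $P_1,Q_1,P_2,Q_2,\dots,P_\ell,Q_\ell$ (reversing each $Q_i$) yields a closed walk $\mathcal{W}$ in $\vec{H}'$ of total weight $\sum_{i=1}^{\ell}\big(W(P_i)+W(Q_i)\big)$.

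The core step is a per-vertex weight bound. For each $i$, let $a_i$ be the last vertex common to $P_i$ and $Q_{i-1}$ (so $a_i=s_i$ if they meet only there); the sub-paths from $a_i$ to $v_i$ and from $a_i$ to $v_{i-1}$ form a long out-out wedge with apex $a_i$ and endpoints $v_i,v_{i-1}$, of total weight at most $W(P_i)+W(Q_{i-1})$. I claim $W(P_i)+W(Q_{i-1})\ge t+1$. If not, this wedge has weight at most $t$, so by \Clm{out-wedge} (the long-wedge strengthening of \Clm{short-out-wedge}) the $t$-fraternal extension $\vec{H}'$ contains a directed path between $v_{i-1}$ and $v_i$ using only vertices of the wedge. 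In either orientation this puts one of $v_{i-1},v_i$ in the reachability set of a hub vertex outside its allowed pair: e.g.\ $v_i\rightsquigarrow v_{i-1}$ together with $Q_i:s_{i+1}\rightsquigarrow v_i$ makes $v_{i-1}$ reachable from $s_{i+1}$, and $s_{i+1}\notin\{s_{i-1},s_i\}$ for every $\ell\ge 3$, contradicting the choice of $v_{i-1}$; the reverse orientation contradicts the choice of $v_i$ symmetrically. Summing over $i$ and using $\sum_i W(Q_{i-1})=\sum_i W(Q_i)$ gives $W(\mathcal{W})=\sum_i\big(W(P_i)+W(Q_{i-1})\big)\ge \ell(t+1)\ge 3(t+1)$.

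It remains to descend from $\vec{H}'$ back to $H$. Unpacking a weight-$j$ edge of $\vec{H}'$ recursively (replace each augmentation edge by the two edges of the out-out wedge that created it) produces a walk of length exactly $j$ in $H$, by induction on $j$ since weight-$1$ edges are edges of $H$; applying this throughout $\mathcal{W}$ yields a closed walk $\mathcal{W}_H$ in $H$ of length $W(\mathcal{W})\ge 3(t+1)$. Extracting from $\mathcal{W}_H$ an induced cycle of length at least $3(t+1)$ is the step I expect to be the main obstacle. It requires choosing the $P_i,Q_i$ and the unpackings to be shortest so that $\mathcal{W}_H$ neither backtracks nor revisits a vertex, and then ruling out chords: a chord of $\mathcal{W}_H$ in $H$, combined with a sub-walk, would either shortcut some $P_i$ or $Q_i$ (contradicting minimality), or, pushed forward through the fraternal extension via \Clm{short-out-wedge}, \Clm{out-wedge}, and \Clm{reachability-in-paths}, create an additional edge or reachability among the $s_j$ that contradicts either the chordlessness of the $\UR_{\vec{H}'}$-cycle or the uniqueness of some $v_j$. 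Carrying this out — certifying that the obstruction cycle found in $\vec{H}'$ descends to a genuine induced cycle of $H$ of the same length — is where the bulk of the hubset/weighted adaptation of the BPS machinery is needed; with it in hand, $\LICL(H)\ge 3(t+1)$, completing the contrapositive.
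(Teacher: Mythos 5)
Your setup (witness vertices $v_i$ for each edge of the $\UR$-cycle, the observation that the connecting paths avoid other hub vertices, and the wedge argument giving $W(P_i)+W(Q_{i-1})\ge t+1$) is sound and parallels the paper's reasoning in spirit. But the proof has a genuine gap exactly where you flag it, and that gap is the actual content of the lemma: a closed walk of length $3(t+1)$ in $H$ gives \emph{no} lower bound on $\LICL(H)$ (a triangle supports arbitrarily long closed walks), so everything hinges on extracting an induced cycle of that length, and your sketch of how to rule out chords does not go through as stated. The problem is that chords of the unpacked walk $\mathcal{W}_H$ between vertices belonging to far-apart pieces (say the unpackings of $P_i$ and $P_j$ for non-adjacent $i,j$) need not shortcut any single $P_i$ or $Q_i$, and ``pushing them forward through the fraternal extension'' only yields a reachability violation if you already know which hub vertices can reach the two endpoints of the chord — information your construction does not control, since the vertices of $P_i$ and $Q_i$ are only known to be reachable from $s_i$ (resp.\ $s_{i+1}$) and could a priori be reachable from many other hubs.

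The paper closes exactly this hole by never forming a walk in $\vec{H}'$ at all. It partitions the relevant vertices into regions $\Unique(s_i)$ (reachable among the cycle hubs only from $s_i$) and $\Shared(s_i,s_{i+1})$ (reachable only from $s_i$ and $s_{i+1}$), and observes that $H$ has no edge between two distinct $\Unique$ regions nor between two distinct $\Shared$ regions — otherwise reachability would propagate and violate the regions' definitions. It then picks, for each $i$, a \emph{shortest path in $H$} between $\Shared(s_{i-1},s_i)$ and $\Shared(s_i,s_{i+1})$ confined to $\Area(s_i)=\Shared(s_{i-1},s_i)\cup\Unique(s_i)\cup\Shared(s_i,s_{i+1})$, and lower-bounds its length by $t+1$ using \Clm{reachability-in-paths}: an induced $H$-path of length $\le t$ inside $\Area(s_i)$ would force, in $\vec{H}'$, a directed path between its endpoints or a common descendant, contradicting the region membership of one of its endpoints. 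Choosing the connection vertices $u_i\in\Shared(s_i,s_{i+1})$ to minimize total length makes the $\ell$ paths meet only at their endpoints, and the no-cross-edge properties of the regions (plus minimality within each $\Shared$ region) certify that the resulting cycle is induced. If you want to salvage your route, you essentially have to import this region structure anyway — the length bound must be a statement about \emph{all} short paths in $H$ inside controlled vertex sets, not about the weight of the particular directed paths $P_i,Q_i$ you happened to extract from the $\UR$ witnesses.
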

\begin{proof}
	We prove a stronger statement, let the graph $\vec{H}' \in \Frat{H}{t}$ be a $t$-fraternal extension of $H$. If $\UR_{\vec{H}'}$ contains a cycle of length $l$, then $H$ will contain an induced cycle of length at least $(t+1)\cdot l$.
	
	Consider the subset ${\HubSet}'(\vec{H}')$ of $l$ vertices from the hubset $\HubSet(\vec{H})$ forming the cycle in $\UR_{\vec{H}'}$, we have ${\HubSet}'(\vec{H}') \subseteq \HubSet(\vec{H}')$. We can enumerate them as $s_1, s_2, \ldots, s_l$, where for $i\in [1,l]$ we have that $s_i$ and $s_{i+1}$ (with $s_{l+1}=s_1$) share a edge in $\UR_{\vec{H}'}$.
	
	For each vertex $s_i \in {\HubSet}'(\vec{H}')$, we define $\Unique(s_j)$ as the subset of vertices of $\vec{H}'$ that are reachable by $s_i$ but are not reachable by any vertex in ${\HubSet}'(\vec{H}')\setminus s_i$. Note that $\Unique(s_i)$ is not empty as we will have $s_i \in \Unique(s_i)$. Also for $s_i \neq s_j$ there can not be any edge in $H$ connecting any vertex in $\Unique(s_i)$ to any vertex in $\Unique(s_j)$ or one of the vertices would be reachable by at least $2$ vertices in ${\HubSet}'(\vec{H}')$.
	
	Similarly, for each pair of vertices $s_i, s_j \in {\HubSet}'(\vec{H}')$, let $\Shared(s_i, s_j)$ be the subset of vertices of $\vec{H}'$ that is reachable by both $s_i$ and $s_j$ but not by any other vertex in ${\HubSet}'(\vec{H}')\setminus\{s_i,s_j\}$. Note that for $i \in [1,l]$ we will have that $\Shared(s_i, s_{i+1})$ is not empty, as they share an edge in $\UR_{\vec{H}'}$. Again for $i\neq j$ there can not be any edge in $H$ connecting any vertex in $\Shared(s_i, s_{i+1})$ with any vertex in $\Shared(s_j, s_{j+1})$ or one of the vertices would be reachable by at least $3$ vertices in ${\HubSet}'(\vec{H}')$.
	
	In \Fig{shared_and_unique} we show an example of the definitions of $\Unique$ and $\Shared$.
	\begin{figure}
		\centering
		\includegraphics[width=\textwidth*2/3]{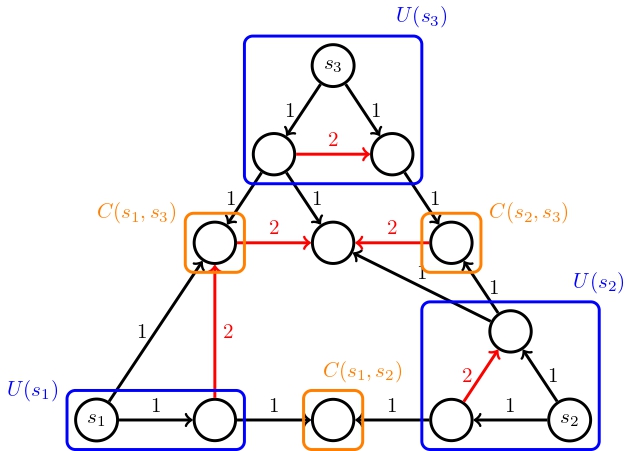}%
		\caption{An example of a graph with three sources that will form the hubset. The $\Unique$ and $\Shared$ regions for each source or pair of sources are highlighted. This graph is $2$-fraternal extension where the Unique Reachability Graph will have a cycle, we can see how the original graph (if we only consider edges with weight $1$) contains an induced cycle of length at least $9$.}
		\label{fig:shared_and_unique}
	\end{figure}
	
	Let $\ReducedV = \bigcup_{i=1}^l \Unique(s_i) \cup \bigcup_{i=1}^l \Shared(s_i, s_{i+1})$. We claim that there is an induced cycle of length $(t+1)\cdot l$ in the induced subgraph of $H$ by $\ReducedV$.
		
	We define $\Area(s_i) = \Shared(s_{i-1}, s_i) \cup \Unique(s_i) \cup \Shared(s_{i}, s_{i+1})$, that is, the portion of $\ReducedV$ reachable by $s_i$.
	
	Now we can show the following claim:
	
	\begin{claim} \label{clm:path_between_shared}
		For all $i\in [1,l]$, for all $u \in \Shared(s_{i-1}, s_i)$ (with $s_0 = s_l$) and for all $w \in \Shared(s_{i}, s_{i+1})$ (with $s_{l+1} = s_1$), exists an induced path in $H$ of length at least $t+1$ connecting $u$ and $w$ that only uses vertices in $\Area(s_i) $.
	\end{claim}
	\begin{proof}
		Let $P$ be the shortest path in $H$ from $u$ to $w$ that only uses vertices in $\Area(s_i) $. Because this is a shortest path it will also be an induced path. Such a path will always exists as there is a path from the vertex $s_i$ to all the vertices in $\Area(s_i)$ that only uses such vertices. We prove that it will have length at least $t+1$:
		
		Assume the opposite, then there is a path from $u$ to $w$ in $P$ that only uses vertices in $\Area(s_i)$ with at most $t$ edges. However by \Clm{reachability-in-paths} we would have that in $\vec{H}'$ there is a path $P' \subseteq P$ where $P'$ either:
		\begin{itemize}
			\item Case 1: Is a direct path from $u$ to $w$, which would mean that $w$ is reachable by three vertices of the hubset and hence not in $\Shared(s_{i}, s_{i+1})$.
			
			\item Case 2: Is a direct path from $w$ to $u$, which would mean that $u$ is reachable by three vertices of the hubset and hence not in $\Shared(s_{i-1}, s_{i})$.
			
			\item Case 3: Has a vertex $s$ that is reachable from both $u$ and $w$, but in that case $s$ would be reachable from three vertices of the hubset and it would not be in $\ReducedV$.
		\end{itemize} 
		
		Hence we reach a contradiction in all the cases and therefore the shortest path must have length at least $t+1$, concluding the proof of this claim.
		
	\end{proof}

	We can now select one vertex $u_i \in \Shared(s_i,s_{i+1})$ for each $i\in [1,l]$ such that the total length for all $i \in [1,l]$ of the paths connecting $u_i$ and $u_{i-1}$ using the vertices in $\Area(s_i)$ is minimized. We have a total of $l$ paths.
	
	If there is a common vertex between two of the paths (ignoring the ends of the paths) then setting that vertex as the end vertex would yield a shorter path, hence we have that the $l$ paths only intersect in their ends, and thus, combining them we obtain a cycle of length $l \cdot (t+1)$ (as each of the individual paths have length $t+1$ by the previous claim).
	
	Only left to show is that this cycle is actually an induced cycle: The paths forming the cycle are all induced, so suffices to show that there are no edges connecting two different paths. There are no edges connecting two different $\Unique(s_i)$ or two different $\Shared(s_i,s_{i+1})$, hence the only possibility would be to have two vertices in the same $\Shared(s_i,s_{i+1})$ of two different paths connected by an edge but if that is the case, replacing the center of that $\Shared$ region by any of the two vertices would reach a shorter total length, which is not possible. Hence we have that the cycle will be an induced cycle.
\end{proof}

\subsection{Rest of the proof}

	In this section we complete the proof of \Lem{main_licl}, the proof is very similar to the proof for Lemma 4.4 in \cite{Bera2021}. For completeness we will include the whole proof with the convenient modifications. We will start by defining a partial \hubtree\, which is a generalization of the partial \dagtree\ introduced in \cite{Bera2021}:
	
	\begin{definition} [partial \hubtree\ ] \label{def:partial_dag}
		Let $\vec{H}$ be a directed graph with hubset set $\HubSet = \HubSet(\vec{H})$. For a subset $\HubSet_p \subseteq \HubSet$, a partial \hubtree\ of $\vec{H}$ with respect to $\HubSet_p$ is a tree $\mathcal{T}=(\mathcal{B},\mathcal{E})$ with the following three properties.
		\begin{enumerate}
			\item Each node $B \in \mathcal{B}$ is a subset of $\HubSet_p$: $B \subseteq S_p$.
			\item The union of the nodes in $\mathcal{T}$ is the entire set $\HubSet_p$: $\bigcup_{B\in \mathcal{B}}B=S_p$.
			\item For all $B,B_1,B_2 \in \mathcal{B}$, if $B$ is on the unique path between $B_1$ and $B_2$ in $\mathcal{T}$, then we have $\Reachable_{\vec{H}}(B_1)\cap \Reachable_{\vec{H}}(B_2) \subseteq \Reachable_{\vec{H}}(B)$.
		\end{enumerate}
	\end{definition}
	
	In the case that $\HubSet_p = \HubSet$ this definition corresponds exactly with the \hubtree.
	
	Now we bring a few more definitions from \cite{Bera2021}, again generalized to directed graphs:

	\begin{definition}[Intersection-cover and $\HubSet_p$-cover] \label{def:inter_cover}
		Let $\vec{H}$ be a directed graph with hubset $\HubSet$ = $\HubSet(\vec{H})$. Let $s_1$ and $s_2$ be a pair of vertices in $\HubSet$. We call a vertex $s\in \HubSet$ an intersection-cover of $s_1$ and $s_2$ if $\Reachable(s_1) \cap \Reachable(s_2) \subseteq \Reachable(s)$. Assume $\HubSet_p \subseteq \HubSet$ is a subset of the hubset $\HubSet$. We call a vertex $s \in \HubSet$, a $\HubSet_p$-cover of $s_1 \in \HubSet$ if for each vertex $s_2 \in \HubSet_p$, $s$ is an intersection-cover for $s_1$ and $s_2$.
	\end{definition}

	\begin{definition} [Good-pair] \label{def:good_pair}
		Let $\vec{H}$ be a directed graph with hubset $\HubSet$ = $\HubSet(\vec{H})$. Let $x \in \HubSet$ be a vertex of the hubset and $\mathcal{T}_{\HubSet_p}$ be a partial \hubtree\ of width one for $\HubSet_p \subset \HubSet$ where $x \notin \HubSet_p$. We call the pair $(x, \mathcal{T}_{\HubSet_p})$ a good-pair if there exists a leaf node $l \in \mathcal{T}_{\HubSet_p}$ connected to the node $d \in \mathcal{T}_{\HubSet_p}$ such that $d$ is an intersection-cover for $x$ and $l$.
	\end{definition}

	We also restate the following Lemma, presented in \cite{Bera2021} as Lemma $4.8$, but extending it to non-acyclical directed graphs:
	
	\begin{lemma} \label{lem:adding_source} [Equivalent to Lemma $4.8$ of \cite{Bera2021}]
		Let $\vec{H}$ be a directed graph with hubset $\HubSet = \HubSet(\vec{H})$ and let $\HubSet_p \subset \HubSet$ be a subset of the hubset. Assume $\cT$ is a partial \hubtree\ for $\HubSet_p$ with $\htw(\cT)=1$. Consider a vertex $s \in \HubSet$ such that $s \not\in \HubSet_p$. If $d$ is a $\HubSet_p$-cover of $s$, then connecting $s$ to $d$ in $\cT$ as a leaf results in a tree $\cT'$ that is a partial \hubtree\ for $\HubSet_p \cup \{s\}$. Furthermore, $\htw(\cT')=1$
	\end{lemma}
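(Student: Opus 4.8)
The plan is to check directly that $\cT'$ meets the three defining conditions of a partial \hubtree\ of $\vec{H}$ with respect to $\HubSet_p \cup \{s\}$ (\Def{partial_dag}), and that its width is still one. Conditions (1) and (2) and the width bound are immediate bookkeeping: every node of $\cT$ is a subset of $\HubSet_p \subseteq \HubSet_p \cup \{s\}$ and the new leaf $\{s\}$ is a subset of $\HubSet_p \cup \{s\}$; the union of the nodes of $\cT'$ is $\bigl(\bigcup_{B \in \cB} B\bigr) \cup \{s\} = \HubSet_p \cup \{s\}$; and since $\htw(\cT)=1$ every node of $\cT$ has size at most one, as does the singleton $\{s\}$, so $\htw(\cT')=1$.

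The only real content is condition (3), which I would establish by a short case analysis on where the new leaf sits. Fix nodes $B, B_1, B_2$ of $\cT'$ with $B$ on the unique $B_1$--$B_2$ path in $\cT'$; we must show $\Reachable_{\vec{H}}(B_1) \cap \Reachable_{\vec{H}}(B_2) \subseteq \Reachable_{\vec{H}}(B)$. Since $\{s\}$ is a leaf of $\cT'$ it is never an internal vertex of a path, so: if neither $B_1$ nor $B_2$ is $\{s\}$ the whole path lies in $\cT$ and we are done because $\cT$ is a partial \hubtree; if $B = \{s\}$ then $B \in \{B_1, B_2\}$ and the containment is trivial; and otherwise we may assume $B_2 = \{s\}$ and $B \in \cB$, in which case the $B_1$--$\{s\}$ path in $\cT'$ is the $B_1$--$d$ path of $\cT$ followed by the edge $d\{s\}$, so $B$ lies on the $B_1$--$d$ path of $\cT$ (with $B = B_1$ again trivial).

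In that remaining case I would invoke the hypothesis that $d$ is a $\HubSet_p$-cover of $s$ (\Def{inter_cover}). Since $B_1 \subseteq \HubSet_p$, writing $B_1 = \{s_2^{(1)}, \dots, s_2^{(p)}\}$ we get
\[
\Reachable_{\vec{H}}(\{s\}) \cap \Reachable_{\vec{H}}(B_1) \;=\; \bigcup_{j} \Bigl(\Reachable_{\vec{H}}(s) \cap \Reachable_{\vec{H}}(s_2^{(j)})\Bigr) \;\subseteq\; \Reachable_{\vec{H}}(d),
\]
because $d$ is an intersection-cover of $s$ and each $s_2^{(j)} \in \HubSet_p$. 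Hence $\Reachable_{\vec{H}}(B_1) \cap \Reachable_{\vec{H}}(\{s\}) \subseteq \Reachable_{\vec{H}}(B_1) \cap \Reachable_{\vec{H}}(d)$, and since $B$ lies on the $B_1$--$d$ path in $\cT$, condition (3) for $\cT$ yields $\Reachable_{\vec{H}}(B_1) \cap \Reachable_{\vec{H}}(d) \subseteq \Reachable_{\vec{H}}(B)$; chaining the two containments finishes the case. I do not expect a genuine obstacle here: this is precisely the argument for Lemma 4.8 of \cite{Bera2021}, and every step above refers only to the sets $\Reachable_{\vec{H}}(\cdot)$, which are well defined for an arbitrary (possibly cyclic) directed graph, so the acyclic proof transfers verbatim once ``source'' is read as ``hubset vertex''. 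The only thing to be careful about is keeping the degenerate subcases ($B=B_1$, $B=B_2$, $B=\{s\}$, and the choice of which endpoint is the leaf) separated from the main computation.
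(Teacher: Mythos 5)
Your proposal is correct and follows essentially the same route as the paper: the paper argues by contradiction while you verify condition (3) directly, but in both cases the only nontrivial situation is a path ending at the new leaf $\{s\}$, handled by chaining $\Reachable_{\vec{H}}(B_1)\cap\Reachable_{\vec{H}}(s)\subseteq\Reachable_{\vec{H}}(d)$ (from the $\HubSet_p$-cover hypothesis) with condition (3) of $\cT$ applied along the $B_1$--$d$ path. The remaining bookkeeping (conditions (1), (2), and the width bound) matches the paper as well.
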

	\begin{proof}
		Because we had $\htw(\cT)=1$ and we are just adding a leaf with a single vertex to $\cT$, we will have that $\htw(\cT')=1$. Therefore, suffices to show that $\cT'$ is a valid partial \hubtree\ for $\HubSet_p \cup \{s\}$. We prove by contradiction: Assume it is not, then there must exist three nodes $s_1,s_2,s_3 \in \cT'$, with $s_3$ being in the path between $s_1$ and $s_2$, such that $\Reachable(s_1)\cap \Reachable(s_2) \not\subseteq \Reachable(s_3)$. If $s \not\in \{s_1,s_2,s_3\}$, then this is not possible as all the nodes were already in $\cT$ and it was a valid \hubtree{}. Hence $s$ must be one of the three vertices, it can not be $s_2$ as $s$ is a leaf, we can assume without loss of generalization that $s=s_3$. Now $s_2$ lies on the unique path between $s_1$ and $s$. 
		
		If $s_2=d$, then because $d$ is a $\HubSet_p$-cover of $s$ we will have that $\Reachable(s_1)\cap \Reachable(s) \subseteq \Reachable(d)$, reaching a contradiction. Otherwise, $s_2\neq d$ but $s_2$ must lie in the path between $s_1$ and $d$, and hence $\Reachable(s_1)\cap \Reachable(d) \subseteq \Reachable(s_2)$ and we also had that $\Reachable(s_1)\cap \Reachable(s) \subseteq \Reachable(d)$ hence $\Reachable(s_1)\cap \Reachable(s) \subseteq \Reachable(s_2)$. Reaching a contradiction.
	\end{proof}

	Now we can prove the main lemma of this section, again following closely the proof of Lemma $4.4$ of \cite{Bera2021}. We restate the lemma:
	
	\lemmalicl*
	\begin{proof}
		Let $\vec{H}'$ be a $t$-fraternal extension of $H$ and $\HubSet=\HubSet(\vec{H}')$ be the hubset of $\vec{H}'$. Let $\HubSet_p \subseteq \HubSet$ denote a subset of $\HubSet$. We prove by induction on the size of $\HubSet_p$, that there exists a partial \hubtree\ of width $1$ for each $\HubSet_p \subseteq \HubSet$. If $\HubSet_p = \HubSet$ then we have that there is a \hubtree\ for $\vec{H}'$ of width $\htw = 1$.
		
		The base cases for $|\HubSet_p|=1$ and $|\HubSet_p|=2$ are both trivial: for $|\HubSet_p|=1$ we can put the only vertex of $\HubSet_p$ in its own bag and it will be a valid partial \hubtree. For $|\HubSet_p|=2$ we can put both vertices in separate bags and connect them by an edge, obtaining again a valid partial \hubtree.
		
		For the inductive step we assume that it is possible to build a partial \hubtree\ with \hubtreewidth\ one for any subset $\HubSet_p \subset \HubSet$ where $|\HubSet_p|\leq r$, and $1 \leq |r| < |\HubSet|$. We show how to construct a partial \hubtree\ with $\htw=1$ for any subset of $\HubSet$ of size $r+1$:
		
		Let $\HubSet_{r+1} \subseteq \HubSet$ be any subset of size $r+1$. Let $x \in \HubSet_{r+1}$ be an arbitrary vertex of $\HubSet_{r+1}$. By the induction hypothesis we can construct a partial \hubtree\ with $\htw=1$ for $\HubSet_{-x}=\HubSet_{r+1}\setminus\{x\}$. We denote such \hubtree\ with $\cT_{-x}$. We can then show that $(x, \cT_{-x})$ forms a good-pair, this is given by the following claim which is equivalent to Claim $4.10$ of \cite{Bera2021}:
		
		\begin{claim}
			There exists a vertex $x \in \HubSet_{r+1}$ and a width one partial \hubtree\ $\cT_{-x}$ for $\HubSet_{-x} = \HubSet_{r+1}\setminus \{x\}$ such that $(x, \cT_{-x})$ is a good-pair.
		\end{claim}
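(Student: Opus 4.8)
The plan is to pull the vertex $x$ out of the forest structure of the unique reachability graph on $\HubSet_{r+1}$, and then to take $l$ to be a suitable leaf of a width-one partial \hubtree\ of $\HubSet_{-x}$ supplied by the surrounding induction hypothesis. Fix the $t$-fraternal extension $\vec{H}' \in \Frat{H}{t}$ under consideration and its hubset $\HubSet = \HubSet(\vec{H}')$, and recall that $|\HubSet_{r+1}| = r+1 \geq 3$, the cases $|\HubSet_p| \leq 2$ being the base of the surrounding induction. The first ingredient is that $\UR_{\HubSet_{r+1}}$ is a forest: this is \Lem{licl_to_acyclic}, whose proof only inspects the vertices of a hypothetical cycle and therefore applies to the unique reachability graph on \emph{any} subset of $\HubSet$ — a cycle there would have length $\ell \geq 3$ (the graph being simple) and would produce an induced cycle of length $\geq (t+1)\ell \geq 3(t+1) > \LICL(H)$ in $H$, a contradiction.

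Write $F = \UR_{\HubSet_{r+1}}$. Since $F$ is a forest it has a vertex $x$ of degree at most one; fix such an $x$, and if $x$ has an $F$-neighbour call it $y$. By the induction hypothesis of the surrounding proof the set $\HubSet_{-x} = \HubSet_{r+1}\setminus\{x\}$, which has size $r \geq 2$, admits a width-one partial \hubtree\ $\cT_{-x}$. A tree on at least two nodes has at least two leaves (degree-one nodes) and at most one of them is $y$, so we may fix a leaf $l$ of $\cT_{-x}$ with $l \neq y$; let $d$ be the unique neighbour of $l$ in $\cT_{-x}$. We claim that $(x, \cT_{-x})$ is a good-pair, witnessed by $l$ and $d$, i.e.\ that $d$ is an intersection-cover of $x$ and $l$ in the sense of \Def{inter_cover}.

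This is obtained by chaining two set inclusions. First, $d$ lies on the unique path of $\cT_{-x}$ from $l$ to every other node (because $l$ is a leaf with neighbour $d$), so property~3 of a partial \hubtree\ gives $\Reachable(l) \cap \Reachable(m) \subseteq \Reachable(d)$ for every $m \in \HubSet_{-x}\setminus\{l\}$; taking the union over such $m$ yields $\Reachable(l) \cap \Reachable(\HubSet_{-x}\setminus\{l\}) \subseteq \Reachable(d)$. Second, by the choice of $l$ the vertices $x$ and $l$ are non-adjacent in $F = \UR_{\HubSet_{r+1}}$, so by \Def{urg} the set $(\Reachable(x) \cap \Reachable(l)) \setminus \Reachable(\HubSet_{r+1}\setminus\{x,l\})$ is empty; since $\HubSet_{r+1}\setminus\{x,l\} = \HubSet_{-x}\setminus\{l\}$, this says $\Reachable(x) \cap \Reachable(l) \subseteq \Reachable(\HubSet_{-x}\setminus\{l\})$. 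As $\Reachable(x) \cap \Reachable(l)$ is trivially also inside $\Reachable(l)$, the second inclusion upgrades to $\Reachable(x) \cap \Reachable(l) \subseteq \Reachable(l) \cap \Reachable(\HubSet_{-x}\setminus\{l\})$, and feeding this into the first inclusion gives $\Reachable(x) \cap \Reachable(l) \subseteq \Reachable(d)$, which is exactly the intersection-cover condition.

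The only step that asks for a moment's care is the first one: noticing that \Lem{licl_to_acyclic}, though stated for the full hubset, in fact certifies that the unique reachability graph on \emph{every} subset of $\HubSet$ is a forest — which its proof already establishes, since it reasons only about the vertices of the putative cycle. After that, the argument is a two-line manipulation of reachability sets, using only the elementary facts that a forest has a vertex of degree at most one and that a tree on at least two nodes has at least two leaves. This is the hubset analogue of Claim~4.10 of \cite{Bera2021}.
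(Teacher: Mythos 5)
Your proof is correct and is essentially the paper's argument read in the contrapositive: both hinge on the acyclicity of $\UR_{\HubSet_{r+1}}$ (via \Lem{licl_to_acyclic} applied to a subset of the hubset, a point you rightly make explicit and the paper uses implicitly) together with the same correspondence between leaves of $\cT_{-x}$, intersection-covers, and edges of the unique reachability graph. The only difference is presentational — you pick a degree-at-most-one vertex of the forest and construct the good-pair directly, whereas the paper assumes no good-pair exists, concludes every vertex of $\UR_{\HubSet_{r+1}}$ has degree at least two, and derives a cycle for the contradiction.
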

		\begin{proof}
			We prove by contradiction. Assume that the claim is false, consider the unique reachability graph on the vertex set $\HubSet_{r+1}$, $\UR_{\HubSet_r+1}$. Let $x \in \HubSet_{r+1}$ be an arbitrary vertex from $\HubSet_{r+1}$. By the assumption we have that $(x, \cT_{-x})$ is not a good-pair. Hence, for each leaf node $l \in \cT_{-x}$ connected to the vertex $d$ we get that $d$ is not an intersection-cover for $x$ and $l$. Hence there exist a vertex $v$ that is reachable by $x$ and $l$ but not $d$. But also, because $d$ is the only vertex connected to $l$ in $\cT_{-x}$ we have that $d$ is a $\HubSet_{-x}$-cover for $l$ and the only vertex that can reach $v$ in $\HubSet_{-x}$ is $l$. Thus, the edge $\{x,l\}$ will be in $\UR_{\HubSet_r+1}$. 
			
			Because $\cT_{-x}$ has at least two leaves we will have that the degree of $x$ in $\UR_{\HubSet_{r+1}}$ must be at least $2$. This is true for every vertex in $x \in \HubSet_{r+1}$. This implies that there is a cycle in $\UR_{S_{r+1}}$ of length at least $3$, using \Lem{licl_to_acyclic} this means that $\LICL(H) \geq 3(t+1)$, but we had that $\LICL(H) < 3(t+1)$, hence reaching a contradiction.
		\end{proof}
		
		Now, we show that if we have a good-pair $(x, \cT_{-x})$ we can construct a \hubtree\ of \hubtree\ one, again \cite{Bera2021} proved a more restrictive statement that we will generalize:
		
		\begin{claim}[Equivalent to Claim $4.9$ from \cite{Bera2021}]
			Let $x \in \HubSet_{r+1}$ and $\cT_{-x}$ be a width one partial \hubtree\ for $\HubSet_{-x} = \HubSet_{r+1} \setminus \{x\}$ such that $(x,\cT_{-x})$ is a good-pair. Then, there exists a partial \hubtree\ $\cT$ for $\HubSet_{r+1}$ with $\htw(\cT)=1$.
		\end{claim}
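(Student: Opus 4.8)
The plan is to add $x$ to the decomposition \emph{indirectly}: rather than attaching $x$ to $\cT_{-x}$ (which is not possible in general, since the node $d$ from the good-pair need not be a $\HubSet_{-x}$-cover for $x$), I would first obtain a width-one partial hub-tree on the modified set $\HubSet_p := (\HubSet_{-x}\setminus\{l\})\cup\{x\}$ from the enclosing induction, and then re-attach the leaf $l$ to $d$ using \Lem{adding_source}. Throughout, $l$ denotes the leaf of $\cT_{-x}$ and $d$ its unique neighbour furnished by the good-pair hypothesis, so $d$ is an intersection-cover for $x$ and $l$; note that $d\neq l$ (as $|\HubSet_{-x}|=r\geq 2$ in the inductive step, so $l$ has a neighbour) and $l\notin\HubSet_p$.

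The first step is to verify that $d$ is a $\HubSet_p$-cover for $l$. For every $s_2\in\HubSet_{-x}\setminus\{l\}$, the vertex $d$ lies on the unique $\cT_{-x}$-path between $l$ and $s_2$ because $l$ is a leaf adjacent only to $d$; hence the third defining property of the partial hub-tree $\cT_{-x}$ gives $\Reachable(l)\cap\Reachable(s_2)\subseteq\Reachable(d)$. For $s_2=x$ this same containment is exactly the good-pair hypothesis. Since $\HubSet_p=(\HubSet_{-x}\setminus\{l\})\cup\{x\}$, this shows $d$ is an intersection-cover for $l$ and every member of $\HubSet_p$, i.e.\ a $\HubSet_p$-cover for $l$.

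Next, $|\HubSet_p|=|\HubSet_{-x}|=r$, so the induction hypothesis of \Lem{main_licl} (the induction being on the size of the hub-subset) provides a width-one partial hub-tree $\cT^{\dagger}$ for $\HubSet_p$. Because $d\in\HubSet_{-x}\setminus\{l\}\subseteq\HubSet_p$, the singleton $\{d\}$ is a node of $\cT^{\dagger}$. Now apply \Lem{adding_source} with hub-subset $\HubSet_p$, new vertex $s=l$, and cover $d$: attaching $l$ as a leaf to $d$ in $\cT^{\dagger}$ yields a tree $\cT$ that is a width-one partial hub-tree for $\HubSet_p\cup\{l\}=\HubSet_{-x}\cup\{x\}=\HubSet_{r+1}$, which is exactly the desired object.

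The step I expect to require the most care is the first one: it is the only place where the leaf structure of $\cT_{-x}$ and the good-pair condition must be combined, to conclude that a single node $d$ covers \emph{all} the pairs $(l,s_2)$ with $s_2$ ranging over $\HubSet_p$. Everything after that is a direct invocation of \Lem{adding_source} together with the outer induction hypothesis; the genuine idea of the argument is the reroute through the set $(\HubSet_{-x}\setminus\{l\})\cup\{x\}$ rather than trying to extend $\cT_{-x}$ directly.
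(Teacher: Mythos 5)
Your proposal is correct and takes essentially the same route as the paper: both arguments reroute through the set $\HubSet_{r+1}\setminus\{l\} = (\HubSet_{-x}\setminus\{l\})\cup\{x\}$, invoke the outer induction hypothesis to get a width-one partial \hubtree\ on that set, verify that $d$ is a cover for $l$ with respect to it (combining the leaf structure of $\cT_{-x}$ with the good-pair condition), and re-attach $l$ to $d$ via \Lem{adding_source}. Your explicit check that $d$ covers the pairs $(l,s_2)$ only for $s_2$ ranging over $\HubSet_{r+1}\setminus\{l\}$ is in fact slightly more careful than the paper's phrasing, which asserts a $\HubSet_{r+1}$-cover.
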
 
		\begin{proof}
			For $(x, \cT_{-x})$ to form a good-pair we must have that there is a leaf $l$ in $\cT_{-x}$ connected to a node $d \in \cT_{-x}$ such that $d$ is an intersection-cover for $x$ and $l$. From the assumption of the inductive step we can construct a \hubtree\ of width one for $\HubSet_{r+1}\setminus l$ and connect $l$ as a leaf to the node $d$. Let $\cT$ be the resultant tree. We can show that $\cT$ is a valid \hubtree\ of $\HubSet_{r+1}$ of width $1$.
			
			We had that $d$ is intersection-cover of $l$ and $x$, also because $l$ only connects to $d$ in $\cT_{-x}$ we have that $d$ is a $\HubSet_{-x}$-cover of $l$. Hence $d$ is a $\HubSet_{r+1}$-cover of $l$.
			
			By \Lem{adding_source} we have that $\cT$ is a valid partial \hubtree\ of $\HubSet_{r+1}$ with \hubtreewidth\ one.
		\end{proof}
		
		Hence, combining both claims we get that we can construct a \hubtree\ with \hubtreewidth\ of 1 for $\HubSet_{r+1}$. This proves the induction and subsequently the lemma.
	\end{proof}

\section{Generalizing Bressan's algorithm} \label{sec:bressan}

In this section we prove \Lem{main_bressan}. This will complete the proof of the upper bound of our Main Theorem, as shown in \Sec{upper}. We will show how to adapt Bressan's Algorithm to compute the homomorphisms of the fraternal extensions. This requires working with non-acyclical graphs using the hubset and the \hubtree\ instead of the \dagtree, and using graphs that are weighted and labeled. We start by restating the main lemma of this section:

\lemmaBressan*

Given directed graphs $\vec{H}$ and $\vec{G}$. Lemma $4$ in \cite{Bressan2021} shows a way of computing homomorphisms for the subgraphs induced by $\Reachable_{\vec{H}}(s)$ for every source $s \in S(\vec{H})$. We can generalize this result to directed weighted and labeled graphs:

\begin{lemma} \label{lem:hom_Hs}
	Let $\vec{H}$ be a directed weighted and labeled graphs with $k$ vertices and hubset $\HubSet = \HubSet(\vec{H})$. Let $\vec{G}$ be a directed weighted graph with max outdegree $d = \Delta^+(\vec{G})$. For any vertex $s \in \HubSet$, the set of homomorphisms from $\vec{H}(s)$ to $\vec{G}$ has size $O(d^{k-1}n)$ and can be enumerated in time $O(k^2d^{k-1}n)$.
\end{lemma}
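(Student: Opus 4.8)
The plan is to show that every homomorphism $\phi$ from $\vec H(s)$ to $\vec G$ is completely determined, once the image $\phi(s)$ of the ``root'' $s$ is fixed, by a bounded-depth outward exploration, and that there are only $O(d^{k-1})$ extensions of a given root image. First I would fix a spanning structure of $\vec H(s)$: since every vertex of $\vec H(s)$ is reachable from $s$ by a directed path, pick for each $v\in\Reachable_{\vec H}(s)$ a directed $s$-to-$v$ path, and take the union of these paths. This yields a directed subgraph of $\vec H(s)$ that is a directed tree $\vec T$ rooted at $s$ (contracting multi-reachability to a BFS/DFS tree from $s$), spanning all $k' \le k$ vertices of $\vec H(s)$. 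The key point is that in $\vec T$, each non-root vertex has exactly one in-tree-parent, and that parent lies ``closer'' to $s$.

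The enumeration algorithm would then be: process the vertices of $\vec H(s)$ in a topological-like order consistent with $\vec T$ (root first, then children, etc. — this is well defined on $\vec T$ even though $\vec H$ itself need not be acyclic). For the root $s$, try all $n$ choices of $\phi(s)\in V_{\vec G}$ with $L_{\vec H}(s) = L_{\vec G}(\phi(s))$. For each subsequent vertex $v$ with tree-parent $p$, the tree edge $(p,v)$ (directed in $\vec H$, either $p\to v$ or... — actually since $\vec T$ follows directed paths from $s$, tree edges point away from $s$, so the edge is $p\to v$ in $\vec H$) forces $\phi(v)$ to be an out-neighbor of $\phi(p)$ in $\vec G$; there are at most $d = \Delta^+(\vec G)$ such choices. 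Hence the number of partial assignments is at most $n \cdot d^{k'-1} = O(d^{k-1} n)$. At each such leaf of the search, we check in $O(k^2)$ time (iterating over all $\le k^2$ edges of $\vec H(s)$) that every edge $(u,u')$ of $\vec H(s)$ is mapped to an edge of $\vec G$ with $W_{\vec H}((u,u')) \ge W_{\vec G}((\phi(u),\phi(u')))$ and that every vertex label matches; we keep $\phi$ iff all checks pass. This certifies that the set of valid homomorphisms has size $O(d^{k-1}n)$ (each is produced by exactly one branch of the search, since $\vec T$ spans all vertices) and that the enumeration runs in time $O(k^2 d^{k-1} n)$: $O(d^{k-1}n)$ branches, $O(k^2)$ work per branch, and the per-branch extension work (looking up $\le d$ out-neighbors for each of the $k-1$ tree edges) is absorbed since $k-1 \le k^2$.

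The steps, in order: (1) build the rooted spanning tree $\vec T$ of $\vec H(s)$ from directed $s$-paths and fix a compatible processing order; (2) describe the branching enumeration, bounding the branching factor by $d$ at each of the $k-1$ non-root tree vertices and by $n$ at the root, giving the $O(d^{k-1}n)$ count; (3) verify correctness — every valid $\phi$ appears (because tree edges are genuine edges of $\vec H(s)$, a homomorphism must respect them) and every $\phi$ the search outputs is valid (because of the final $O(k^2)$ edge-and-label check); (4) tally the runtime as $O(k^2 d^{k-1} n)$. The main obstacle — and the reason this is a genuine generalization of Bressan's Lemma 4 rather than a verbatim copy — is that $\vec H(s)$ need not be acyclic, so one cannot simply speak of ``sources'' or a topological order of $\vec H(s)$ itself; one must instead work entirely with the reachability tree $\vec T$ and argue that a topological order of $\vec T$ suffices to drive the extension even though back-edges of $\vec H(s)$ exist (these are simply handled by the final verification pass, not by the extension pass). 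A secondary point to get right is the weighted/labeled bookkeeping: labels only shrink the root branching (and are checked at the end), and the weight inequality $W_{\vec H} \ge W_{\vec G}$ is likewise a closing constraint that does not affect the asymptotic counts, so the bounds are exactly as in the unweighted unlabeled case.
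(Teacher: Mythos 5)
Your proposal is correct and follows essentially the same route as the paper: fix a directed spanning tree of $\vec H(s)$ rooted at $s$, process vertices in an order consistent with the tree so the root contributes a factor $n$ and each other vertex a factor $d$, then validate each of the $O(d^{k-1}n)$ candidates against all edges, weights, and labels in $O(k^2)$ time.
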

\begin{proof}
	Let $\vec{T}$ be a directed spanning tree of $\vec{H}(s)$ rooted at $s$. Let $O$ be any arbitrary ordering of the vertices of $\vec{H}(s)$ such that all the edges of $T$ are not inverted. For every vertex $u \in \vec{H}(s)$ following that ordering, we can enumerate all the possible candidates of $\vec{G}$ for the mapping $\phi(u)$. The first vertex $s$ will have $n$ candidates, as it can be assigned to every vertex in $\vec{G}$. However for all the other vertices, because they have at least one incoming edge from a vertex already assigned, we just need to look at the out-neighbors of the corresponding mapped vertex in $\vec{G}$, there will be then at most $d$ candidates, as that is the maximum outdegree in $\vec{G}$. Hence the total number of possible homomorphisms is bounded by $O(nd^{k-1})$. We can list all these candidate homomorphisms in a similar amount of time. Only left is to verify if each candidate homomorphisms $\phi$ is valid:
	\begin{itemize}
		\item For each vertex $u \in V_{\vec{H}(s)}$, verify that they are mapped to a vertex with the same label $L_{\vec{H}}(u) = L_{\vec{G}}(\phi(u))$.
		\item For each edge $(u,v) \in E_{\vec{H}(s)}$, verify that $(\phi(u),\phi(v))\in E_{\vec{G}}$ and $W_{\vec{H}}((u,v)) \geq W_{\vec{G}}((u,v))$.
	\end{itemize}
	This can be done in $O(k^2)$ time as we will have at most $k$ vertices and $k^2$ edges, and every check can be done in constant time. Hence the total time required will be $O(k^2d^{k-1}n)$.
\end{proof}	

Note that in the case that the graph $\vec{G}$ has bounded outdegree and $\vec{H}$ is constant sized we will be able to compute $\Hom{\vec{G}}{\vec{H}(s)}$ in $O(n)$ time for all the vertices $s\in \HubSet(\vec{H})$.

Given a \hubtree\ $\cT$ of $\vec{H}$, we will use $\down(B)$ to denote the down-closure of $B$ in $\cT$, that is, the union of all the bags $B \in \cB$ that are descendants of $B$. We will then use $\vec{H}(\down(B))$ to refer to the union of all the graphs $\vec{H}(B)$ for $B \in \down(B)$.

If $\vec{H}$ has $\htw(H)=1$ we can used a modification of the algorithm presented by Bressan in \cite{Bressan2021} to compute $\Hom{\vec{G}}{\vec{H}}$ in linear time. Given a \hubtree\ decomposition $\cT$ of $\vec{H}$, this algorithm uses dynamic programming to compute $\Hom{\vec{G}}{\vec{H}(\down(s))}$ for any vertex $s \in \HubSet$ aggregating the values of $\Hom{\vec{G}}{\vec{H}(\down(s'))}$ of all the descendants $s'$ of $s$ and $\Hom{\vec{G}}{\vec{H}(s)}$.

Given a homomorphism $\phi$ we say that $\phi'$ respects $\phi$ if for every value $u$  that $\phi$ takes $\phi(u)=\phi'(u)$. Given a homomorphism $\phi$ that maps the vertices in the set $V$, we call the restriction of $\phi$ to $V'\subseteq V$ to the map $\phi'$ that maps the vertices of $V'$ with $\phi'(v)=\phi(v) \forall v\in V'$. Additionally we denote with $\ext(\vec{H},\vec{G},\phi)$ to the number of homomorphisms $\phi'$ from $\vec{H}$ to $\vec{G}$ that respects $\phi$. We can show the following lemma which is a generalization of Lemma $3$ in \cite{Bressan2021}:

\begin{lemma} \label{lem:bressan_extension}
	Let $\cT$ be a \hubtree\ of a graph $\vec{H}$ and let $B_1,\ldots,B_l$ be the children of $B$ in $\cT$. Fix $\phi_B: \vec{H}(B) \to \vec{G}$. Let $\Phi(\phi_B)=\{\phi:\vec{H}(\down(B)) \to \vec{G}| \phi \text{ respects }\phi_B)\}$, and for $i=1,\ldots,l$ let $\Phi_i(\phi_B)=\{\phi:\vec{H}(\down(B_i))\to \vec{G}| \phi \text{ respects }\phi_B)\}$. Then there exists a bijection between $\Phi(\phi_B)$ and $\Phi_1(\phi_B)\times \ldots\times \Phi_l(\phi_B)$, and therefore:
	\[
		\ext(\vec{H}(\down(B)),\vec{G},\phi_B) = \prod_{i=1}^{l} \ext(\vec{H}(\down(B_i)),\vec{G},\phi_B)
	\]
\end{lemma}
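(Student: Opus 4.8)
The plan is to follow Bressan's proof of his Lemma~3 (\cite{Bressan2021}), recast for hubsets and for weighted, labeled graphs. Concretely, I would exhibit an explicit bijection $\rho\colon \Phi(\phi_B)\to\Phi_1(\phi_B)\times\cdots\times\Phi_l(\phi_B)$ given by restriction, together with its inverse $\sigma$ given by gluing, and then read off the product identity for $\ext$ by counting. Write $\Reachable=\Reachable_{\vec H}$, $V_B=\Reachable(B)$ and $V_i=\Reachable(\down(B_i))$. Since $\down(B)=\{B\}\cup\bigcup_{i}\down(B_i)$, the vertex set of $\vec H(\down(B))$ is $V_B\cup\bigcup_i V_i$. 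The forward map sends $\phi$ to the tuple of its restrictions to the sets $V_i$; each restriction is a homomorphism of the corresponding subgraph and still respects $\phi_B$, so it lands in $\Phi_i(\phi_B)$.

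Two structural facts are established first: (i) reachability sets are closed under out-arcs, i.e.\ if $u\to v$ is an arc of $\vec H$ and $u\in\Reachable(X)$ then $v\in\Reachable(X)$; and (ii) for distinct children $B_i,B_j$ we have $V_i\cap V_j\subseteq V_B$. Fact (i) is immediate and has two payoffs: first, $\vec H(\down(B_i))$ is genuinely the subgraph of $\vec H$ \emph{induced} on $V_i$ (the union of induced subgraphs on reachability sets equals the induced subgraph on their union), and second, every arc $u\to v$ of $\vec H(\down(B))$ lies entirely inside $\vec H(B)$ or entirely inside some $\vec H(\down(B_i))$, according to whether $u\in\bigcup_i V_i$. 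Fact (ii) follows from property~3 of the \hubtree: write $V_i\cap V_j$ as the union of the sets $\Reachable(B')\cap\Reachable(B'')$ over bags $B'\in\down(B_i)$ and $B''\in\down(B_j)$; for each such pair the node $B$ lies on the unique $B'$--$B''$ path in $\cT$, since $B$ separates the subtree of $B_i$ from that of $B_j$, and property~3 then gives $\Reachable(B')\cap\Reachable(B'')\subseteq\Reachable(B)=V_B$.

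With (i) and (ii) in hand, the gluing map $\sigma$ takes a tuple $(\phi_1,\dots,\phi_l)$ to the map $\phi$ equal to $\phi_B$ on $V_B$ and to $\phi_i$ on $V_i$. Well-definedness is exactly where (ii) is used: on $V_B\cap V_i$ the piece $\phi_i$ already agrees with $\phi_B$ (it respects $\phi_B$), and on $V_i\cap V_j$ with $i\neq j$ we are inside $V_B$ by (ii), so $\phi_i$ and $\phi_j$ again both agree with $\phi_B$ there. That $\phi$ is a homomorphism of $\vec H(\down(B))$ then follows arc-by-arc and vertex-by-vertex using (i): each vertex and each arc is covered by $\phi_B$ or by some $\phi_i$, which already satisfies the arc-preservation, label-equality ($L_{\vec H}=L_{\vec G}\circ\phi$) and weight-domination ($W_{\vec H}\ge W_{\vec G}$ along images) constraints, and induced subgraphs inherit labels and weights so nothing is lost. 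Finally $\rho$ and $\sigma$ are visibly mutually inverse (restricting a glued map to $V_i$ returns $\phi_i$; gluing the restrictions of $\phi$ returns $\phi$, using that $\phi$ respects $\phi_B$ on $V_B$), so $|\Phi(\phi_B)|=\prod_i|\Phi_i(\phi_B)|$, which is precisely the claimed identity for $\ext$.

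The step needing the most care — though it is not deep — is the homomorphism check for the glued map: one must be sure that \emph{every} arc of $\vec H(\down(B))$ really falls into one of the $\vec H(\down(B_i))$ or into $\vec H(B)$, rather than straddling two pieces, and this is exactly what fact (i) guarantees; it is the only place the directed structure of $\vec H$ is essential. Everything else, including the weighted/labeled bookkeeping and the observation that restrictions of homomorphisms are homomorphisms, is routine, and the hubset generalization is automatic since property~3 of the \hubtree\ is stated identically to that of the \dagtree.
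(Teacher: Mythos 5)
Your proposal is correct and follows essentially the same route as the paper: the restriction map one way, the gluing map the other, mirroring Bressan's Lemma~3. In fact you supply the two details the paper only asserts — that $\Reachable(\down(B_i))\cap\Reachable(\down(B_j))\subseteq\Reachable(B)$ (via property~3 of the \hubtree) and that out-closure of reachability sets ensures every arc of $\vec H(\down(B))$ lies wholly inside one piece — so your write-up is, if anything, more complete than the paper's.
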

\begin{proof}
	The proof is similar to the proof of Lemma $3$ in \cite{Bressan2021}:
	
	First, we show that there exists an injection from $\Phi(\phi_B)$ to $\Phi_1(\phi_B)\times \ldots\times \Phi_l(\phi_B)$: Fix any $\phi \in \Phi(\phi_B)$, and let $\phi_i$ be the restriction of $\phi$ to $\vec{H}(\down(B_i))$, because $\phi$ respects $\phi_B$ so will $\phi_i$, hence $\phi_i \in \Phi_i(\phi_B)$, therefore the tuple $(\phi_1,\ldots,phi_l) \in \Phi_1(\phi_B)\times \ldots\times \Phi_l(\phi_B)$.
	
	Now we show the opposite, that there exists an injection from $\Phi_1(\phi_B)\times \ldots\times \Phi_l(\phi_B)$ to $\Phi(\phi_B)$: Fix any tuple $(\phi_1,\ldots,\phi_l) \in \Phi_1(\phi_B)\times \ldots \times \Phi_l(\phi_B)$, note that the different $\phi_i$ of the tuple only intersect in $\vec{H}(B)$, and they all respect $\phi_B$, hence we can combine $\phi_B, \phi_1,\ldots,\phi_l$ and obtain a homomorphism $\phi$ from $\vec{H}(\down(B_i))$ to $\vec{G}$ such that $\phi$ respects $\phi_B$, hence we will have that $\phi \in  \Phi(\phi_B)$. 
	
\end{proof}

We now adapt Bressan's algorithm \cite{Bressan2021} to our setting, the full algorithm can be seen in \Alg{bressan}. We prove its correctness and runtime in the following Lemma:

\begin{lemma} \label{lem:bressan_alg}
	Let $\vec{H}$ be a labeled weighted and directed graph with a \hubtree\ $\cT$ such that $\htw(\cT)=1$, let $B$ be any node of $\cT$ and let $\vec{H}$ be a labeled weighted and directed graph with bounded outdegree.  \Alg{bressan} returns a dictionary $C_B$ such that for every homomorphism $\phi \in \Phi(\vec{H}(B),\vec{G})$ we have $C_B(\phi)=\ext(\vec{H}(\down(B)),\vec{G},\phi)$, and runs in $O(n\cdot \log{n})$ time.
\end{lemma}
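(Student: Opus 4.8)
The plan is to prove correctness and the $O(n\log n)$ bound together, by induction on the height of the subtree of $\cT$ rooted at $B$, following Bressan's dynamic program but pushing all weight/label bookkeeping into \Lem{hom_Hs} and \Lem{bressan_extension}, which were already stated for weighted labeled graphs. Since $\htw(\cT)=1$, write $B=\{s\}$ and let $B_1=\{s_1\},\dots,B_l=\{s_l\}$ be the children of $B$ in $\cT$; recall $\vec H(\down(B))=\vec H(B)\cup\bigcup_{i}\vec H(\down(B_i))$ and that this union is exactly the subgraph of $\vec H$ induced on $\Reachable_{\vec H}(\down(B))$. For the base case $B$ is a leaf, so $\vec H(\down(B))=\vec H(s)$ and $\ext(\vec H(s),\vec G,\phi)=1$ for every $\phi\in\Phi(\vec H(B),\vec G)$; the algorithm enumerates this set by \Lem{hom_Hs} in $O(n)$ time with $O(n)$ entries (using that $k$ and $\Delta^+(\vec G)$ are constant) and records each $C_B(\phi)=1$, with dictionary insertions costing $O(n\log n)$.

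For the inductive step, assume the recursive calls return correct dictionaries $C_{B_1},\dots,C_{B_l}$. The key structural fact — and the only place the third hub-tree property is genuinely used — is that for each $i$,
\[
\Reachable_{\vec H}(\down(B_i))\cap\Reachable_{\vec H}(B)\;=\;\Reachable_{\vec H}(s_i)\cap\Reachable_{\vec H}(s)\;=:\;R_i,
\]
where "$\supseteq$" is immediate and "$\subseteq$" follows by taking any descendant bag $B'$ of $B_i$, noting $B_i$ lies on the $\cT$-path from $B'$ to $B$, applying the third property to get $\Reachable(B')\cap\Reachable(s)\subseteq\Reachable(s_i)$, and taking the union over $B'$. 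Since $R_i$ is exactly the common domain of a homomorphism of $\vec H(\down(B_i))$ and one of $\vec H(B)$, a homomorphism $\phi'\colon\vec H(\down(B_i))\to\vec G$ respects a fixed $\phi\in\Phi(\vec H(B),\vec G)$ iff $\psi:=\phi'|_{\vec H(s_i)}\in\Phi(\vec H(B_i),\vec G)$ and $\psi|_{R_i}=\phi|_{R_i}$; partitioning the homomorphisms of $\vec H(\down(B_i))$ respecting $\phi$ by their restriction $\psi$ and using the induction hypothesis gives $\ext(\vec H(\down(B_i)),\vec G,\phi)=\sum_{\psi:\,\psi|_{R_i}=\phi|_{R_i}}C_{B_i}(\psi)$. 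The algorithm evaluates this for all $\phi$ at once by scanning $C_{B_i}$ and building, in $O(n\log n)$ time, an aggregated dictionary $D_i$ keyed by the constant-size restriction $\psi|_{R_i}$ whose value is the sum of the relevant $C_{B_i}(\psi)$ (at most $O(n)$ distinct keys); then $D_i[\phi|_{R_i}]$, read as $0$ when absent, equals $\ext(\vec H(\down(B_i)),\vec G,\phi)$. Finally, applying \Lem{bressan_extension} with $\phi_B=\phi$ yields
\[
C_B(\phi):=\prod_{i=1}^{l}D_i[\phi|_{R_i}]=\prod_{i=1}^{l}\ext(\vec H(\down(B_i)),\vec G,\phi)=\ext(\vec H(\down(B)),\vec G,\phi),
\]
which is the asserted correctness; the weight and label conditions on the glued map are already verified inside \Lem{hom_Hs} (on each $\vec H(s_i)$) and \Lem{bressan_extension} (for the gluing), so no extra checking is needed here.

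For the running time, each node of $\cT$ enumerates $O(n)$ base homomorphisms in $O(n)$ time, builds $l=O(1)$ aggregated dictionaries $D_i$ in $O(n\log n)$ time, performs $O(n)$ products of $l$ dictionary lookups of cost $O(\log n)$ each, and inserts the $O(n)$ entries of $C_B$ in $O(n\log n)$ time; since $k$ is constant, $\cT$ has $O(1)$ nodes, so the recursion runs in $O(n\log n)$ overall. The step I expect to be the real obstacle is precisely the identity for $R_i$ and the consequent reduction of "$\phi'$ respects $\phi$" to agreement on the constant-size vertex set $R_i$: this is what legitimizes collapsing each $C_{B_i}$ into the aggregated $D_i$ and keeps the per-node cost near-linear, and it is the only point where the hub-tree axiom (rather than plain reachability bookkeeping) enters. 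Once that is in place, the remaining counting is routine dynamic-programming accounting that transfers essentially verbatim from Bressan's argument, with weights and labels absorbed into the two cited lemmas.
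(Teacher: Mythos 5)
Your proof is correct and follows essentially the same route as the paper's: induction over the hub-tree, with \Lem{hom_Hs} bounding the dictionary sizes and \Lem{bressan_extension} justifying the product over children, yielding $O(n\log n)$ per node and $O(1)$ nodes overall. In fact you supply a detail the paper leaves implicit, namely the identity $\Reachable_{\vec H}(\down(B_i))\cap\Reachable_{\vec H}(B)=\Reachable_{\vec H}(B_i)\cap\Reachable_{\vec H}(B)$ via the third hub-tree property, which is what legitimizes the aggregation dictionaries $\Agg_{B_i}$.
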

\begin{proof}
	First we prove the correctness of the algorithm. We can see that in the base case, when $B$ is a leaf of $\cT$, $C_B$ will contain $1$ for every $\phi_B \in \Phi(H(B),G)$. 
	
	If $B$ is not a leaf, we assume that the algorithm returns the desired value for every child $B_i$ of $B$. In this case the value of $\Agg_{B_i}$ after the first for loop will be $|\phi \in \Phi(\vec{H}(B_i),\vec{G}):\phi \text{ respects }\phi_r| = \ext(\vec{H}(\down(B_i)),\vec{G},\phi_r)$. Hence we will have that:
	\[
		C_B(\phi) = \prod_{i=1}^l \Agg_{B_i}(\phi_i) =  \prod_{i=1}^l \ext(\vec{H}(\down(B_i)),\vec{G},\phi_i) 
	\]
	\[
	= \prod_{i=1}^l \ext(\vec{H}(\down(B_i)),\vec{G},\phi) = \ext(\vec{H}(\down(B)),\vec{G},\phi)
	\]
	Where the last inequality comes from \Lem{bressan_extension}.
	
	For the runtime, we have that $B$ has at most $O(k)$ children, from \Lem{hom_Hs} we have that every dictionary $C_{B_i}$ will have at most $O(n)$ keys, and we can enumerate them in $O(n)$ time. We will need $O(n\log{n})$ time to access the dictionary, so the total complexity is $O(n\cdot \log{n})$.
\end{proof}

\begin{algorithm}[H]
	\hspace*{\algorithmicindent} \textbf{Input:} \\
	\hspace*{6ex}-Directed weighted labeled graph $\vec{H}$ with \hubtree\ $\cT$\\
	\hspace*{6ex}-Directed weighted labeled graph $\vec{G}$ \\
	\hspace*{6ex}-A node $B \in \cT$ \\
	\hspace*{\algorithmicindent} \textbf{Output:}  \\
	\hspace*{6ex}-Dictionary $C_B$
	\begin{algorithmic}[1]
		\State Let $C_B$ be an empty dictionary with default value $0$.
		\If{$B$ is a leaf}
		\For{every homomorphism $\phi_B: \vec{H}(B) \to \vec{G}$}
		\State $C_B(\phi_B)=1$
		\EndFor
		\Else
		\State let $B_1,\ldots,B_l$ be the children of $B$ in $\cT$
		\For{$i=1,\ldots,l$}
		\State $C_{B_i} = ${Generalized Bressan's Algorithm}$(\vec{H},\vec{G},B_i)$
		\State Let $\Agg_{B_i}$ be an empty dictionary with default value $0$.
		\For{every key $\phi$ in $C_{B_i}$}
		\State let $\phi_r$ be the restriction of $\phi$ to $\Reachable_{\vec{H}}(B)\cap\Reachable_{\vec{H}}(\down(B_i))$
		\State $\Agg_{B_i}(\phi_r) += C_{B_i}(\phi)$
		\EndFor
		\EndFor
		\For{every homomorphism $\phi:\vec{H}(B)\to \vec{G}$}
		\State Let $\phi_i$ be the restriction of $\phi$ to $\Reachable_{\vec{H}}(B)\cap\Reachable_{\vec{H}}(\down(B_i))$, for $i=1,\ldots,l$.
		\State $C_B(\phi) = \prod_{i=1}^l \Agg_{B_i}(\phi_i)$
		\EndFor
		\EndIf
		\State return $C_B$
	\end{algorithmic}
	\caption{Generalized Bressan's Algorithm: \Homomorphisms($\vec{H},\vec{G},B$) } \label{alg:bressan}
\end{algorithm}

We can finally prove the main lemma of this section:

\begin{proof} [Proof of Lemma \ref{lem:main_bressan}]
	We can compute a \hubtree\ $\cT$ for $\vec{H}$ in $f(k)$ time for some function $f$ and then run \Alg{bressan} in the root $s$ of $\cT$ to obtain $C_s$. From \Lem{bressan_alg} we have that this takes $O(n\cdot \log{n})$. We can then sum all the values of $C_s$ to obtain $\Hom{\vec{H}}{\vec{G}}$, this takes additional $O(n)$ time.
	
	We prove the correctness of this approach: Because $s$ is the root of $\cT$ we will have that $\vec{H}(\down(s))=\vec{H}$, hence $C_s(\phi) = \ext(\vec{H},\vec{G},\phi)$ for all $\phi \in \Phi(\vec{H}(s),\vec{G})$. Summing over all $\phi$ we have that:
	\[
	\sum_{\phi \in \Phi(\vec{H}(s),\vec{G})} C_s(\phi) = 
	\sum_{\phi \in \Phi(\vec{H}(s),\vec{G})}\ext(\vec{H},\vec{G},\phi) = 
	\Hom{\vec{G}}{\vec{H}}
	\] 
\end{proof}

\section{Lower Bound} \label{sec:lower}

In this section we prove the lower bound of the main theorem, given by the following theorem:

\begin{theorem} \label{thm:lower}
	For all $t>0 \in \mathbb{N}$, let $G$ be an input graph with $n$ vertices, $m$ edges and bounded $\grad_{(t-1)/2}(G)$ and let $H$ be a pattern graph on $k$ vertices with $LICL(H) \geq 3(t+1)$. Assuming  the \TRICONJ, there exists an absolute constant $\gamma > 0$ such that there is no (expected) $o(m^{1+\gamma})$ algorithm for the $\Hom{G}{H}$ problem.
\end{theorem}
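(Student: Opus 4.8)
The plan is to prove hardness of $\Hom{G}{H}$ on graphs of bounded $\grad_{(t-1)/2}$ by a two-hop reduction from Triangle Detection, routed through homomorphism counting of the longest induced cycle of $H$. Write $\ell = \LICL(H) \ge 3(t+1)$ and fix an induced $\ell$-cycle $C = c_1 c_2 \cdots c_\ell$ of $H$. The chain is: Triangle Detection on an arbitrary $m$-edge graph $\;\longrightarrow\;$ counting colorful $C_\ell$-homomorphisms in a bounded-$\grad_{(t-1)/2}$ graph of size $O(m)$ $\;\longrightarrow\;$ counting colorful $H$-homomorphisms in a bounded-$\grad_{(t-1)/2}$ graph of size $O(m)$ $\;\longrightarrow\;$ counting plain $H$-homomorphisms on bounded-$\grad_{(t-1)/2}$ graphs. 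Both of the first two hops are adaptations of the reductions in~\cite{bera2020linear,curticapean2017homomorphisms,Bera2021,Bera2022}; the only new content is checking that every graph produced stays in $\cG_{t-1}$ (bounded $\grad_{(t-1)/2}$), not merely of bounded degeneracy.

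For the first hop I would take an arbitrary graph $A$ with $m$ edges, form its tripartite blow-up $A'$ with layers $V^1,V^2,V^3$ (copies of $V(A)$, where $u^iv^j$ is an edge iff $i\ne j$ and $uv\in E(A)$), so that ordered triangles of $A$ are exactly the triangles of $A'$. Choose positive integers $a,b,c$ with $a+b+c=\ell$ and each at least a threshold $\theta(t)=\Theta(t)$; the hypothesis $\ell \ge 3(t+1)$ is precisely what makes this possible with $\theta(t)=t+1$. Subdivide every $V^1V^2$-edge of $A'$ into a path of length $a$, every $V^2V^3$-edge into length $b$, and every $V^3V^1$-edge into length $c$; color the resulting graph $G_A$ by position along the big cycle. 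Since the three lengths exceed $t-1 = 2\cdot\tfrac{t-1}{2}$, any two distinct branch vertices of $G_A$ are at distance more than $t-1$, so no radius-$\tfrac{t-1}{2}$ branch set of a shallow minor can contain two of them, and a short calculation gives $\grad_{(t-1)/2}(G_A)=O(1)$. A colorful $C_\ell$-homomorphism of $G_A$ is determined by the images of the three branch positions (the rest being forced along unique subdivision paths), so $\mathrm{col}\text{-}\Hom{G_A}{C_\ell}$ is exactly $6$ times the number of triangles of $A$; in particular it is nonzero iff $A$ has a triangle.

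For the second hop, let $R = V(H)\setminus C$. Given a color-class-structured input $G$ for colorful $C_\ell$-hom counting (color set $\{c_1,\dots,c_\ell\}$), build $G'$ with color set $V(H)$: class $c_i$ is a copy of $V(G)$, consecutive classes $c_i,c_{i+1}$ are joined by the copies of $E(G)$, and each $r\in R$ is one new vertex $y_r$, with $y_ry_{r'}$ an edge whenever $rr'\in E_H$ and $y_r$ joined to all of class $c_i$ whenever $rc_i\in E_H$. Since $C$ is induced ($H$ has no edges among cycle vertices other than cycle edges), a colorful $H$-homomorphism of $G'$ restricts to a colorful $C_\ell$-homomorphism of the cycle part and extends it uniquely by $\phi(r)=y_r$, so $\mathrm{col}\text{-}\Hom{G'}{H} = \mathrm{col}\text{-}\Hom{G}{C_\ell}$. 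The cycle part is a subgraph of $G\bullet K_\ell$, hence has bounded $\grad_{(t-1)/2}$ by \Prop{grad}; and adjoining the at most $k$ vertices $y_r$, however large their degrees, raises $\grad_{(t-1)/2}$ by at most $k$, since in any shallow minor at most $k$ branch sets contain a $y_r$, the others form a shallow minor of the cycle part, and each $y_r$-branch set adds at most one edge per remaining branch set. Thus $G'\in\cG_{t-1}$. Finally, inclusion-exclusion over mergings of the color classes of $G'$~\cite{curticapean2017homomorphisms} writes $\mathrm{col}\text{-}\Hom{\cdot}{H}$ as an $O(1)$-term linear combination of plain $\Hom{\cdot}{H}$ on graphs obtained by merging color classes, which again lie in $\cG_{t-1}$; so a plain-$H$-hom-counting oracle on $\cG_{t-1}$ computes colorful $H$-hom counts with only constant overhead.

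Chaining the steps, an algorithm computing $\Hom{G}{H}$ on $\cG_{t-1}$ in time $g(\grad_{(t-1)/2}(G))\cdot o(m^{1+\gamma})$ would decide triangle existence in an arbitrary $m$-edge graph in $o(m^{1+\gamma})$ time (the grad is $O(1)$ throughout, so $g$ is a constant), contradicting the \TRICONJ{} with the same $\gamma$. I expect the main obstacle to be the second hop: getting an exact identity between $H$- and $C_\ell$-homomorphisms while staying inside $\cG_{t-1}$ — in particular controlling $\grad_{(t-1)/2}$ under the color-class mergings in the colorful-to-plain step, and checking that the single-vertex gadgets for $R$ realize every edge of $H$ incident to the cycle without over- or under-counting. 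The observation that adjoining $O(1)$ arbitrary-degree vertices costs only $O(1)$ in $\grad_r$ is the new ingredient that lets the rank-$r$ argument go through exactly as the bounded-degeneracy argument did in~\cite{bera2020linear,Bera2021,Bera2022}.
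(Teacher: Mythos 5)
Your overall strategy is sound but follows a genuinely different route from the paper. The paper never works with colorful homomorphisms: it reduces triangle detection to counting $\cC_k$ \emph{subgraphs} in a subdivided graph (Lemma~\ref{lem:sub_cycles}), converts subgraph hardness to homomorphism hardness for $\cC_k$ via the spasm identity of Lemma~\ref{lem:sub} --- which forces it to invoke the upper bound Theorem~\ref{thm:upper} to dispose of all the other spasm terms, and to run a separate ``cycle with a tail'' argument for length $3(t+1)+2$ --- and finally passes from the induced cycle to $H$ itself via the Erd\H{o}s--Lov\'asz tensoring machinery (Lemmas~\ref{lem:hom_subgraphs}, \ref{lem:generalization_4_1}, \ref{lem:generalization_4_2}), checking grad-preservation of the categorical products $F_i\times G$. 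Your colorful chain (triangles $\to$ colorful $C_\ell$-hom $\to$ colorful $H$-hom $\to$ plain $H$-hom) bypasses both the spasm detour and the tensoring lemma, at the cost of the $\Aut(H)$-corrected inclusion--exclusion and your apex-vertex observation; the latter (that adjoining $O(1)$ vertices of arbitrary degree raises $\grad_r$ by only $O(1)$, because at most $k$ branch sets contain such a vertex and each contributes at most one edge to every other branch set) is correct and is indeed the kind of rank-$r$ bookkeeping the paper has to do throughout.

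Two concrete slips would need repair before this compiles into a proof. First, in the colorful-to-plain step the operation is not \emph{merging} color classes but \emph{deleting} them: the standard identity expresses the number of color-surjective homomorphisms as $\sum_{S\subseteq V(H)}(-1)^{|V(H)\setminus S|}\Hom{G'_S}{H}$ over induced subgraphs $G'_S$ on unions of color classes, and one divides by $|\Aut(H)|$ (using that your $G'$ has edges only between classes corresponding to edges of $H$) to recover $\mathrm{col}\text{-}\Hom{G'}{H}$. This matters for your grad claim too: induced subgraphs trivially stay in $\cG_{t-1}$, whereas graphs obtained by identifying whole color classes need not. Second, in the second hop you write that class $c_i$ of $G'$ is ``a copy of $V(G)$'' with consecutive classes joined by copies of $E(G)$; taken literally, colorful $C_\ell$-homomorphisms of that cycle part count closed $\ell$-walks of $G$, including degenerate back-and-forth walks along the subdivision paths of $G_A$, which breaks the composition with your first hop. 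What you want is for class $c_i$ of $G'$ to be the $i$-th color class of the already-colored $G_A$ (so the cycle part is $G_A$ itself), after which the bijection with colorful $C_\ell$-homomorphisms and the grad bound both go through as you describe.
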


For $t=1$ the former theorem is proved to be true \cite{Bera2021} as bounded $\grad_0$ is equivalent to bounded degeneracy. We will prove that the theorem holds for all the $t>1$.

In order to do so we first show that counting a pattern in a bounded grad class is as hard as counting any subgraph of such pattern, this was showed to be true by \cite{Bera2022} in the case of bounded degeneracy graphs. The proof uses some techniques introduced by \cite{curticapean2017homomorphisms}. We will extend such proof for all bounded grad classes.

Then we will show a simple reduction inspired by \cite{bera2020linear} that allows to relate counting triangles in a general graph (a problem which can not be done in linear time if the \TRICONJ{} is true) with counting non-induced cycles in bounded grad classes. We then finalize the proof by extending te result to homomorphism counts of cycles.

\subsection{Reducing to Cycle Homomorphisms}

In this subsection we show that computing homomorphisms of a pattern is as easy as computing all homomorphisms of all the induced subgraphs of that pattern. The proof follows closely the proof from Lemma $1.7$ in \cite{Bera2022}, but generalizing to graphs with bounded $\grad_k$.

\begin{lemma} \label{lem:hom_subgraphs}
	Let $\cG_i$ be the class of bounded $\grad_i$ graphs, for some $i$. Let $H$ be a pattern graph, if computing $\Hom{G}{H}$ for any $G \in \cG_i$ is easy, then so is computing $\Hom{G}{H'}$ for every induced subgraph $H'$ of $H$.
\end{lemma}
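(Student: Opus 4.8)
The plan is to follow the inclusion-exclusion / interpolation strategy of Curticapean-Dell-Marx (as adapted in \cite{Bera2022}), but to be careful that all the graphs fed to the oracle stay inside the class $\cG_i$ of bounded $\grad_i$ graphs. The key identity is that for an induced subgraph $H'$ of $H$ obtained by deleting a set of vertices, $\Hom{G}{H'}$ can be recovered from the homomorphism counts $\Hom{G}{H}$ evaluated on a family of modified input graphs. The standard trick: to ``kill'' a vertex $v$ of $H$, one blows up $G$ by replacing each vertex $u$ with $j$ copies (a lexicographic-type product $G \bullet \overline{K_j}$, or more precisely adding $j$ apex-free clones), records $\Hom{G_j}{H}$ as a polynomial in $j$, and then reads off the coefficient corresponding to the partition of $V(H)$ in which the block containing $v$ is a singleton mapped into the ``new'' copies. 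Iterating over the vertices to be deleted, and doing polynomial interpolation at $O(k)$ values of $j$ (which is $O(1)$ since $k$ is constant), one isolates $\Hom{G}{H'}$ as a linear combination of $O(1)$ oracle calls.

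First I would set up the blow-up construction precisely: given $G \in \cG_i$ and a constant $j$, define $G^{(j)}$ appropriately (using the lexicographic product with $\overline{K_j}$, or the construction used in \cite{Bera2022}) so that $\Hom{G^{(j)}}{H}$ is a polynomial in $j$ of degree $|V(H)| = k$, whose coefficients are integer linear combinations of $\Hom{G}{H''}$ over quotients $H''$ of $H$. Second, I would invoke \Prop{grad} (Prop.~4.6 of \cite{NeMe-book}) to control the grad: $\topgrad_i(G \bullet K_j) \le \max(2i(j-1)+1, j^2)\,\topgrad_i(G) + (j-1)$, so since $j = O(1)$ and $G$ has bounded $\grad_i$ (equivalently bounded $\topgrad_i$ by \Fact{grad_topgrad}), each $G^{(j)}$ also has bounded $\grad_i$, i.e. $G^{(j)} \in \cG_i$. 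Third, I would perform the interpolation: evaluate $\Hom{G^{(j)}}{H}$ for $j = 1, \dots, k+1$ using the assumed easy algorithm for $\Hom{\cdot}{H}$ on $\cG_i$, solve the resulting Vandermonde system to get the coefficients, and extract the term corresponding to deleting one chosen vertex. Fourth, iterate this deletion step at most $k$ times to reach an arbitrary induced subgraph $H'$; since at each stage the intermediate ``partially reduced'' patterns are still patterns and the inputs are still in $\cG_i$, the reduction composes, and the total running time is $O(1)$ oracle calls plus $O(1)$ linear-algebra, hence ``easy''.

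The main obstacle I expect is bookkeeping the combinatorial identity correctly: when one blows up $G$ and counts $H$-homomorphisms, the polynomial coefficients are indexed not just by the single vertex we want to delete but by all set partitions of $V(H)$ refining a certain structure, so the argument that induces only on one deleted vertex at a time must be stated so that the quotient patterns appearing in the coefficients are exactly the induced subgraphs (or further quotients that are themselves handled by earlier induction steps). This is precisely the content of Lemma~1.7 of \cite{Bera2022}, and the only genuinely new ingredient here is checking that the blow-up does not leave the class $\cG_i$, which is exactly what \Prop{grad} plus \Fact{grad_topgrad} give for constant blow-up factor $j$ and constant rank $i$. So the proof is: cite the CDM-style interpolation machinery from \cite{Bera2022, curticapean2017homomorphisms} for the algebra, and supply the grad bound from \Prop{grad} to certify membership in $\cG_i$.
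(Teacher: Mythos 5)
Your high-level plan --- defer the combinatorial identity to \cite{Bera2022,curticapean2017homomorphisms} and supply the missing ingredient, namely that the auxiliary input graphs stay in $\cG_i$, via \Prop{grad} and \Fact{grad_topgrad} --- is exactly the paper's strategy, and your grad bookkeeping (constant-size auxiliary factor, containment in $G \bullet K_j$, hence bounded $\topgrad_i$ and bounded $\grad_i$) is essentially identical to the paper's. However, the concrete reduction you describe does not work as stated. The blow-up $G \bullet \overline{K_j}$ is the twin-clone operation, and one checks directly that $\Hom{G\bullet \overline{K_j}}{H} = j^{|V(H)|}\cdot\Hom{G}{H}$: every homomorphism into the blow-up is a homomorphism into $G$ in the first coordinate together with a free choice of second coordinate. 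So your ``polynomial in $j$'' is a single monomial and interpolation recovers nothing. More to the point, cloning vertices of $G$ is the device that relates $\Hom{G}{H}$ to \emph{quotients} of $H$ (vertex mergings, i.e.\ the spasm); the lemma is about \emph{induced subgraphs}, which correspond to vertex \emph{deletions} and require a different gadget (adjoining dominating/apex vertices to $G$, as in Lemma~4.1 of \cite{Bera2022}). Your own description signals this confusion when you say the polynomial coefficients are ``linear combinations of $\Hom{G}{H''}$ over quotients $H''$ of $H$.''

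There is a second, independent gap: even with a correct gadget, interpolation in the single scalar $j$ only recovers the coefficients of the polynomial, and each coefficient aggregates \emph{all} patterns contributing at that degree (e.g.\ all induced subgraphs obtained by deleting an independent set of a given size). A Vandermonde solve cannot separate two non-isomorphic patterns sitting in the same coefficient. Isolating the individual terms of a fixed linear combination $\sum_j c_j\Hom{G}{H_j}$ is precisely the content of the paper's \Lem{generalization_4_2}: one tensors $G$ with the constant-size graphs $F_1,\dots,F_k$ of Erd\H{o}s--Lov\'asz--Spencer \cite{erdHos1979strong,lovasz2012large}, uses multiplicativity $\Hom{F_i\times G}{H_j}=\Hom{F_i}{H_j}\cdot\Hom{G}{H_j}$, and inverts the matrix $M_{i,j}=c_j\Hom{F_i}{H_j}$. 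It is this tensoring step, not an interpolation, that must be (and in the paper is) checked to preserve bounded $\grad_i$, via $F_i\times G\subseteq G\bullet K_f$ with $f=|V(F_i)|=O(1)$ and \Prop{grad}. Your proof should be reorganized around that matrix-inversion argument; the grad computation you already have then slots in unchanged.
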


We prove this lemma by proving the more general lemma that follows, which is a generalization of Lemma $4.1$ from \cite{Bera2022}:

\begin{lemma} \label{lem:generalization_4_1}
	For every graph $H$ there is $k=k(H)$ such that the following hols. For every graph $G$ there are graphs $G_1,\ldots,G_k$, computable in time $O(|V_G|+|E_G|)$, such that $|V_{G_i}|=O(|V_G|)$ and $|E_{G_i}|=O(|E_G|)$ for every $i=1,\ldots,k$ such that knowing $\Hom{G_1}{H},\ldots,\Hom{G_k}{H}$ allows one to compute $\Hom{G}{H'}$ for all induced subgraphs $H'$ of $H$ in constant time. Furthermore, if $G$ has bounded $\grad_i$, then so do $G_1,\ldots,G_k$.
\end{lemma}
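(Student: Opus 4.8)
The plan is to follow the Curticapean–Dell–Marx tensorization strategy as adapted in Bera et al. \cite{Bera2022}, carefully tracking the grad parameter through each construction. The key identity is that $\Hom{G}{H}$, viewed as a function of the input graph $G$, can be separated into contributions from each surjective (``colorful'') homomorphism pattern, i.e.\ from each quotient $H/\rho$ of $H$ by a partition $\rho$ of its vertices that merges only independent sets. These quotients are exactly the graphs in a set closely related to $\Spasm(H)$, and the induced subgraphs $H'$ of $H$ can in turn be written as linear combinations of such quotients. So the goal reduces to: given black-box access to $\Hom{\cdot}{H}$ on bounded-$\grad_i$ inputs, recover the individual coefficients $\Hom{G}{H/\rho}$ for every relevant $\rho$.

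First I would recall the standard ``graph powers'' trick: for a positive integer $c$, the lexicographic-type blow-up $G \bullet \overline{K_c}$ (equivalently $G \times K_c$ by \Fact{products}) has the property that $\Hom{G \bullet \overline{K_c}}{H} = \sum_{\rho} c^{|\rho|} \cdot (\text{number of $\rho$-surjective homs } H \to G)$, a polynomial in $c$ whose coefficients are exactly the quantities we want to isolate. Evaluating this at $c = 1, 2, \ldots, k$ (where $k$ bounds the number of parts) gives a Vandermonde system that we invert in constant time to recover each coefficient; these coefficients then yield $\Hom{G}{H'}$ for every induced subgraph $H'$ by a further constant-size inclusion–exclusion (injective homs vs.\ all homs). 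So I would set $G_c := G \bullet \overline{K_c}$ for $c = 1, \ldots, k$, which are clearly computable in $O(|V_G| + |E_G|)$ time with $|V_{G_c}| = O(|V_G|)$ and $|E_{G_c}| = O(|E_G|)$ since $c \le k$ is constant. (One may need a couple of small auxiliary constructions as in \cite{Bera2022} — e.g.\ adding a clique or using colored versions — but the core is the blow-up.)

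The grad-preservation step is where I expect the real work, and it is exactly the point where this proof differs from \cite{Bera2022}, which only needed bounded degeneracy ($\grad_0$). Here I must show that if $\grad_i(G)$ is bounded then so is $\grad_i(G \bullet \overline{K_c})$ for each constant $c$. This is precisely \Prop{grad} (Prop.~4.6 of \cite{NeMe-book}): $\topgrad_r(G \bullet K_c) \le \max(2r(c-1)+1, c^2)\,\topgrad_r(G) + c - 1$, combined with $G \bullet \overline{K_c} \subseteq G \bullet K_c$ and \Fact{grad_topgrad} to pass between $\grad$ and $\topgrad$ at constant rank. Since $i$ is a constant and $c \le k$ is a constant, the right-hand side is a constant whenever $\topgrad_i(G)$ is, so $\grad_i(G_c)$ is bounded. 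This is the same mechanism already used to prove \Lem{labeled_grad}, so the argument is available essentially verbatim.

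Putting it together: \Lem{generalization_4_1} follows by defining the $G_i$ as the blow-ups above, invoking \Prop{grad} plus \Fact{grad_topgrad} for grad-preservation, and using the Vandermonde/inclusion–exclusion linear algebra to recover all $\Hom{G}{H'}$ in constant time; then \Lem{hom_subgraphs} is the immediate corollary (if $\Hom{\cdot}{H}$ is computable in time $o(m^{1+\gamma})$ on $\cG_i$, run it on each $G_i$ and do the constant-time post-processing). The main obstacle, as flagged, is ensuring the bounded-grad condition survives the blow-up at the \emph{same} rank $i$ — but \Prop{grad} handles exactly this, and everything else is a transcription of the \cite{curticapean2017homomorphisms, Bera2022} machinery with $k$ treated as a constant.
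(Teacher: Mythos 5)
Your grad-preservation step is correct and is exactly what the paper does: since each auxiliary graph is a subgraph of $G \bullet K_c$ for a constant $c$, \Prop{grad} together with \Fact{grad_topgrad} bounds $\grad_i$ of the auxiliary graphs, and this is indeed the only genuinely new content relative to Lemma~4.1 of \cite{Bera2022}. However, the combinatorial core of your construction rests on a false identity. For the independent-set blow-up $G \bullet \overline{K_c}$, a map $\phi: V_H \to V_{G\bullet\overline{K_c}}$, written $\phi(x) = (\psi(x), j(x))$, is a homomorphism if and only if $\psi$ is a homomorphism $H \to G$, with the index $j(x) \in [c]$ completely unconstrained. Hence $\Hom{G\bullet\overline{K_c}}{H} = c^{|V_H|}\cdot\Hom{G}{H}$: the blow-up rescales the count by a factor depending only on $|V_H|$, not on the partition structure of the homomorphism. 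Your claimed identity $\Hom{G\bullet\overline{K_c}}{H} = \sum_\rho c^{|\rho|}\cdot(\text{$\rho$-surjective homs})$ does not hold, and the resulting ``Vandermonde system'' over $c = 1,\ldots,k$ is rank one — it carries no more information than $\Hom{G}{H}$ itself and cannot separate the individual terms $\Hom{G}{H'}$.

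The mechanism that actually works, and that the paper uses (via \Lem{generalization_4_2}), is the Curticapean--Dell--Marx tensor trick in its general form: multiplicativity of homomorphism counts over the categorical product, $\Hom{F_i\times G}{H_j} = \Hom{F_i}{H_j}\cdot\Hom{G}{H_j}$, combined with the Erd\H{o}s--Lov\'asz lemma (\Lem{lovasz}) supplying constant-size graphs $F_1,\ldots,F_k$ for which the matrix $M_{i,j} = c_j\cdot\Hom{F_i}{H_j}$ is invertible. One then sets $G_i = F_i \times G$ and inverts the system. Restricting the $F_i$ to be complete graphs or their complements, as you propose, does not suffice: tensoring with $\overline{K_c}$ kills every $H_j$ containing an edge, and tensoring with (loopless) $K_c$ produces rows given by chromatic-polynomial evaluations, which need not form an invertible matrix for the relevant family $H_1,\ldots,H_k$. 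So the existence of the $F_i$ genuinely requires \Lem{lovasz}; it is not a ``small auxiliary construction'' around a blow-up. Once the $G_i = F_i\times G$ are in place, your containment $F_i \times G \subseteq G \bullet K_{|V_{F_i}|}$ and the appeal to \Prop{grad} go through verbatim, and the reduction from induced subgraphs of $H$ to this linear system is then quoted from Lemma~4.1 of \cite{Bera2022} exactly as the paper does.
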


In order to prove this lemma, we first need to introduce another additional lemma, which again is a generalization of a lemma from \cite{Bera2022}, in this case Lemma $4.2$:

\begin{lemma} \label{lem:generalization_4_2}
	Let $H_1,\ldots,H_k$ be pairwise non-isomorphic graphs and let $c_1,\ldots,c_k$ be non-zero constants. For every graph $G$ there are graphs $G_1,\ldots,G_k$, computable in time $O(|V_G|+|E_G|)$ such that $|V_{G_i}|=O(|V_G|)$ and $|E_{G_i}|=O(|E_G|)$ for every $i=1,\ldots,k$, and such that knowing $b_j:=c_1 \cdot \Hom{G_j}{H_1}+\ldots+c_k \cdot\Hom{G_j}{H_k}$ for every $j=1,\ldots,k$ allows one to compute $\Hom{G}{H_1},\ldots,\Hom{G}{H_k}$ in constant time. Furthermore, if $G$ has bounded $\grad_i$, then so do $G_1,\ldots,G_k$.
\end{lemma}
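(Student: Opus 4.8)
The plan is to follow the tensorization/interpolation strategy of Curticapean, Dell and Marx, as packaged in the proof of Lemma~4.2 of \cite{Bera2022}, adding only the bookkeeping needed to keep the sparsity parameter $\grad_i$ under control. The sole algebraic ingredient is multiplicativity of homomorphism counts under the categorical product: for any graphs $A,B$ and any pattern $F$,
\[
  \Hom{A \times B}{F} \;=\; \Hom{A}{F}\cdot\Hom{B}{F},
\]
which is immediate from \Def{categorical_product}, since a map $F \to A \times B$ is a homomorphism if and only if both of its coordinate projections are.

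First I would fix, once and for all, a family of auxiliary graphs. Since $H_1,\dots,H_k$ are pairwise non-isomorphic, a classical fact (see \cite{curticapean2017homomorphisms}, and the proof of Lemma~4.2 in \cite{Bera2022}) supplies graphs $F_1,\dots,F_k$, each of size bounded by a constant $c=c(H_1,\dots,H_k)$, such that the matrix $M\in\mathbb{Q}^{k\times k}$ with entries $M_{j,i}:=\Hom{F_j}{H_i}$ is invertible. Now set $G_j := G\times F_j$ for $j=1,\dots,k$. By the product identity,
\[
  b_j \;=\; \sum_{i=1}^{k} c_i\,\Hom{G_j}{H_i}
       \;=\; \sum_{i=1}^{k} \bigl(c_i\,\Hom{F_j}{H_i}\bigr)\,\Hom{G}{H_i}.
\]
Writing $x_i := \Hom{G}{H_i}$ and $N := M\cdot\mathrm{diag}(c_1,\dots,c_k)$, this reads $\vec b = N\vec x$. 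Since $M$ is invertible and every $c_i$ is non-zero, $N$ is invertible; moreover $N$, hence $N^{-1}$, depends only on $H_1,\dots,H_k$ and $c_1,\dots,c_k$, all of constant size. Therefore, given $b_1,\dots,b_k$ one recovers $\Hom{G}{H_1},\dots,\Hom{G}{H_k}$ as $\vec x = N^{-1}\vec b$ in $O(1)$ time.

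It remains to check the size, running-time and sparsity claims for $G_j=G\times F_j$. Each $F_j$ has at most $c$ vertices and $O(1)$ edges, so $|V_{G_j}| = |V_G|\cdot|V_{F_j}| = O(|V_G|)$, $|E_{G_j}| = 2|E_G|\cdot|E_{F_j}| = O(|E_G|)$, and $G_j$ is built from $G$ in time $O(|V_G|+|E_G|)$. For the grad bound I would argue exactly as in the proof of \Lem{labeled_grad}: since $F_j$ is a simple loopless graph on at most $c$ vertices,
\[
  G_j \;=\; G\times F_j \;\subseteq\; G\times K_c \;=\; G\bullet\overline{K}_c \;\subseteq\; G\bullet K_c,
\]
using \Fact{products}. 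Hence $\grad_i(G_j)\le\grad_i(G\bullet K_c)$, and by \Prop{grad} together with \Fact{grad_topgrad}, $\grad_i(G\bullet K_c)$ is bounded whenever $\grad_i(G)$ is. Thus $G_1,\dots,G_k$ have bounded $\grad_i$ whenever $G$ does.

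The step I expect to carry the most weight is the existence of the auxiliary graphs $F_1,\dots,F_k$ making $M$ nonsingular --- equivalently, that the homomorphism vectors of pairwise non-isomorphic graphs are linearly independent (this can be made concrete by ordering the $H_i$ by size so that a related subgraph-count matrix is triangular with non-zero diagonal). This is precisely the classical input already used in \cite{curticapean2017homomorphisms,Bera2022}, so here it is invoked as a black box; the only genuinely new content is the remark that taking $G_j = G\times F_j$ with $F_j$ of constant size keeps $|V|$ and $|E|$ linear and keeps $\grad_i$ bounded, which is immediate from \Fact{products}, \Prop{grad} and \Fact{grad_topgrad}.
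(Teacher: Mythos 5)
Your proposal is correct and follows essentially the same route as the paper: both take $G_j = G \times F_j$ for constant-size auxiliary graphs $F_j$ whose homomorphism matrix into the $H_i$ is invertible (invoked as a black box from \cite{erdHos1979strong,lovasz2012large} via \cite{Bera2022}), use multiplicativity of $\mathrm{Hom}$ under the categorical product to reduce to an invertible linear system, and establish the grad bound through the same chain $G\times F_j \subseteq G\times K_c = G\bullet\overline{K}_c \subseteq G\bullet K_c$ combined with \Prop{grad} and \Fact{grad_topgrad}. The only cosmetic difference is that you factor the constants $c_i$ into a diagonal matrix rather than folding them into the matrix entries as the cited lemma does.
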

\begin{proof}
	We start the proof by stating the following lemma from \cite{erdHos1979strong} and \cite{lovasz2012large}, which was stated in \cite{Bera2022} as Lemma $A.2$:
	
	\begin{lemma} \label{lem:lovasz}
		Let $H_1,\ldots,H_k$ be pairwise non-isomorphic graphs, and let $c_1,\ldots,c_k \neq 0$ be non-zero constants. Then there exist graphs $F_1,\ldots,F_k$ such that the $k\times k$ matrix $M_{i,j}=c_j \cdot \Hom{F_i}{H_j}, 1\leq i,j\leq k$, is invertible.
	\end{lemma}
	
	Now, let $G_i = F_i \times G$, we first show that if $G$ has bounded $\grad_i$, then so do $G_1,\ldots,G_k$: The proof is very similar to \Lem{labeled_grad}. Note that the size of $F_i$ does not depends on the input graph $G$, only on the graphs $H_1,\ldots,H_k$. Because we assume such graphs to be constant-sized so will be $F_1,\ldots,F_k$. Let $f$ be the number of vertices in $F_i$, we have:
	\[
		G_i = F_i \times G \subseteq G \times K_f = G \bullet \bar{K}_f \subseteq G \bullet K_f
	\]
	Where the second equality comes from \Fact{products}. Then using \Prop{grad} we will have that if $\grad_i(G)$ is bounded so will $\grad_i(G \bullet K_f)$, and therefore by the previous equation so will $\grad_i(G_i)$.
	
	Now, let $b_i = \sum_{j=1}^{k} c_1 \cdot \Hom{G_j}{H_1}+\ldots+c_k \cdot \Hom{G_j}{H_k}$, we can rewrite it as:
	\[
		b_i = \sum_{j=1}^k c_j\cdot\Hom{F_i\times G}{H_j} = \sum_{j=1}^k c_j\cdot\Hom{F_i}{H_j} \cdot \Hom{G}{H_j} = \sum_{j=1}^k M_{i,j} \cdot \Hom{G}{H_j} 
	\]
	Hence for $1\leq i \leq k$ we obtain a system of linear equations with $\Hom{G}{H_1},\ldots,\Hom{G}{H_k}$ as variables and $M$ as the matrix of the system. By \Lem{lovasz} $M$ is invertible, then given $b_1,\ldots,b_k$ we can compute $\Hom{G}{H_1},\ldots,\Hom{G}{H_k}$ in constant time.
\end{proof}

We can now complete the proof by proving \Lem{generalization_4_1}:

\begin{proof}[Proof of Lemma \ref{lem:generalization_4_1}]
	The proof of this lemma comes directly from Lemma $4.1$ in \cite{Bera2022}. The only difference is showing that the graphs $G_1,\ldots,G_k$ have bounded grad, instead of bounded degeneracy, when $G$ does. We can see that this comes directly from \Lem{generalization_4_2}, which generalizes Lemma $4.2$ in \cite{Bera2022}.
\end{proof}

\subsection{From counting triangles to counting cycles}

In this subsection we prove a hardness result for non-induces copies in the case that the pattern is the cycle graph. We use a reduction very similar to the one found in \cite{bera2020linear}: we can take any graph $G$ and replace every edge by some combination of paths. The resultant graphs will actually have bounded grad for certain depth, depending on the length of the path.

We will prove the following:

\begin{lemma} \label{lem:sub_cycles}
	For all $t>1 \in \mathbb{N}$, let $G$ be any input graph with $n$ vertices, $m$ edges and bounded $\grad_{(t-1)/2}(G)$ and let $H = \cC_k$ be the cycle graph on $k$ vertices for $ k \in \{3(t+1), 3(t+1)+1\}$. Assuming  the \TRICONJ, there exists an absolute constant $\gamma > 0$ such that there is no (expected) $o(m^{1+\gamma})$ algorithm for the $\Sub{G}{H}$ problem.
\end{lemma}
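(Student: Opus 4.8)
The plan is to follow the subdivision‑based hardness reduction of \cite{bera2020linear}: reduce triangle detection in an arbitrary graph to counting $\cC_k$‑subgraphs in a graph of bounded $\grad_{(t-1)/2}$. Write $\ell = t+1$, so the two target lengths are $k=3\ell$ and $k=3\ell+1$. By the standard $3$‑layer construction, triangle detection in a general graph with $m_0$ edges reduces in linear time, with only a constant blow‑up, to triangle detection in a tripartite graph $T=(A\cup B\cup C,E_T)$ with a fixed tripartition; so it suffices to build from such a $T$ a graph $G$ with $|E_G|=O(|E_T|)$, computable in $O(|E_T|)$ time, of bounded $\grad_{(t-1)/2}(G)$, with $\Sub{G}{\cC_k}$ equal to $\#\triangle(T)$ up to a fixed positive constant. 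I take $G$ to be $T$ with every edge between $A$ and $B$ and every edge between $B$ and $C$ replaced by an internally vertex‑disjoint path of length $\ell$, and — only when $k=3\ell+1$ — every edge between $C$ and $A$ replaced by a path of length $\ell+1$ (when $k=3\ell$, all edges become length‑$\ell$ paths). Clearly $|V_G|,|E_G|=O(|E_T|)$ and $G$ is built in linear time.

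\emph{Correctness of the count.} Every subdivision vertex has degree $2$ in $G$ and every original vertex (image of a vertex of $T$) is adjacent only to subdivision vertices, so the simple cycles of $G$ are exactly the lifts of simple cycles of $T$: a cycle using edge set $F\subseteq E_T$ lifts to a cycle of length $\sum_{e\in F}\ell_e$ with $\ell_e\in\{\ell,\ell+1\}$. A triangle of $T$ uses exactly one edge of each type, hence lifts to a cycle of length $3\ell$ or $3\ell+1=k$; a cycle of $T$ with $|F|\ge 4$ lifts to length $\ge 4\ell>k$ (using $\ell=t+1\ge 3$); and $T$ has no $2$‑cycle. Hence the $\cC_k$‑subgraphs of $G$ are in bijection with the triangles of $T$, so $\Sub{G}{\cC_k}=\#\triangle(T)$ up to a fixed positive constant, and in particular $\Sub{G}{\cC_k}>0$ iff the original graph has a triangle.

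\emph{Bounded grad — the main step.} By \Fact{grad_topgrad} it is enough to bound $\topgrad_{(t-1)/2}(G)$. Any path of $G$ joining two original vertices is a concatenation of whole subdivision paths, hence has length $\ge\ell=t+1>t=2\cdot\tfrac{t-1}{2}+1$. Thus if $G'\in\topminor{G}{(t-1)/2}$, witnessed by a subgraph $S\subseteq G$ that is a subdivision of $G'$ in which every edge of $G'$ is realized by a path of length $\le t$ (\Def{shallow_top_minor}), then no edge of $G'$ can have both endpoints among the original vertices, so the original vertices appearing in $G'$ form an independent set of $G'$. Every other vertex of $G'$ is a subdivision vertex, which has degree $2$ in $G$ and hence degree $\le 2$ in $G'$. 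Writing $D$ for the set of these low‑degree vertices of $G'$, every edge of $G'$ meets $D$, so $|E_{G'}|\le\sum_{v\in D}\deg_{G'}(v)\le 2|D|\le 2|V_{G'}|$, giving $\topgrad_{(t-1)/2}(G)\le 2$. Then $\grad_{(t-1)/2}(G)$ is bounded by \Fact{grad_topgrad}.

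\emph{Conclusion and the obstacle.} If $\Sub{G}{\cC_k}$ were computable in expected $o(m^{1+\gamma})$ time on graphs of bounded $\grad_{(t-1)/2}$, applying this to the $G$ above (with $m=O(|E_T|)=O(m_0)$) would detect a triangle in the original graph in expected $o(m_0^{1+\gamma})$ time, contradicting the \TRICONJ{} (Conjecture~\ref{conjecture:triangle}). The delicate point — and the reason the construction cannot be pushed to smaller $k$ — is the tension between two constraints on the subdivision length: it must exceed $2\cdot\tfrac{t-1}{2}+1=t$ so that depth‑$(t-1)/2$ topological minors cannot bridge a subdivided edge (which is what makes $\grad_{(t-1)/2}(G)$ bounded), while $3\ell$ and $3\ell+1$ must land exactly on the target cycle lengths; both force $\ell=t+1$, which is precisely why the dichotomy threshold is $3(t+1)$. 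I expect verifying bounded grad (the independent‑set/degree‑$\le 2$ argument above, together with pinning down the half‑integer depth convention in \Def{shallow_top_minor}) to be the only part requiring real care; the counting equivalence and the final reduction are routine.
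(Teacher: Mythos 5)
Your proposal is correct, and its core is the same as the paper's: subdivide each edge into a path of length about $t+1$ so that triangles become $k$-cycles while longer cycles overshoot, and bound $\topgrad_{(t-1)/2}$ by observing that subdivision vertices have degree $2$ and that original vertices are pairwise at distance $\ge t+1$, so no depth-$(t-1)/2$ topological minor can place an edge between two of them; this density argument is essentially identical to the paper's \Clm{G_bounded_grad_1}. The one genuine divergence is the case $k=3(t+1)+1$. The paper keeps the input graph as is and replaces \emph{every} edge by \emph{two} internally disjoint paths, of lengths $t+1$ and $t+2$; it then has to rule out spurious short cycles formed by the two parallel paths of a single edge (length $2t+3$) and observes that each triangle yields three $k$-cycles. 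You instead first tripartitize the triangle-detection instance (a standard linear-time step the paper does not need) and then subdivide asymmetrically by edge class, so each edge carries a single path and every simple cycle of $G$ is the lift of a simple cycle of $T$; this makes the cycle-counting bijection cleaner and avoids the parallel-path degeneracy, at the cost of the extra reduction and the constant-factor bookkeeping ($\#\triangle(T)=6\,\#\triangle(G_0)$). Both routes are valid and yield the same contradiction with the \TRICONJ.
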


\begin{proof}
	Fix any $t > 1$. We will first show a reduction for cycles of length $k=3(t+1)$.
	
	Let $G$ be a graph. We define the graph $\Gt$ by replacing every edge in $G$ by a path of $t+1$ edges, formally:
	\begin{definition} [$\Gt$]
		Let $G = (V,E)$ be an arbitrary input graph. We define the reduced graph $\Gt = (V_t,E_t)$ as follows:
		\begin{itemize}
			\item For each vertex $v \in V$ we create a vertex $v \in V^o$.
			\item For each edge $e=(u,w) \in E$ we create $t$ extra vertices $v_{e,1},\ldots,v_{e,t}$ in $V^*$.
			\item We define $V_t = V^o \cup V^*$.
			\item We create the edge set $E_t$ by adding the edges $(u,v_{e,1}),(v_{e,1},v_{e_2}),\ldots,(v_{e,t-1},v_{e,t}),(v_{e,t},w)$ for every edge $(u,v)\in E$.
		\end{itemize}
	\end{definition}
	
	\Fig{reduction} shows how the reduction replaces an edge by the path in $\Gt$. We can show that there exists a relation between the number of triangles in $G$ and the number of $\cC_k$ cycles in $\Gt$:
	
	\begin{claim} \label{clm:triangles_to_cycles_1}
		Let $t>1$ and $G$ any graph, set $k=3(t+1)$, there is a triangle in $G$ if and only if there is a $\cC_k$ cycle in $\Gt$.
	\end{claim}
	\begin{proof}
		Consider any triangle $v_1,v_2,v_3$ in $G$, in the graph $\Gt$ each pair of those vertices will be separated by a path of length $t+1$, hence combining those paths we obtain a cycle of length $3(t+1) = k$ in $\Gt$.
		
		Conversely, let $\cC = \cC_k$ be a $k$-cycle in $\Gt$, we can show that $\cC$ must contain exactly $3$ vertices in $V^o$: If it contained $1$ or $2$ then $\cC$ would not be able to be a cycle, while if it contains $4$ or more then it will form a cycle of at least $4(t+1)$ vertices, which is greater than $k$. Take the three vertices in $V^o$, they must be connected to each other by a path of $t+1$ edges in $\Gt$ and hence by edges in $G$, therefore, they will form a triangle.
	\end{proof}
	
	Similarly we define a reduction for the cycles of length $k=3(t+1)+1$. In this case we will replace every edge of $G$ by two different paths:
	
	\begin{definition} [$\Gtt$]
		Let $G = (V,E)$ be an arbitrary input graph. We define the reduced graph $\Gtt = (V_{t'},E_{t'})$ as follows:
		\begin{itemize}
			\item For each vertex $v \in V$ we create a vertex $v \in V^o$.
			\item For each edge $e=(u,w) \in E$ we create $t$ extra vertices $v_{e,1},\ldots,v_{e,t}$ in $V^*$ and another additional $t+1$ vertices $v'_{e,1},\ldots,v'_{e,t+1}$ in $V^*$
			\item We define $V_t = V^o \cup V^*$.
			\item We create the edge set $E_t$ by adding the edges $(u,v_{e,1}),(v_{e,1},v_{e_2}),\ldots,(v_{e,t-1},v_{e,t}),(v_{e,t},w)$ and $(u,v'_{e,1}),(v'_{e,1},v'_{e_2}),\ldots,(v'_{e,t},v'_{e,t+1}),(v'_{e,t+1},w)$ for every edge $(u,v)\in E$.
		\end{itemize}
	\end{definition}
	
	In \Fig{reduction} we show how each edge of $G$ is replaced in $\Gtt$. Again we can show that there is a relation between the number of triangles in $G$ and the number of $\cC_k$ cycles in $\Gtt$:
	
	\begin{figure}
		\centering
		\includegraphics[width=\textwidth*3/4]{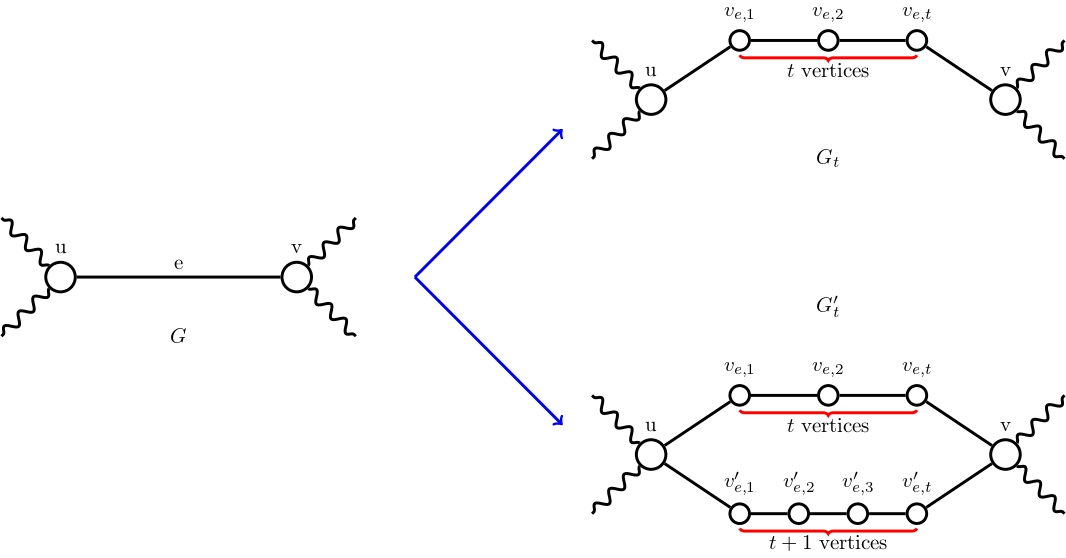}%
		\caption{An example of how an edge $e=(u,v)$ in $G$ is replaced in the reduced graphs $\Gt$ and $\Gtt$. In $\Gt$ we will add $t$ vertices between $u$ and $v$ forming a path. While in $\Gtt$ we will add two path, one with $t$ vertices and another with $t+1$ vertices. This process will be applied to every edge in $G$.}
		\label{fig:reduction}
	\end{figure}

	\begin{claim} \label{clm:triangles_to_cycles_2}
		Let $t>1$ and $G$ any graph, set $k=3(t+1)+1$, there is a triangle in $G$ if and only if there is a $\cC_k$ cycle in $\Gtt$.
	\end{claim}
	\begin{proof}
		Consider any triangle $v_1,v_2,v_3$ in $G$, in the graph $\Gtt$ each pair of those vertices will be separated by a path of length $t+1$ and a path of length $t+2$, hence combining those paths we can obtain three different cycles of length $3(t+1)+1 = k$ in $\Gtt$.
		
		Conversely, let $\cC = \cC_k$ be a $k$-cycle in $\Gtt$, we can show that $\cC$ must contain exactly $3$ vertices in $V^o$: If it contained $1$ or $2$ then $\cC$ could only be a cycle of length $2t+3$ which is strictly less than $3(t+1)+1$ for $t>1$, while if it contains $4$ or more then it will form a cycle of at least $4(t+1)$ vertices, which is greater than $3(t+1)+1=k$. Take the three vertices in $V^o$, they must be connected to each other by either a path of $t+1$ or $t+2$ edges in $\Gtt$ and hence by edges in $G$, therefore, they will form a triangle.
	\end{proof}

	 We also show that both $\Gt$ and $\Gtt$ have bounded $\grad_{(t-1)/2}$:
	
	\begin{claim} \label{clm:G_bounded_grad_1}
		Let $t>1$ and $G$ be an arbitrary graph, $\Gt$ and $\Gtt$ have bounded $\grad_{(t-1)/2}$.
	\end{claim}

	\begin{proof}
		Let $G'=(V',E')$ be any shallow topological minor of $\Gt$ or $\Gtt$ at depth $(t-1)/2$. That is, a graph where the vertices are a subset of the vertices of $\Gt$ or $\Gtt$ and the edges correspond to disjoint paths in $\Gt$ or $\Gtt$ of length at most $t$. 
		
		The vertices of $G'$ can either be part of ${V'}^*=V^*\cap V'$ or ${V'}^o=V^o\cap V'$. The degree of the vertices in ${V'}^*$ can not be greater than $2$ as the original degree of such vertices in either $\Gt$ or $\Gtt$ were $2$. Additionally, any edge in $G'$ can not have both of its endpoints in ${V'}^o$, as the minimum distance between such vertices is $t+1$. Therefore, every edge will have at least one end in ${V'}^*$, because the degree of such vertices is at most $2$, we will have at most $2|{V'}^*|$ edges.
	
		We can then bound the average edge density of $G'$:
		
		\begin{align*}
			&\frac{|E'|}{|V'|} 
			\leq \frac{ 2 |{V'}^*|}{|{V'}^*|+|{V'}^o|} 
			\leq \frac{ 2 |{V'}^*|}{|{V'}^*|} 
			= 2
		\end{align*}
		
		Hence every topological minor of $\Gt$ or $\Gtt$ at depth $(t-1)/2$ has bounded average edge density and $\topgrad_{(t-1)/2}(\Gt)$ and $\topgrad_{(t-1)/2}(\Gtt)$ are bounded. Which by \Fact{grad_topgrad} implies that $\grad_{(t-1)/2}(\Gt)$ and $\grad_{(t-1)/2}(\Gtt)$ are also bounded.
	\end{proof}
	
	Now, assume that we have an algorithm that can count the number of cycles of size $k=3(t+1)$ in graphs of bounded $\grad_{(t-1)/2}$ in time $O(m)$, then given a graph $G$ we could construct $\Gt$ and obtain $\Sub{G'}{\cC_k}$ in time $O(m)$, as from \Clm{G_bounded_grad_1} we have that $\Gt$ has bounded $\grad_{(t-1)/2}$. We can then use \Clm{triangles_to_cycles_1} to determine if $G$ contains a triangle. However, this directly contradicts the \TRICONJ\ and hence not such algorithm can exist.

	Similarly, assume that we have an algorithm that can count the number of cycles of size $k=3(t+1)+1$ in graphs of bounded $\grad_{(t-1)/2}$ in time $O(m)$, then given a graph $G$ we could construct $\Gtt$ and obtain $\Sub{G'}{\cC_k}$ in time $O(m)$, as from \Clm{G_bounded_grad_1} we have that $G'$ has bounded $\grad_{(t-1)/2}$. We can then use \Clm{triangles_to_cycles_2} to determine if $G$ contains a triangle. Again, this directly contradicts the \TRICONJ\ and hence not such algorithm can exists.
\end{proof}

\subsection{From Cycle Subgraphs to Homomorphisms}
	
	Now we extend the hardness result from counting subgraphs to counting homomorphisms of cycle graphs. It is given by the following lemma:
	
	\begin{lemma} \label{lem:sub_to_hom_cycles}
	For all $t>1 \in \mathbb{N}$, let $G$ be any input graph with $n$ vertices, $m$ edges and bounded $\grad_{(t-1)/2}(G)$ and let $H = \cC_k$ be the cycle graph on $k$ vertices for $k \in \{3(t+1), 3(t+1)+1, 3(t+1)+2\}$. Assuming  the \TRICONJ, there exists an absolute constant $\gamma > 0$ such that there is no (expected) $o(m^{1+\gamma})$ algorithm for the $\Hom{G}{H}$ problem.
	\end{lemma}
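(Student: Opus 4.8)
The plan is to reduce triangle detection to the $\Hom{\cdot}{\cC_k}$ problem on bounded $\grad_{(t-1)/2}$ inputs, separately for each $k\in\{3(t+1),3(t+1)+1,3(t+1)+2\}$, using \Lem{sub_cycles}, the spasm identity of \Lem{sub}, and the upper bound \Thm{upper}. First I would record that $\Sub{\cdot}{\cC_k}$ has no $o(m^{1+\gamma})$ algorithm on this class for \emph{all three} values of $k$. For $k\in\{3(t+1),3(t+1)+1\}$ this is \Lem{sub_cycles}. For $k=3(t+1)+2$ the same construction $\Gtt$ works: a copy of $\cC_{3(t+1)+2}$ in $\Gtt$ must run through exactly three original vertices (two or fewer cannot close up, and four would force length at least $4(t+1)>3(t+1)+2$ since $t>1$), and its three connecting segments must be subdivided edges whose lengths form the unique size-three multiset over $\{t+1,t+2\}$ summing to $3(t+1)+2$, namely $\{t+1,t+2,t+2\}$; hence $\Sub{\Gtt}{\cC_{3(t+1)+2}}$ is exactly $3$ times the number of triangles of $G$, and $\Gtt$ has bounded $\grad_{(t-1)/2}$ by \Clm{G_bounded_grad_1}, so the \TRICONJ{} applies.

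For $k\in\{3(t+1),3(t+1)+1\}$ I would then invoke \Lem{sub}: $\Sub{G}{\cC_k}=\sum_{F\in\Spasm(\cC_k)}c_F\,\Hom{G}{F}$ with every $c_F\neq 0$. Every $F\in\Spasm(\cC_k)$ with $F\neq\cC_k$ is a proper quotient of $\cC_k$, so $|V_F|\leq k-1\leq 3(t+1)$, and in fact $\LICL(F)<3(t+1)$: this is clear when $|V_F|<3(t+1)$, and when $|V_F|=3(t+1)$ a spanning induced cycle would force $F=\cC_{3(t+1)}$, which is impossible since identifying an independent set in a cycle always creates a vertex of degree at least $3$. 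By \Thm{upper} each such $\Hom{G}{F}$ is computable in $O(n\log n)$ time on bounded $\grad_{(t-1)/2}$ graphs, so an $o(m^{1+\gamma})$ algorithm for $\Hom{\cdot}{\cC_k}$ would yield one for $\Sub{G}{\cC_k}=c_{\cC_k}\Hom{G}{\cC_k}+\sum_{F\neq\cC_k}c_F\Hom{G}{F}$, contradicting the previous paragraph.

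For $k=3(t+1)+2$ this cancellation fails, and handling it is the crux. A short analysis of quotients of a cycle shows that, up to isomorphism, the unique proper quotient $F^\star$ of $\cC_{3(t+1)+2}$ with $\LICL\geq 3(t+1)$ is $\cC_{3(t+1)}$ with one pendant vertex attached (the result of identifying two vertices at distance $2$), and $\LICL(F^\star)=3(t+1)$, so \Thm{upper} does not make $\Hom{\cdot}{F^\star}$ easy. To bypass $F^\star$ I would use the reverse expansion: for every graph $X$, $\Hom{X}{\cC_k}=\sum_{F\in\Spasm(\cC_k)}a_F\,\Sub{X}{F}$ with positive integer coefficients $a_F$ (this is simply splitting homomorphisms according to which vertices coincide). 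Taking $X=\Gtt$: $\Sub{\Gtt}{\cC_k}$ is $3$ times the number of triangles of $G$; $\Sub{\Gtt}{F^\star}$ equals $2\sum_{uvw}(\deg_G u+\deg_G v+\deg_G w-3)$ over triangles $uvw$ of $G$ (a copy of $F^\star$ in $\Gtt$ consists of the $\cC_{3(t+1)}$-cycle associated with a triangle, which uses three short subdivided edges, together with a pendant neighbour of one of its three branch vertices), which is nonnegative and is zero exactly when $G$ is triangle-free; and for every other proper quotient $F$ of $\cC_k$, the quantity $\Sub{\Gtt}{F}=\sum_{F'\in\Spasm(F)}c'_{F'}\Hom{\Gtt}{F'}$ is computable in $O(n\log n)$ time, because $\Spasm(F)\subseteq\Spasm(\cC_k)$, no element of $\Spasm(F)$ is isomorphic to $F^\star$, each such element therefore has $\LICL<3(t+1)$ (using the classification above together with the fact that proper quotients of cycles are never cycles), and $\Gtt$ has bounded $\grad_{(t-1)/2}$, so \Thm{upper} applies to each term. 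Consequently $\Hom{\Gtt}{\cC_k}$ minus this last, $O(n\log n)$-computable part equals $3a_{\cC_k}$ times the number of triangles of $G$ plus a nonnegative multiple of $\Sub{\Gtt}{F^\star}$, a quantity that is nonnegative and strictly positive if and only if $G$ has a triangle. Since $\Gtt$ has $O(tm)$ edges, an $o(m^{1+\gamma})$ algorithm for $\Hom{\cdot}{\cC_k}$ on bounded $\grad_{(t-1)/2}$ inputs would then decide triangle existence in $G$ in $o(m^{1+\gamma})$ time, contradicting the \TRICONJ{}.

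The main obstacle is precisely the quotient $F^\star$ in the case $k=3(t+1)+2$: because $\LICL(F^\star)=3(t+1)$ it is itself a hard pattern, so the clean cancellation of the $k\in\{3(t+1),3(t+1)+1\}$ cases is unavailable. The resolution is that we never need to \emph{compute} $\Sub{\Gtt}{F^\star}$; it is enough that, like the main term, it is nonnegative and vanishes precisely on triangle-free inputs, so the sign of the quantity we can actually compute still decides triangle existence. The remaining routine work is to verify the three facts about quotients of cycles used above: that $F^\star$ is the unique $\LICL$-maximal proper quotient of $\cC_{3(t+1)+2}$, that $\Spasm$ is closed under taking further quotients, and the stated formula for $\Sub{\Gtt}{F^\star}$.
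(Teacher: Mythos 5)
Your treatment of $k\in\{3(t+1),3(t+1)+1\}$ coincides with the paper's. For $k=3(t+1)+2$ you take a genuinely different route: the paper works with $\Gt$, uses $\Sub{\Gt}{\cC_k}=0$, and tries to \emph{solve} the spasm identity for the one remaining hard unknown $\Hom{\Gt}{F^\star}$ (then invokes \Lem{hom_subgraphs}), whereas you expand $\Hom{\Gtt}{\cC_k}$ into subgraph counts with positive coefficients and observe that the hard terms never need to be computed --- it suffices that each is nonnegative and vanishes exactly on triangle-free inputs. Your sign argument is structurally more robust than the paper's, and your extension of \Lem{sub_cycles} to $k=3(t+1)+2$ via $\Gtt$ (the segment multiset $\{t+1,t+2,t+2\}$) is correct and does not appear in the paper.

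There is, however, a concrete false step, and the same omission occurs in the paper's own proof. Your classification of the large-$\LICL$ proper quotients of $\cC_{3(t+1)+2}$ is incomplete, because the assertion that ``proper quotients of cycles are never cycles'' is wrong once two identifications are allowed: on $\cC_k$ with vertices $v_1,\dots,v_k$, identifying $v_1$ with $v_3$ and $v_2$ with $v_4$ (both pairs independent) yields exactly $\cC_{k-2}$. Hence $\cC_{3(t+1)}\in\Spasm(\cC_{3(t+1)+2})$, it has $\LICL=3(t+1)$, and $F^\star$ is \emph{not} the unique $\LICL$-maximal proper quotient. Your claim that every proper quotient $F\neq F^\star$ has all of $\Spasm(F)$ with $\LICL<3(t+1)$ therefore fails for $F=\cC_{3(t+1)}$ (which contains itself), so $\Sub{\Gtt}{\cC_{3(t+1)}}$ cannot be placed in the $O(n\log n)$-computable part via \Thm{upper}; indeed it is a hard quantity by your own first case. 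Fortunately your mechanism absorbs the extra term: every $\cC_{3(t+1)}$ in $\Gtt$ uses three length-$(t+1)$ segments, so $\Sub{\Gtt}{\cC_{3(t+1)}}$ equals the number of triangles of $G$, which is nonnegative and vanishes iff $G$ is triangle-free, exactly as your argument requires. (The paper's version is harder to repair, since one linear equation cannot determine the two unknowns $\Hom{\Gt}{F^\star}$ and $\Hom{\Gt}{\cC_{3(t+1)}}$.) To complete your proof you should redo the classification and verify that $\cC_{3(t+1)+2}$, $F^\star$ and $\cC_{3(t+1)}$ are the only members of $\Spasm(\cC_{3(t+1)+2})$ whose spasms reach $\LICL\geq 3(t+1)$, and that each of the corresponding subgraph counts in $\Gtt$ vanishes on triangle-free inputs.
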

	\begin{proof}
		Fix any $t>1$. We prove each of the cases separately:
		\begin{itemize}
			\item Let $k=3(t+1)$, let $H = \cC_k$ be the cycle with $k$ vertices and $G$ any graph with bounded $\grad_{(t-1)/2}(G)$. Let $H'$ be any graph in the $\Spasm$ of $H$ different than $H$. We have that $\LICL(H') < 3(t+1)-1$, hence by \Thm{upper} we can compute $\Hom{G}{H'}$ in $O(n)$ time. Now, assume that we can compute $\Hom{G}{H}$ in $O(m)$ time, then we could use \Lem{sub} to obtain $\Sub{G}{H}$, but that contradicts \Lem{sub_cycles}, and hence no $O(m)$ algorithm exists.
			
			\item Similarly, let $k=3(t+1)+1$, let $H = \cC_k$ be the cycle with $k$ vertices and $G$ any graph with bounded $\grad_{(t-1)/2}(G)$. Let $H'$ be any graph in the $\Spasm$ of $H$ different than $H$. We have that $\LICL(H') < 3(t+1)$, hence by \Thm{upper} we can compute $\Hom{G}{H'}$ in $O(n)$ time. Now, assume that we can compute $\Hom{G}{H}$ in $O(m)$ time, then we could use \Lem{sub} to obtain $\Sub{G}{H}$, but that contradicts \Lem{sub_cycles}, and hence no $O(m)$ algorithm exists.
			\item Finally, let $k=3(t+1)+2$, let $H = \cC_k$ be the cycle with $k$ vertices and $G$ any graph. Consider the reduced graph $\Gt$, remember that from \Clm{G_bounded_grad_1} we have that $\Gt$ has bounded $\grad_{(t-1)/2}(G)$. $\Gt$ can not contain any cycle of length exactly $k$, as every cycle has a multiple of $(t+1)$ edges, hence $\Sub{\Gt}{H}=0$. Consider the Spasm of $H$, apart from $\cC_k$ itself the only one other pattern in $\Spasm(H)$ with $LICL \geq 3(t+1)$ will be the cycle $\cC_{k-2}$ with a tail, let $H^*$ be such pattern.
			
			For any other pattern $H'\in \Spasm(H)\setminus \{H,H^*\}$ we have that $LICL(H')<3(t+1)$ and hence by \Thm{upper} we can compute $\Hom{G}{H'}$ in $O(n)$ time. Now assume there is a $O(m)$ algorithm that allows us to compute $\Hom{G}{H}$. Then we could use \Lem{sub} to obtain the value of $\Hom{G}{H^*}$ as all the other terms in the equation will be known. However, we just show that counting homomorphisms of the $\cC_{3(t+1)}$ cycle is not possible in linear time, and by \Lem{hom_subgraphs} we will have that there is no algorithm for counting $H^*$ as it is a supergraph of $\cC_{3(t+1)}$.
		\end{itemize}

	\end{proof}
	
	We can now complete the proof of the lower bound:

	\begin{proof}[Proof of Theorem \ref{thm:lower}]
			First, for $t=1$ the theorem is true. As the statement becomes equivalent to show that there is no algorithm for counting cycles of length greater than $6$ in bounded degeneracy graphs(assuming \TRICONJ), this was proved in \cite{Bera2021}.
			
			Hence we just need to prove for $t>1$. Note that suffices to show that there is no $o(m)$ algorithm for computing $\Hom{G}{H}$ for graphs $G$ of bounded $\grad_{(t-1)/2}$ and graphs $H$ with $LICL(H) \in [3(t+1),3(t+2))$.
			
			Fix some $t>1$, and let $H$ bet any graph with $LICL(H) \in [3(t+1),3(t+2))$, note that $H$ must be a supergraph of either $\cC_{3(t+1)},\cC_{3(t+1)+1},$ or $\cC_{3(t+1)+2}$. Now assume that there is an algorithm that computes $\Hom{G}{H}$ in $O(m)$ time for graphs $G$ with bounded $\grad_{(t-1)/2}$, then using \Lem{hom_subgraphs} we have that we can compute $\Hom{G}{H'}$ for $H'\in\{\cC_{3(t+1)},\cC_{3(t+1)+1},\cC_{3(t+1)+2}\}$, but this directly contradicts \Lem{sub_to_hom_cycles}, completing the proof.
	\end{proof}

\section{From Homomorphism to non-induced copies}

\begin{proof}[Proof of Theorem \ref{thm:main-sub}]
	
	We first prove the upper bound: Let $G$ be any input graph with bounded $\grad_{r/2}$ and $H$ a graph with constant size $k$ and $\LICL(\Spasm(H))<3(r+2)$. Consider any graph $H' \in \Spasm(H)$, we have that $\LICL(H')<3(r+2)$, and hence by \Thm{upper} there is an algorithm that computes $\Hom{G}{H'}$ in time $f(\grad_{r/2})O(m)$, for some explicit function $f$. The size of $\Spasm(H)$ only depends on $k$, thus we can compute $\Hom{G}{H'}$ for all the graphs $H' \in \Spasm(H)$ in $f(\grad_{r/2})O(m)$ time. Using \Lem{sub} we have that we can compute $\Sub{G}{H}$ as a linear combination of $\Hom{G}{H'}$ for all the $H'$ in the spasm of $H$, this will take additional constant time giving the upper bound result.
	
	Now we prove the lower bound: Again let $G$ be any input graph with bounded $\grad_{r/2}$ and $H$ a graph with constant size $k$ and $\LICL(\Spasm(H))\geq 3(r+2)$. This means that there exists a graph $H' \in \Spasm(H)$ with $\LICL(H')\geq 3(r+2)$. Assume by contradiction that there is a $O(m)$ algorithm that computes $\Sub{G}{H}$. We can then use \Lem{generalization_4_2} to construct a series of graphs $G_1,\ldots,G_k$ that are also bounded $\grad_{r/2}$. We can then compute $\Sub{G_i}{H}$ for each of the graphs using the $O(m)$ algorithm that we are assuming exists. 
	
	Then, because $\Sub{G_i}{H}$ is a linear combination of $\Hom{G_i}{H'}$ for all the $H'\in \Spasm(H)$ we can apply again \Lem{generalization_4_2} to compute $\Hom{G}{H'}$ for all the $H'\in \Spasm(H)$ in additional constant time. However, recall that there is a $H' \in \Spasm(H)$ for which $\LICL(H') \geq 3(r+2)$. By \Thm{lower} we have that there is no $O(m)$ algorithm to compute $\Hom{G}{H'}$. Hence, we reach a contradiction.
\end{proof}

\bibliographystyle{alpha}
\bibliography{hom_subgraph_counting}

\end{document}